\DeclareMathAlphabet{\pazocal}{OMS}{zplm}{m}{n}
\newcommand{\mb}{\mathbb}
\let\bbordermatrix\bordermatrix
\patchcmd{\bbordermatrix}{8.75}{4.75}{}{}
\patchcmd{\bbordermatrix}{\left(}{\left[}{}{}
\patchcmd{\bbordermatrix}{\right)}{\right]}{}{}
\newcommand{\sr}{\stackrel}
\newcommand{\rar}{\rightarrow}
\newcommand{\tri}{\sr{\triangle}{=}}
\newcommand{\be}{\begin{equation}}
\newcommand{\ee}{\end{equation}}
\newcommand{\bea}{\begin{eqnarray}}
\newcommand{\eea}{\end{eqnarray}}
\newcommand{\bes}{\begin{eqnarray*}}
\newcommand{\ees}{\end{eqnarray*}}
\newcommand{\bce}{\begin{center}}
\newcommand{\ece}{\end{center}}
\newcommand{\beae}{\begin{IEEEeqnarray}{rCl}}
\newcommand{\eeae}{\end{IEEEeqnarray}}
\def\VR{\kern-\arraycolsep\strut\vrule &\kern-\arraycolsep}
\def\vr{\kern-\arraycolsep & \kern-\arraycolsep}
\newcommand{\ben}{\begin{enumerate}}
\newcommand{\een}{\end{enumerate}}
\newcommand{\hso}{\hspace{.1in}}
\newcommand{\hst}{\hspace{.2in}}
\newcommand{\noi}{\noindent}
\newtheorem{theorem}{Theorem}[section]
\newtheorem{remark}{Remark}[section]
\newtheorem{corollary}{Corollary}[section]
\newtheorem{definition}{Definition}[section]
\newtheorem{lemma}{Lemma}[section]
\begin{document}
%\baselineskip=16pt
%\sloppy

%% Paper Title
%% You can use linebreaks \\ within to get better formatting as
%% desired.
%\title{Information Structures of Capacity Achieving Distributions  for  Feedback Channels with Memory and Transmission Cost: Stochastic Optimal Control \& Variational Equalities-Part I}

\title{Information Structures for Feedback Capacity of Channels with Memory and Transmission Cost: Stochastic Optimal Control \& Variational Equalities-Part I}

%% Author names and affiliations:
%%
%% Avoiding spaces at the end of the author lines is not a problem with
%% conference papers because we don't use \thanks or \IEEEmembership.
%%
%% For several authors with only one affiliation:
%%
% \author{
%   \IEEEauthorblockN{Hui-Ting Chang and Stefan M.~Moser}
%   \IEEEauthorblockA{Department of Electrical and Computer Engineering\\
%     National Chiao Tung University (NCTU)\\
%     Hsinchu, Taiwan\\
%     Email: \{email-of-hui-ting,email-of-stefan\}@ieee.org}
% }
%%
%% For up to three affiliations:
%%

\author{Christos~K.~Kourtellaris  and Charalambos~D.~Charalambous
%\thanks{Manuscript received October 9,~2012;~revised....}
\thanks{The authors are with the Department of Electrical and Computer Engineering, University of Cyprus, 75 Kallipoleos Avenue, P.O. Box 20537, Nicosia, 1678, Cyprus, e-mail: $\{kourtellaris.christos,chadcha@ucy.ac.cy\}$.
This work was financially supported by a medium size University of Cyprus grant entitled ``DIMITRIS", and was presented in part at the 2016 IEEE International Symposium on Information Theory.}}

%%
%% For over three affiliations, or if they all won't fit within the width
%% of the page, use this alternative format:
%%
% \author{
%   \IEEEauthorblockN{
%     Michael Shell\IEEEauthorrefmark{1},
%     Homer Simpson\IEEEauthorrefmark{2},
%     James Kirk\IEEEauthorrefmark{3},
%     Montgomery Scott\IEEEauthorrefmark{3} and
%     Eldon Tyrell\IEEEauthorrefmark{4}}
%   \IEEEauthorblockA{
%     \IEEEauthorrefmark{1}School of Electrical and Computer Engineering\\
%     Georgia Institute of Technology, Atlanta, Georgia 30332--0250\\
%     Email: see http://www.michaelshell.org/contact.html}
%   \IEEEauthorblockA{
%     \IEEEauthorrefmark{2}Twentieth Century Fox, Springfield, USA\\
%     Email: homer@thesimpsons.com}
%   \IEEEauthorblockA{
%     \IEEEauthorrefmark{3}Starfleet Academy, San Francisco, California 96678-2391\\
%     Telephone: (800) 555--1212, Fax: (888) 555--1212}
%   \IEEEauthorblockA{
%     \IEEEauthorrefmark{4}Tyrell Inc., 123 Replicant Street, Los Angeles, California 90210--4321}
% }

\maketitle

\begin{abstract}
The Finite Transmission Feedback Information (FTFI) capacity is characterized for any class of  channel conditional distributions $\big\{{\bf P}_{B_i|B^{i-1}, A_i} :i=0, 1, \ldots, n\big\}$ and $\big\{  {\bf P}_{B_i|B_{i-M}^{i-1}, A_i} :i=0, 1, \ldots, n\big\}$, where $M$  is the memory of the channel, $B^n \tri \{B_j: j=\ldots, 0,1, \ldots, n\}$ are the channel outputs and  $A^n\tri \{A_j: j=\ldots, 0,1, \ldots, n\}$ are the channel inputs. The characterizations of FTFI capacity, are obtained by first identifying the information structures  of the optimal channel input  conditional distributions ${\cal  P}_{[0,n]} \tri \big\{  {\bf P}_{A_i|A^{i-1}, B^{i-1}}:  i=0, \ldots, n\big\}$,  which maximize directed information 
\begin{align}
C_{A^n \rar B^n}^{FB} \tri \sup_{   {\cal P}_{[0,n]} }    I(A^n\rar B^n),  \hst I(A^n \rar B^n) \tri \sum_{i=0}^n I(A^i;B_i|B^{i-1}) . \nonumber %  \label{main_p}
\end{align}
The main theorem states,  for any channel with memory $M$, the optimal channel input conditional distributions occur in the subset satisfying conditional independence $\sr{\circ}{\cal P}_{[0,n]}\tri \big\{ {\bf P}_{A_i|A^{i-1}, B^{i-1}}= {\bf P}_{A_i|B_{i-M}^{i-1}}:  i=1, \ldots, n\big\}$, and the characterization of FTFI capacity is given by 
\begin{align}
C_{A^n \rar B^n}^{FB, M} \tri \sup_{   \sr{\circ}{\cal P}_{[0,n]} }  \sum_{i=0}^n  I(A_i; B_i|B_{i-M}^{i-1}) .  \nonumber % 
\end{align}
 Similar conclusions are derived for problems with transmission cost constraints  of the form 
 $\frac{1}{n+1} {\bf E}\Big\{\sum_{i=0}^n \gamma_i(A_i, T^iB^{n-1})\Big\} \leq \kappa,$  $ \kappa >0$, 
where  $\big\{\gamma_i(A_i, T^i B^{n-1}):i=0,1, \ldots, n\big\}$ is any class of multi-letter functions such that   $T^iB^{n-1} = \{B_{i-1}, B_{i-2}, \ldots,  B_{i-K}\}$ or $T^iB^{n-1} = \{B^{i-1} \}$, for $i=0, \ldots, n$ and $K$ a nonnegative integer.

The methodology utilizes stochastic optimal control theory, to identify the control process, the controlled process, and a variational equality of directed information, to derive upper bounds on $I(A^n \rar B^n)$, which are achievable over specific subsets of channel input conditional distributions ${\cal P}_{[0,n]}$, which are characterized by conditional independence. For channels with limited memory, this implies the transition probabilities of the channel output process are also  of limited memory.

\begin{comment}
The methodology   is based on a two-step procedure, and utilizes stochastic optimal control and a variational equality of directed information, as follows. In Step 1, the control and  controlled processes, are identified, and   stochastic optimal control techniques are applied to show  a certain process is an extended Markov process, which then implies  the supremum  over all channel input conditional distributions ${\cal P}_{[0,n]}$, of directed information $I(A^n \rar B^n)$,  occurs in a  subset $\overline{\cal P}_{[0,n]} \subset {\cal P}_{[0,n]}$. In Step 2, invokes a variational equality of directed information and the analogy to stochastic optimal control,  to identify  tight upper bounds on $I(A^n\rar B^n)$, which are achievable over specific subsets   $\sr{\circ}{ {\cal P}}_{[0,n]} \subset \overline{\cal P}_{[0,n]}$. For certain channel distributions step 1 might be sufficient, while for others this is an intermediate step prior to applying step 2. However, step 2 can be applied directly. 
\end{align}
\end{comment}

For any of the above classes of channel distributions and transmission cost functions, a direct analogy, in terms of conditional independence,  of the characterizations of FTFI capacity and Shannon's  capacity formulae of Memoryless Channels is identified.    
\end{abstract}

\section{Introduction}
\label{intro}
Feedback  capacity of channel conditional distributions, $\big\{ {\bf P}_{B_i|B^{i-1}, A^{i}}:i=0, 1, \ldots, n\big\}$, where $a^n \tri \{\ldots, a_{-1}, a_0, a_1, \ldots, a_n\} \in {\mathbb A}^n$,  $b^n \tri \{\ldots, b_{i-1}, b_0, b_1, \ldots, b_n\} \in {\mathbb B}^n$, are the channel input and output sequences, respectively,  is often defined by  maximizing  directed information \cite{marko1973,massey1990},  $I(A^n\rar  B^n)$, from channel input sequences 
$a^n  \in {\mathbb A}^n$ to channel  output sequences  $b^n \in {\mathbb B}^n$,  over an admissible set of channel input conditional distributions,    ${\cal P}_{[0,n]} \tri \big\{ {\bf P}_{A_i|A^{i-1}, B^{i-1}}:i=0, 1, \ldots, n\big\}$ (with feedback),      as follows. 
\begin{align}
C_{A^\infty\rar  B^\infty}^{FB} \tri & \liminf_{n \longrightarrow \infty} \frac{1}{n+1} C_{A^n \rar  B^n}^{FB}, \hst C_{A^n \rar B^n}^{FB} \tri   \sup_{    {\cal P}_{[0,n]} }I(A^n\rar B^n), \label{cap_fb_1}\\
 I(A^n\rar B^n) \tri & \sum_{i=0}^n I(A^i;B_i|B^{i-1})= \sum_{i=0}^n {\bf E}_\mu \Big\{ \log \Big( \frac{ d{\bf P}_{B_i|B^{i-1}, A^i}(\cdot|B^{i-1}, A^i)}{d{\bf P}_{B_i|B^{i-1}}(\cdot|B^{i-1})}(B_i)\Big)\Big\} \label{intro_fbc1a}
  \end{align}
 where for each $i$,   ${\bf P}_{B_i|B^{i-1}}$, is the conditional probability distribution of the channel output process, and ${\bf E}_\mu\{\cdot\}$ denotes expectation with respect to the joint distribution  ${\bf P}_{A^i, B^i}$, for $i=0, \ldots, n$, for a fixed initial distribution\footnote{The subscript notation on probability distributions and expectation, i.e., ${\bf P}_\mu$ and ${\bf E}_\mu\{\cdot\}$ is often omitted because it is clear from the context.} ${\bf P}_{A^{-1}, B^{-1}}\equiv \mu(da^{-1}, db^{-1})$ of the initial data $\{(A^{-1}, B^{-1})=(a^{-1}, b^{-1})\}$.  In (\ref{cap_fb_1}),  $\liminf$ can be replaced by $\lim$ if it  exists and it is finite.   For finite alphabet spaces sufficient conditions are identified in \cite{permuter-weissman-goldsmith2009}. Moreover, for finite alphabet spaces  $\sup$ can be replaced by maximum, since probability mass functions can be viewed as closed and bounded subsets of finite-dimensional spaces. However, for countable or abstract alphabet spaces it is more difficult, and often requires an analysis using the topology of weak convergence of probability distributions, because information theoretic measures are not necessarily continuous with respect to pointwise convergence of probability distributions, and showing compactness of the set of distributions is quite involved.   \\
It is shown in  \cite{kramer2003,tatikonda-mitter2009,permuter-weissman-goldsmith2009}, using tools from \cite{dobrushin1959,pinsker1964,gallager1968,blahut1987,cover-thomas2006,ihara1993,verdu-han1994,han2003},  under appropriate conditions which include abstract alphabets,  that the quantity $C_{A^\infty \rar B^\infty}^{FB}$ is the supremum of all achievable rates of a sequence of feedback codes  $\{(n, { M}_n, \epsilon_n):n=0, 1, \dots\}$, defined as follows.   \\
(a)  A set of uniformly distributed source messages ${\cal M}_n \tri \{ 1,  \ldots, M_n\}$ and a set of encoding strategies,  mapping source messages  into channel inputs of block length $(n+1)$, defined by
\begin{align}
{\cal E}_{[0,n]}^{FB} \triangleq & \Big\{g_i: {\cal M}_n \times {\mb A}^{i-1} \times {\mb B}^{i-1}  \longmapsto {\mb A}_i: i=0, \ldots, n:  \;  a_0=g_0(w), a_1=g_1(w,a_0,b_0),\ldots, a_n=g_n(w, a^{n-1}, b^{n-1}), \hso  w\in {\cal M}_n  \Big\}, \; n=0, 1, \ldots. \label{block-code-nf-non}
\end{align}
The codeword for any $w \in {\cal M}_n$  is $u_w\in{\mb A}^n$, $u_w=(g_0(w), g_1(w, a_0, b_0)
,\dots,g_n(w, a^{n-1}, b^{n-1}))$, and ${\cal C}_n=( u_1,u_2,\dots,u_{{M}_n})$ is  the code for the message set ${\cal M}_n$, and $\{A^{-1}, B^{-1}\}=\{\emptyset\}$.  In general, the code  may depend on the initial data, depending on the convention, i.e.,  $(A^{-1}, B^{-1})=(a^{-1}, b^{-1})$, which are often known to the encoder and decoder. 
%(unless  it can be shown that in  the limit, as $n \longrightarrow \infty$, the induced channel output process  has a unique invariant distribution).  
\\
(b)  Decoder measurable mappings $d_{0,n}:{\mb B}^n\longmapsto {\cal M}_n$,  such that the average
probability of decoding error satisfies\footnote{The superscript on expectation, i.e., ${\bf P}^g$ indicates the dependence of the distribution on the encoding strategies.} 
\begin{align}
{\bf P}_e^{(n)} \triangleq \frac{1}{M_n} \sum_{w \in {\cal M}_n} {\bf  P}^g \Big\{d_{0,n}(B^{n}) \neq w |  W=w\Big\}\equiv {\bf  P}^g\Big\{d_{0,n}(B^n) \neq W \Big\} \leq \epsilon_n, \hst w \in {\cal M}_n \nonumber
\end{align}
and the decoder may also assume knowledge of the initial data.\\
The coding rate or transmission rate is defined by  $r_n\triangleq \frac{1}{n+1} \log M_n$.
A rate $R$ is said to be an achievable rate, if there exists  a  code sequence satisfying
$\lim_{n\longrightarrow\infty} {\epsilon}_n=0$ and $\liminf_{n \longrightarrow\infty}\frac{1}{n+1}\log{{M}_n}\geq R$. The feedback capacity is supremum of all achievable rates, i.e., defined by $C\triangleq \sup \{R: R \: \mbox{is achievable}\}$.

The underlying  assumption for $C_{A^\infty\rar  B^\infty}^{FB}$ to correspond to feedback capacity is that the source process $\big\{X_i: i=0, \ldots, \big\}$   to be encoded and transmitted over the channel has finite entropy  rate, and  satisfies  the following conditional independence \cite{massey1990}. 
\begin{align}
{\bf P}_{B_i|B^{i-1}, A^i, X^k}={\bf P}_{B_i|B^{i-1}, A^i} \hso   \forall k \in \{0,1, \ldots, n\},\hso i=0, \ldots, n \label{CI_Massey_N} 
\end{align}

Coding theorems for  channels with memory with and without feedback are developed extensively over the years, in an anthology of papers, such as,   \cite{dobrushin1959,pinsker1964,gallager1968,blahut1987,cover-thomas2006,ihara1993,verdu-han1994,kramer1998,han2003,kramer2003,tatikonda-mitter2009,permuter-weissman-goldsmith2009,gamal-kim2011}, in three direction.  For  jointly stationary ergodic  processes, for information stable processes, and for arbitrary nonstationary and nonergodic processes. Since many of the coding theorems presented in the above references are either directly applicable or applicable subject to the assumptions imposed in these references,  the main emphasis of the current investigation is on the characterizations of FTFI capacity, for different channels with transmission cost.

Similarly, feedback capacity with transmission cost is defined by 
\begin{align}
C_{A^\infty\rar  B^\infty}^{FB}(\kappa) \tri & \liminf_{n \longrightarrow \infty} \frac{1}{n+1} C_{A^n \rar  B^n}^{FB}(\kappa), \hst C_{A^n \rar B^n}^{FB}(\kappa) \tri   \sup_{    {\cal P}_{[0,n]}(\kappa) }I(A^n\rar B^n) \label{cap_fb_1_TC}\\
{\cal P}_{[0,n]}(\kappa)\tri & \Big\{ {\bf P}_{A_i|A^{i-1}, B^{i-1}},  i=1, \ldots, n:  \frac{1}{n+1} {\bf E}\Big(\sum_{i=0}^n \gamma_i(T^iA^n, T^iB^{n-1})\Big) \leq \kappa\Big\} \label{cap_fb_3}
\end{align}
where for      each  $i$, $T^ia^n \subseteq \{\ldots, a_{-1}, a_0, a_1, \ldots, a_i\},  T^ib^{n-1} \subseteq \{\ldots, b_{-1}, b_0, b_1, \ldots, b_{i-1}\}$, for $i=0, \ldots, n$, and these are either fixed or nondecreasing with $i$, for        $i=0,1, \ldots, n$. 

The hardness of such extremum problems of capacity, and in general, of other similar problems of information theory, is attributed to the form of the directed information density or sample  pay-off functional, defined by 
\bea
\iota_{A^n \rar B^n}(a^n,b^n)\tri \sum_{i=0}^n \log \Big( \frac{ d{\bf P}_{B_i|B^{i-1}, A^i}(\cdot|b^{i-1}, a^i)}{d{\bf P}_{B_i|B^{i-1}}(\cdot|b^{i-1})}(b_i)\Big) \label{DI_Den_1}
\eea
which in not fixed. Rather, the pay-off $\iota_{A^n \rar B^n}(a^n,b^n)$  depends on  the channel output conditional probabilities $\big\{{\bf P}_{B_i|B^{i-1}}(db_i|b^{i-1}): i=0, \ldots, n\big\}$, which in turn  depends on the the channel input conditional distributions $\big\{{\bf P}_{A_i|A^{i-1}, B^{i-1}}(da_i|a^{i-1}, b^{i-1}): i=0, \ldots, n\big\} \in {\cal P}_{[0,n]}(\kappa)$, chosen to maximize the expectation ${\bf E} \big\{\iota_{A^n \rar B^n}(A^n,B^n)\big\}$. This means, given a specific channel conditional distribution and a transmission cost function, the information structure of the  channel input conditional distribution denoted by $ {\cal I}_i^{P}\subseteq \{a^{i-1}, b^{i-1}\}, i=0, \ldots, n$, which maximizes directed information (i.e., the dependence of the optimal channel input conditional distribution on past information),  needs to be identified, and then used to obtain the characterizations of the Finite Transmission Feedback Information (FTFI) capacity, $C_{A^n\rar  B^n}^{FB}(\kappa)$,  and feedback    capacity  $C_{A^\infty\rar  B^\infty}^{FB}(\kappa)$.

For memoryless stationary channels (such as, Discrete Memoryless Channels (DMCs)),  described by ${\bf P}_{B_i|B^{i-1}, A^i}={\bf P}_{B_i|A_i}\equiv {\bf P}_{B|A}, i=0,1, \ldots, n $,  without feedback  (with transmission cost constraints if necessary), Shannon \cite{shannon1948}    characterized channel capacity by the well-known two-letter  formulae
\begin{align}
C \tri \max_{ {\bf P}_A } I(A; B) =\max_{ {\bf P}_A } {\bf E} \Big\{ \log \Big( \frac{ d{\bf P}_{A , B}(\cdot, \cdot)}{ d({\bf P}_{A}(\cdot)\times {\bf P}_{B}(\cdot))}(A, B)\Big)\Big\}  \label{DMC_Shannon_1}
\end{align}
where ${\bf P}_{A , B}(da,db)={\bf P}_{ B|A}(db|a)\otimes {\bf P}_A(da)$ is the joint distribution,  ${\bf P}_{ B|A}(db|a)$ is the channel conditional distribution, ${\bf P}_{A}(da)$ is the channel input distribution, ${\bf P}_{B}(db) =\int {\bf P}_{ B|A}(db|a)\otimes {\bf P}_A(da)$ is the channel output distribution, and ${\bf E}\big\{\cdot \big\}$ denotes expectation with respect to ${\bf P}_{A , B}$. \\
This characterization is often obtained by identifying the  information structures of optimal channel input distributions, via  the upper bound  
\bea
C_{A^n ; B^n}\tri \max_{{\bf P}_{A^n}} I(A^n;B^n) \leq \; \max_{{\bf P}_{A_i}, i=0, \ldots, n} \sum_{i=0}^n I(A_i;B_i) \leq (n+1) C \label{cap_nf_c}
\eea
since this bound is achievable, when the  channel input distribution  satisfies conditional independence $
{\bf P}_{A_i|A^{i-1}}(da_i|a^{i-1})={\bf P}_{A_i}(da_i),         i=0, 1, \ldots, n$, and moreover $C$ is obtained, when  $\{A_i:i =0, 1, \ldots, \}$ is identically distributed, which then implies  the  joint process $\{(A_i, B_i): i=0,1, \ldots, \}$ is independent and identically distributed, and $I(A^n;B^n)=(n+1)I(A; B)$. \\
For memoryless stationary channels  with feedback, the characterization of feedback capacity, denoted by $C^{FB}$, is shown by Shannon and subsequently Dobrushin\cite{dobrushin1958} to correspond to the  capacity without feedback, i.e., $C^{FB}=C$. This fundamental formulae  is often shown  by first  applying  the converse to the coding theorem,  to show that feedback does not increase capacity (see \cite{cover-pombra1989} for discussion on this subtle issue), which then implies
%provided it is  shown 
%which then implies
\begin{align}
 {\bf P}_{A_i|A^{i-1}, B^{i-1}}(da_i|a^{i-1}, b^{i-1})={\bf P}_{A_i}(da_i),        \hso  i=0, 1, \ldots, n \label{CI_DMC}
\end{align} 
   and $C^{FB}=C$ is obtained if  $\{A_i:i =0, 1, \ldots, \}$ is identically distributed. That is, since feedback does not increase capacity, then mutual information and directed information are identical, in view of (\ref{CI_DMC}). However, as pointed out elegantly by Massey \cite{massey1990}, for channels with feedback it will be a mistake to use the arguments in (\ref{cap_nf_c}) to derive $C^{FB}$. 
    The  conditional independence condition (\ref{CI_DMC}) implies that the {\it Information Structure} of the maximizing channel input distributions is  the {\it Null Set}. \\  

\noi The methodology developed in this paper, establishes a direct analogy between the conditional independence properties (\ref{CI_DMC}) of capacity achieving channel input distributions of memoryless channels and corresponding properties for channels with memory and feedback.  
To this date,  no such systematic methodology is developed in the literature, to determine the information structure of  optimal channel input distributions, which maximize directed information $I(A^n\rar B^n)$, via achievable upper bounds over subsets of channel input conditional distributions  $\overline{\cal P}_{[0,n]} \subseteq  {\cal P}_{[0,n]}(\kappa)$,  which satisfy conditional independence,  and to  characterize the corresponding FTFI capacity and feedback capacity.
%----------------------------------------------------------
%----------- B E G I N    F I G U R E ---------------------
%----------------------------------------------------------
\begin{figure}
  \centering
    \includegraphics[width=0.75\textwidth]{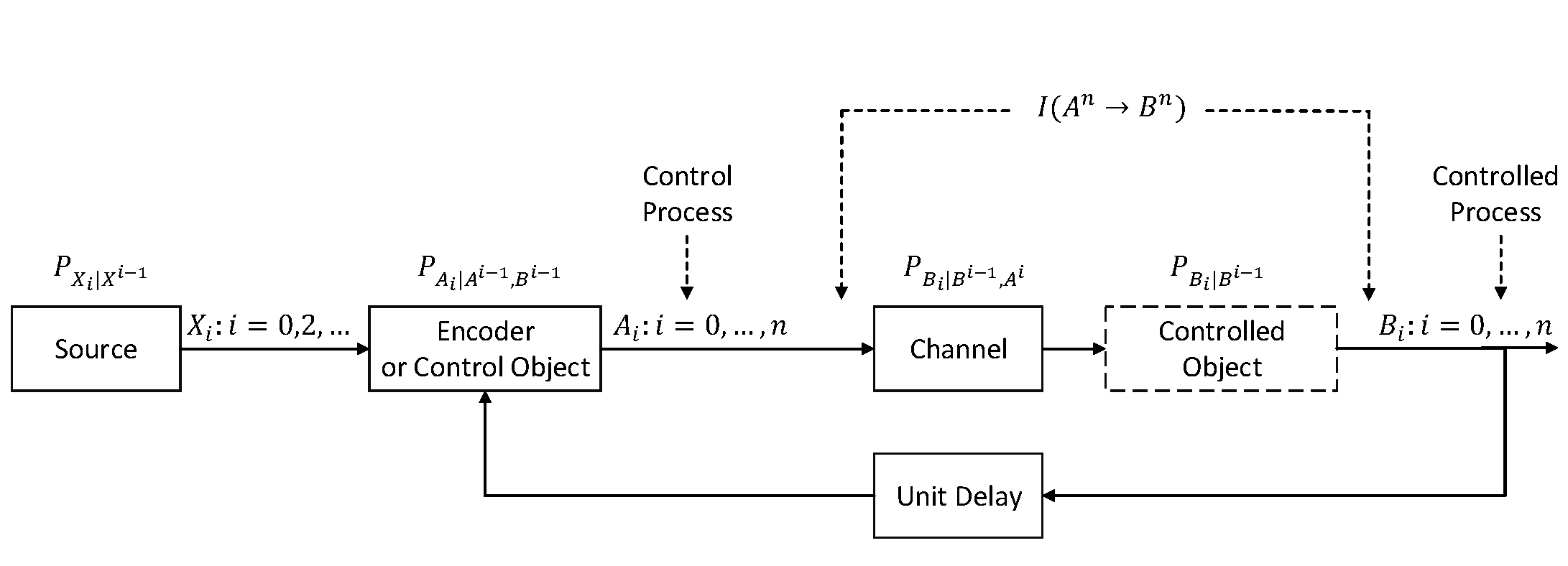}
    \label{figurefig3}
      \caption{Communication block diagram and its analogy to stochastic optimal control.}
\end{figure}
%----------------------------------------------------------
%-----------     E N D    F I G U R E ---------------------
%----------------------------------------------------------

In this first part,  of a two-part investigation, the main objective is to  
\begin{description}
\item develop a methodology to identify  information structures of optimal channel input conditional distributions,  for channels with memory, with and without transmission cost, of extremum problems defined by (\ref{cap_fb_1}) and (\ref{cap_fb_1_TC}), and to characterize the corresponding FTFI capacity and feedback capacity. 
\end{description}
\begin{comment}
This is addressed by first identifying  the information structures of  optimal channel input distributions $\big\{ {\bf P}_{A_i|A^{i-1}, B^{i-1}}:i=0, 1, \ldots, n\big\} \in {\cal P}_{[0,n]}(\kappa)$, for the   FTFI capacity optimization problem $C_{A^n\rar  B^n}^{FB}(\kappa)$, their implications in characterizing FTFI  capacity, $C_{A^n\rar  B^n}^{FB}(\kappa)$, and  in constructing encoders which achieve it, while items i) and ii) are addresses by analyzing the corresponding per unit  time limiting expressions. 
\end{comment}
This is addressed  by  utilizing connections between stochastic optimal control and information theoretic concepts, as follows. \\
The theory of stochastic optimal control is linked to the identification of information structures of optimal channel input conditional distributions and to the characterization of FTFI capacity, by first establishing the  the following analogy.
\begin{description}
\item The information measure $I(A^n \rar B^n)$ is the pay-off; 

\item the channel output process $\{B_i: i=0, 1, \ldots, n\}$ is the controlled process;

\item the channel input process $\{A_i: i=0,1, \ldots, n\}$ is the control process.
\end{description}
Indeed, as depicted in Fig.\ref{figurefig3}, the channel output process $\{B_i: i=0, 1, \ldots, n\}$   is controlled, by controlling its conditional probability distribution  $\big\{ {\bf P}_{B_i|B^{i-1}}(db_i|b^{i-1}): i=0, \ldots, n\big\}$ via the choice of the   transition probability distribution $\big\{ {\bf P}_{A_i|A^{i-1}, B^{i-1}}(da_i| a^{i-1}, b^{i-1}): i=0, \ldots, n\big\}\in {\cal P}_{[0,n]}$ called the control object.\\
As in any stochastic optimal control problem, given a channel distribution, the distribution of the initial data, and a transmission cost function, the main objective is to determine the controlled object conditional distribution, the control object conditional distribution, and  the functional dependence of the pay-off on these  objects. 

However,  unlike classical stochastic optimal control theory, the directed information density pay-off (i.e., (\ref{DI_Den_1})), depends nonlinearly on the channel output conditional  distributions $\big\{ {\bf P}_{B_i|B^{i-1}}(db_i|b^{i-1}): i=0, \ldots, n\big\}$, induced by the control objects,   a variational equality   is linked to tight upper bounds on directed information $I(A^n\rar B^n)$,  which together with the stochastic optimal control analogy,  are shown to be  achievable over specific subsets of the control objects. These achievable bounds  depend   on the structural properties of the channel conditional distributions and the transmission cost functions.

\begin{comment}
\begin{description}
\item[ iii)]  Develop a general methodology to  identify the information structures of channel input distributions which maximize directed information $I(A^n \rar B^n)$, to  characterize the corresponding FTFI capacity, $C_{A^n \rar B^n}^{FB}(\kappa)$, and to realize the optimal channel input conditional distributions by randomized strategies, and information losses functions, driven by uniform Random Variables on $[0,1]$; 
%
%\item[\bf iv)] Investigate the per unit time limit of the Identify necessary and sufficient conditions for FTFI Capacity and ``Finite Transmission without Feedback Information (FTwFI) Capacity'' to coincide,  for channels with memory,   and identify a proceedure to  find the optimal channel input distribution which characterizes the corresponding FTwFI Capacity. 

\item[ iv)] Develop a  methodology to construct optimal encoders,  which achieve FTFI Capacity, $C_{A^n \rar B^n}^{FB}(\kappa)$;

\item[ v)] Determine the information structures of channel input distributions which achieve feedback capacity, characterize feedback capacity $C_{A^\infty \rar B^\infty}^{FB}(\kappa)$, and construct encoders which achieve feedback capacity, by investigating the per unit time limiting versions of items  iii), iv).
\end{description}
\end{comment}
 The methodology   is based on a two-step procedure, as follows. Given a class of channel conditional distributions, the distribution of the initial data, and a class of transmission cost functions, any candidate of   the optimal channel input conditional distribution or control object,  which maximizes $I(A^n \rar B^n)$  is shown to satisfy the following conditional independence.
\begin{align}
& {\bf P}_{A_i|A^{i-1}, B^{i-1}}(da_i|a^{i-1}, b^{i-1})={\bf P}(da_i| {\cal  I}_i^{{\bf P}^*}) \equiv     {\mb P}\Big\{A_i \in da_i| {\cal  I}_i^{{\bf P}^*}\Big\},  \hso  {\cal  I}_i^{{\bf P}^*} \subseteq \big\{a^{i-1}, b^{i-1}\big\}, \hso i=0,1, \ldots, n, \label{CI_COND} \\
&{\cal  I}_i^{{\bf P}^*} \tri \:  \mbox{Information Structure of optimal channel input  distributions which maximizes $I(A^n \rar B^n)$ for $i=0,1,\ldots,n$}.
\end{align}
Moreover,  the information structure ${\cal G}_i^{{\bf P}^*}, i=0,1, \ldots, n$,  is specified by the memory of the channel conditional distribution, and the dependence of the transmission cost function on the channel input and output symbols. Consequently, the dependence of the joint distribution of $\{(A_i, B_i): i=0, \ldots, n\}$, the conditional distribution of the channel output process $\{B_i: i=0, \ldots, n\}$, i.e, $\big\{ {\bf P}_{B_i|B^{i-1}}(db_i|b^{i-1}): i=0, \ldots, n\big\}$ and the directed information density $\iota_{A^n \rar B^n}(A^n,B^n)$, on the control object, is determined,  and the  characterization of FTFI capacity is obtained. \\
The characterization of feedback capacity is obtained from  the per unit time limiting version of the characterization of the FTFI capacity. 

These structural properties of channel input distribution, which  maximize directed information settle various open problems in Shannon's information theory, which include the role of feedback signals to control, via the control process (channel input), the controlled process (channel output process), and the design of encoders which achieve the characterizations of FTFI capacity and capacity. 

Indeed, in the second part of this two-part  investigation \cite{charalambous-kourtellarisIT2015_Part_2}, and based on these structural properties, a  methodology is developed  to realize  optimal channel input conditional distributions,  by information lossless randomized strategies (driven by uniform Random Variables on $[0,1]$, which can generate any distribution), and to construct encoders, which achieve the characterizations of FTFI capacity and feedback capacity. Applications of the results of this first part,  to various channel models, which include  Multiple-Input Multiple Output (MIMO) Gaussian Channel Models with memory,    are found in the second part of this investigation. In this part, we give an illustrative simple example to clarify the importance of information structures of optimal channel input distributions, in reduction of computation complexity, and to indicate the analogy to Shannon's  two-letter capacity formulae of DMCs.

\begin{comment}
These new insight  opens new directions for future research, before the role of feedback is fully understood, and the properties of optimal channel input process $\{A_i: i=0, \ldots, n\}$  to affect the channel output process $\{B_i: i=0, \ldots, n\}$, can be fully understood, for problems of optimal transmission of information and telecommunication system design.

\end{comment}

\subsection{Literature Review of Feedback Capacity of Channels with Memory}
Although, in this paper we do not treat channels with memory dependence on past channel input symbols, for completeness we review such literature, and we discuss possibly extensions of our methodology to such channels at the end of the paper.

\noi  Cover and Pombra \cite{cover-pombra1989}  (see also Ihara \cite{ihara1993}) characterized the feedback capacity of  non-stationary  Additive Gaussian Noise (AGN) channels  with memory, defined by 
\begin{align}
 B_i=A_i+Z_i, \hst i=0,1, \ldots, n, \hst \frac{1}{n+1} \sum_{i=0}^{n} {\bf E} \big\{ |A_i|^2\big\} \leq \kappa, \hso \kappa \in [0,\infty) \label{c-p1989}    
\end{align}
where $\{Z_i : i=0,1, \ldots, n\}$ is a real-valued (scalar) jointly non-stationary Gaussian process, denoted by  $N(\mu_{Z^n}, K_{Z^n})$, and ``$A^n$ is causally related to $Z^n$'' defined by\footnote{\cite{cover-pombra1989}, page 39, above Lemma~5.}   ${\bf P}_{A^n, Z^n}(da^n, dz^n)= \otimes_{i=0}^n  {\bf P}_{A_i|A^{i-1}, Z^{i-1}}(da_i|a^{i-1}, z^{i-1}) \otimes   {\bf P}_{Z^n}(dz^n)$.
The authors in  \cite{cover-pombra1989}  characterized the capacity of this non-stationary AGN channel,
 by first characterizing the FTFI   capacity   formulae \footnote{The methodology in \cite{cover-pombra1989} utilizes the converse coding theorem to obtain an upper bound on the entropy $H(B^n)$, by restricting  $\{A_i: i=0, \ldots, n\}$ to a Gaussian process.}  via the expression
\begin{align}
 C_{0,n}^{FB, CP}(\kappa) \tri \max_{  \Big\{ (\Gamma, K_{V^n}): \frac{1}{n+1} {\bf E} \big\{tr\big(A^n (A^n)^T\big)\big\}\leq \kappa, \hso A^n= \Gamma Z^n + V^n \Big\} }  H(B^n) - H(Z^n ) \label{cp1989}
  \end{align}
 where $V^n \tri \{V_i: i=0, 1, \ldots,n\}$ is a Gaussian process $N( { 0}, K_{{V}^n})$,  orthogonal to $Z^{n}\tri \{Z_i: i=0, \ldots, n\}$, and  $\Gamma$ is lower diagonal time-varying matrix with  deterministic entries.  The feedback capacity is given by \cite{cover-pombra1989} $C^{FB, CP}(\kappa) \tri \lim_{n \longrightarrow \infty} \frac{1}{n+1} C_{0,n}^{FB, CP}(\kappa)$.\\ 
\begin{comment}
Alajaji \cite{alajaji1995} investigated the finite alphabet  version of \cite{cover-pombra1989}, defined by $B_i=A_i \oplus Z_i, i=0, \ldots,n,$ with alphabet spaces spaces ${\mathbb A}_i={\mathbb Z}_i=\big\{0,1,\ldots, q\big\}, i=0, \ldots, n$, without transmission cost, and under the assumption ``$A^n$ is causally related to $Z^n$'', where $\big\{Z_i: i=0, \ldots, n\big\}$ is non-stationary and non-ergodic, and $\oplus$ denotes modulo $q$ addition. The author in \cite{alajaji1995}  showed that  feedback does not increase the generalized capacity of such nonstationary non-ergodic channels \cite{verdu-han1994}, and that the maximizing channel input distribution is uniform.
\end{comment}
Kim \cite{kim2010} revisited the stationary  version of feedback capacity characterization of the Cover and Pombra AGN channel, and utilized frequency domain methods, and  their relations to scalar Riccati equations, and showed that if the noise power spectral density corresponds to a stationary Gaussian autoregressive moving-average model of order $K$, then a
%the optimal channel input conditional distribution is also of order $K$,
%and that a 
 $K-$dimensional generalization of the Schalkwijk-Kailath \cite{schalkwijk-kailath1966} coding scheme achieves feedback capacity. Yang, Kavcic, and Tatikonda \cite{yang-kavcic-tatikonda2007} analyzed the feedback capacity of stationary AGN channels,  re-visited the Cover and Pombra AGN channel, and proposed solution methods based on dynamic programming, to perform the optimization in (\ref{cp1989}). 
Butman \cite{butman1969,butman1976} evaluated the performance of linear feedback schemes for AGN channels, when the noise is described by an autoregressive moving average model. A  historical account regarding Gaussian channels with memory and feedback, related to the the Cover and Pombra \cite{cover-pombra1989} AGN channel, is found in \cite{kim2010}.   \\  
Recently, for finite alphabet channels with memory and feedback,  expressions of feedback capacity are derived for the trapdoor channel by Permuter, Cuff, Van Roy and Tsachy \cite{permuter-cuff-roy-weissman2010}, for the Ising Channel by Elishco and Permuter \cite{elishco-permuter2014}, for the Post$(a,b)$ channel by Permuter, Asnani and Tsachy \cite{permuter-asnani-weissman2013},  all without  transmission cost constraints, and in  \cite{kourtellaris-charalambous2015}   for the BSSC($\alpha,\beta)$   with and without feedback and transmission cost.  Tatikonda, Yang and Kavcic \cite{yang-kavcic-tatikonda2005} showed  that if the input to the channel and the channel state are related by  a one-to-one mapping,  and the channel assumes a specific structure, specifically,  $\big\{{\bf P}_{B_i|A_{i}, A_{i-1}}: i=0, \ldots, n\big\}$,  then dynamic programming can be used to compute the  feedback capacity expression given in \cite{yang-kavcic-tatikonda2005}.  Chen and Berger  \cite{chen-berger2005} analyzed  the Unit Memory Channel Output (UMCO) channel  $\big\{{\bf P}_{B_i|B_{i-1}, A_i}: i=0, \ldots, n\}$, under the assumption that the optimal channel input distribution is $\big\{{\bf P}_{A_i|B_{i-1}}: i=0, \ldots, n\}$. The authors in \cite{chen-berger2005} showed that the UMCO channel can be transformed to one with state information, and that under certain conditions on the channel and channel input distributions, dynamic programming can be used to compute feedback capacity.  \\
%With respect to the application of dynamic programming in feedback capacity problems, the current paper compliments the seminal paper by Tatikonda and Mitter \cite{tatikonda-mitter2009}, and  the above references, in the sense that, the knowledge of the information structures of capacity achieving channel conditional distributions, simplifies the directed information density pay-off, and hence the implementation and computational complexity of the dynamic programming recursions. 

\subsection{Discussion of Main  Results and Methodology}
\label{mo-dr} 
In this paper,  the emphasis is on  any combination of the following classes of   channel distributions and  transmission cost functions\footnote{The methodology developed  in the paper can be extended  to channels and transmission cost functions with past dependence on channel input symbols; however,  such generalizations  are beyond the scope of this paper.}.
\begin{align}
&\mbox{\bf Channel Distributions}\nonumber \\
&\mbox{ Class A.}  \hso {\bf P}_{B_i|B^{i-1}, A^{i}}(db_i|b^{i-1}, a^{i}) = {\bf P}_{B_i|B^{i-1}, A_i}(db_i|b^{i-1}, a_i), \hso  i=0, \ldots, n.  \label{CD_C1} \\
&\mbox{ Class B.} \hso {\bf P}_{B_i|B^{i-1}, A^{i}}(db_i|b^{i-1}, a^{i})     = {\bf P}_{B_i|B_{i-M}^{i-1}, A_i}(db_i|b_{i-M}^{i-1}, a_i), \hso
    i=0, \ldots,n. \label{CD_C4}\\
&\mbox{\bf Transmission Costs}\nonumber \\    
&\mbox{ Class A.} \hso  \gamma_i(T^ia^n, T^ib^{n-1}) ={\gamma}_i^{A}(a_i, b^{i-1}), \hst i=0, \ldots, n, \\
&\mbox{ Class B.} \hso \gamma_i(T^ia^n, T^ib^{n-1}) ={\gamma}_i^{B}(a_i, b_{i-K}^{i-1}), \hst i=0, \ldots, n. \label{TC_2}
\end{align}
Here, $ \{K, M\}$  are nonnegative finite integers and the following convention is used. 
\begin{align}
&\mbox{If $M=0$ then}\hso   {\bf P}_{B_i|B_{i-M}^{i-1}, {A}_i}(db_i|b_{i-M}^{i-1}, {a}_i)\Big|_{M=0} \equiv  {\bf P}_{B_i| {A}_i}(db_i| {a}_i), \hso \mbox{for}  \hso i=0,1, \ldots, n. \nonumber \\
&\mbox{If $K=0$ then} \hso  {\gamma}_i^{B}(a_i, b_{i-K}^{i-1})\Big|_{K=0} \equiv {\gamma}_i^{B}(a_i), \hso i=0, \ldots, n. \nonumber 
\end{align}
Thus, for $M=0$  the above convention implies the channel degenerates to the memoryless channel  ${\bf P}_{B_i| A_i}(db_i| a_i), i=0,1, \ldots, n$.

The above classes of  channel conditional distributions may be induced by various nonlinear channel models (NCM), such as, nonlinear and linear time-varying  Autoregressive models, and nonlinear and linear  channel models expressed in state space form \cite{caines1988}. Such classes are investigated in \cite{charalambous-kourtellarisIT2015_Part_2}.\\
An  over view of  the methodology and results obtained, is  discussed below,     to illustrate  analogies to Shannon's  two-letter capacity formulae (\ref{DMC_Shannon_1}) and conditional independence conditions (\ref{CI_DMC}) in relation to (\ref{CI_COND}).

\subsubsection{\bf Channels of Class A and Transmission Cost of Class A or B}
In Theorem~\ref{thm-ISR}, Step 1 of a two-step procedure, based on stochastic optimal control,  is applied to channel distributions of Class A,  $\big\{ {\bf P}_{B_i|B^{i-1}, A_i}(db_i|b^{i-1}, a_i):i=0, 1, \ldots, n\big\}$, to show  the  optimal channel input conditional distribution, which  maximizes  $I(A^n \rar B^n)$ satisfies conditional independence ${\bf P}_{A_i|A^{i-1}, B^{i-1}}={\bf P}_{A_i|B^{i-1}}, i=0, \ldots, n$, and hence it    occurs in the subset   
\begin{align}
\overline{\cal P}_{[0,n]}^{A} \tri\big\{ {\bf P}_{A_i|B^{i-1}}(da_i|b^{i-1}):i=0, \ldots, n\big\} \subset  {\cal P}_{[0,n]} .
\end{align}
This means that  for each $i$, the information structures of the maximizing channel input distribution is ${\cal I}_i^{\bf P} \tri \{b^{i-1}\} \subset \{a^{i-1}, b^{i-1}\}$, for $i=0,1, \ldots, n$. \\
The  characterization of the FTFI capacity is 
\begin{align}
C_{A^n \rar B^n}^{FB, A} 
= \sup_{ \overline{\cal P}_{[0,n]}^{A}} \sum_{i=0}^n I(A_i; B_i|B^{i-1}).
\end{align}
\begin{comment}
where 
\begin{align}
&{\bf P}_{B^i, A_i}(db^i, da_i) = {\bf P}_{B_i|B^{i-1}, A_i}(db_i|b^{i-1}, a_i)\otimes {\bf P}_{A_i|B^{i-1}}(da_i|b^{i-1})\otimes {\bf P}_{B^{i-1}}(db^{i-1}), \\
&{\bf P}_{B_i|B^{i-1}}(db_i|b^{i-1}) =\int {\bf P}_{B_i|B^{i-1}, A_i}(db_i|b^{i-1}, a_i)\otimes {\bf P}_{A_i|B^{i-1}}(da_i|b^{i-1}),  \hso i=0,1, \ldots, n. 
\end{align}
\end{comment}
 If a transmission cost ${\cal P}_{[0,n]}(\kappa)$ is imposed corresponding to any of the functions  $\gamma_i^{A}(a_i, b^{i-1})$, $\gamma_i^{B}(a_i, b_{i-K}^{i-1}), i=0,1, \ldots, n$,  the characterization of the FTFI capacity is
\bea 
 C_{A^n \rar B^n}^{FB, A}(\kappa) \tri \sup_{ \overline{\cal P}_{[0,n]}^{A}\bigcap  {\cal P}_{[0,n]}(\kappa) } \sum_{i=0}^n I(A_i; B_i|B^{i-1}). \label{TR_A-B}
 \eea

\subsubsection{\bf Channels of Class B Transmission Cost of Class A or B}
In Theorem~\ref{cor-ISR_C4}, Step 2 of the two-step procedure, a variational equality of directed information,  is  applied to channel distributions of Class B,  $\big\{ {\bf P}_{B_i|B_{i-M}^{i-1}, A_i}(db_i|b_{i-M}^{i-1}, a_i): i=0,1, \ldots, n\big\}$, to show the optimal channel input conditional distribution, which maximizes $I(A^n \rar B^n)$ satisfies conditional independence ${\bf P}_{A_i|A^{i-1}, B^{i-1}}={\bf P}_{A_i|B_{i-M}^{i-1}}, i=0, \ldots, n$, and hence it   occurs in the subset 
\begin{align}
\sr{\circ}{ {\cal P}}_{[0,n]}^{B.M} \tri \hso &   \Big\{ {\bf P}_{A_i|B_{i-M}^{i-1}}(da_i|b_{i-M}^{i-1}):i=0, 1, \ldots, n\Big\}. 
\end{align}
The characterization of the FTFI capacity is then given by the following expression.
\begin{align}
C_{A^n \rar B^n}^{FB, B.M} = \sup_{ \sr{\circ}{\cal P}_{[0,n]}^{B.M}  } \sum_{i=0}^n I(A_i; B_i|B_{i-M}^{i-1}).
\end{align}
If a transmission cost ${\cal P}_{[0,n]}(\kappa)$ is imposed corresponding to cost functions of Class B, 
$\big\{\gamma_i^B(a_i, b_{i-K}^{i-1}): i=0, \ldots, n\big\}$, the optimal channel input conditional distribution occurs in the subset  $\sr{\circ}{ {\cal P}}_{[0,n]}^{B.J} \bigcap {\cal P}_{[0,n]}(\kappa)$, where $J \tri \max\{M, K\}$. \\
The characterization of the FTFI capacity is given by the following expression.
\begin{align}
C_{A^n \rar B^n}^{FB, B.J}(\kappa) = \sup_{ \sr{\circ}{\cal P}_{[0,n]}^{B.J}  \bigcap {\cal P}_{[0,n]}(\kappa)  } \sum_{i=0}^n \int \log \Big(   \frac{d {\bf P}_{B_i|B_{i-M}^{i-1}, A_i}(\cdot|b_{i-M}^{i-1}, a_i)}{ d{\bf P}_{B_i|B_{i-J}^{i-1}}(\cdot|b_{i-J}^{i-1})}(b_i)\Big) {\bf P}_{B_{i-J}^{i}, A_i}(db_{i-J}^{i}, da_i), \hso J \tri \max\{M, K\}. \label{Intr_B1a}
\end{align}
where 
\begin{align}
{\bf P}_{B_{i-J}^i, A_i}(db_{i-J}^i, da_i) =& {\bf P}_{B_i|B_{i-M}^{i-1}, A_i}(db_i|b_{i-M}^{i-1}, a_i)\otimes {\bf P}_{A_i|B_{i-J}^{i-1}}(da_i|b_{i-J}^{i-1}) \otimes {\bf P}_{B_{i-J}^{i-1}}(db_{i-J}^{i-1}),  \hso i=0,1, \ldots, n, \\
{\bf P}_{B_i|B_{i-J}^{i-1}}(db_i|b_{i-J}^{i-1}) =&\int {\bf P}_{B_i|B_{i-M}^{i-1}, A_i}(db_i|b_{i-M}^{i-1}, a_i)\otimes {\bf P}_{A_i|B_{i-J}^{i-1}}(da_i|b_{i-J}^{i-1}),  \hso i=0,1, \ldots, n.  \label{Intr_B1}
%\\
%{\bf P}_{B_{i-M \wedge K}^i, A_i}(db_{i-M\wedge K}^i, da_i) %=& {\bf P}_{B_i|B_{i-M}^{i-1}, A_i}(db_i|b_{i-M}^{i-1}, %a_i)\otimes {\bf P}_{A_i|B_{i-M\wedge K}^{i-1}}(da_i|b_{i-M %\wedge K}^{i-1}) \nonumber \\
%&\otimes {\bf P}_{B_{i-M \wedge K}^{i-1}}(db_{i-M\wedge K}%^{i-1}),  \hso i=0,1, \ldots, n, \\
%{\bf P}_{B_i|B_{i-M\wedge K}^{i-1}}(db_i|b_{i-M\wedge K}%^{i-1}) =&\int {\bf P}_{B_i|B_{i-M}^{i-1}, A_i}(db_i|b_{i-%M}^{i-1}, a_i)\otimes {\bf P}_{A_i|B_{i-M \wedge K}^{i-1}}%(da_i|b_{i-M\wedge K}^{i-1}),  \hso i=0,1, \ldots, n.
\end{align}
The above expressions imply the channel output process or controlled process $\{B_i: i=0, \ldots, n\}$ is a $J-$order  Markov process.\\
On the other hand,  if a transmission cost ${\cal P}_{[0,n]}(\kappa)$ is imposed corresponding to  $ \gamma_i^{A}(a_i, b^{i-1}), i=0,1, \ldots, n$,  the optimal channel input distribution occurs in the set   $\overline{\cal P}_{[0,n]}^{A}\bigcap {\cal P}_{[0,n]}(\kappa) 
$.

The above characterizations of FTFI capacity (and by extension of feedback capacity characterizations)  state that the information structure of the optimal channel input conditional distribution  is determined by $\max\{M, K\}$, where $M$ specifies the order of the memory  of the channel conditional distribution, and $K$ specifies the dependence of the  transmission cost function, on  past channel output symbols. \\
These structural properties of optimal channel input conditional distributions are analogous to those of memoryless channels, and  they hold for finite, countable and abstract alphabet spaces (i.e., continuous), and channels defined by nonlinear models, state space models, autoregressive models, etc.   \\
The following special cases illustrate the explicit analogy to Shannon's two-letter capacity formulae of memoryless channels.

{\bf Special Case-$M=2, K=1$.} For any channel   $\big\{ {\bf P}_{B_i|B_{i-1},B_{i-2}, A_i}(db_i|b_{i-1},b_{i-2}, a_i):i=0, 1, \ldots, n\big\}$, and  transmission cost function  $\big\{ \gamma_i^{B.1}(a_i, b_{i-1}), i=1, \ldots, n\big\}$, from (\ref{Intr_B1a})-(\ref{Intr_B1}),  the  optimal channel input conditional distribution occurs in the subset 
\bea
\sr{\circ}{ {\cal P}}_{[0,n]}^{B.2}(\kappa) \tri     \Big\{ {\bf P}_{A_i|B_{i-1}, B_{i-2}}(da_i|b_{i-1}, b_{i-2}),i=0, 1, \ldots, n:   \frac{1}{n+1} {\bf E}\big\{\sum_{i=0}^n \gamma_i^{B.1}(A_i, B_{i-1})\big\} \leq \kappa\Big\} . \label{CID_A.1}
\eea
The information structure of the optimal channel input conditional distribution implies  the joint distribution of $(A_i,B_i)$, conditioned on $(A^{i-1}, B^{i-1})$, is given by  
\begin{align} 
 {\bf P}_{A_i. B_i|A^{i-1}, B^{i-1}}={\bf P}_{A_i,B_i| A_{i-1}, B_{i-1}, B_{i-2}}\equiv {\bf P}_{B_i| A_{i}, B_{i-1}, B_{i-2}}\otimes {\bf P}_{A_i|B_{i-1}, B_{i-2}},   \hso  i=0,1, \ldots, n. \label{SC_1}
      \end{align}
the channel output process  $\{B_i: i=0, \ldots, n\}$ is a second-order Markov process, i.e.,  
\begin{align}      
      {\bf P}_{B_i|B^{i-1}}={\bf P}_{B_i| B_{i-1}, B_{i-2}} = \int_{ {\mb A}_i} {\bf P}_{B_i|B_{i-1},B_{i-2}, A_i}(db_i|b_{i-1}, b_{i-2}, a_i)\otimes {\bf P}_{A_i|B_{i-1}, B_{i-2}}(da_i|b_{i-1}, b_{i-2}),   \hso  i=0,1, \ldots, n. \label{SC_2}
      \end{align}
and that the characterization of the FTFI capacity is given by the following 4-letter expression at each time $i=0, \ldots, n$.
\begin{align}
C_{A^n \rar B^n}^{FB,B.2}(\kappa) \tri&  \sup_{ \sr{\circ}{ {\cal P}}_{[0,n]}^{B.2}(\kappa)    }\sum_{i=0}^n I(A_i; B_i|B_{i-1}, B_{i-2}). \label{intro_fbumc3}\\
=& \sup_{ \sr{\circ}{ {\cal P}}_{[0,n]}^{B.2}(\kappa)    }\sum_{i=0}^{n} {\bf E}\bigg\{  \ell_i\Big(A_i, S_i\Big)\bigg\}, \hst S_j\tri (B_{j-1}, B_{j-2}) \hso j=0, \ldots, n, \label{CIS_6f_new}
\end{align}
where the pay-off $\ell_j(\cdot, \cdot)$ is given by
\begin{align}
  \ell_j\Big(a_j, s_j \Big)  \tri  \int_{ {\mb B}_j } \log \Big(\frac{d{\bf P}_{B_j|S_j, A_j}(\cdot| s_j, a_j)}
  {d{\bf P}_{B_j|S_j}(\cdot| s_j)}(b_j)\Big) {\bf P}_{B_j|S_j, A_j}(db_j| s_j, a_j), \hso j=0, \ldots, n \label{BAM31_new1}
\end{align}
Moreover, if the channel input distribution is restricted to a time-invariant distribution, i.e.,   ${\bf P}_{A_i|S_i}(da|s)\equiv {\bf P}^\infty(da|s), i=0, \ldots,$ and the channel distribution is time-invariant, then the transition probability distribution of $\{S_i: i=0, \ldots, \}$ is time-invariant, and ${\bf P}_{B_i|S_{i-1}}\equiv {\bf P}^\infty(db|s), i=0, \ldots, $ is also time-invariant. Consequently, the per unit limiting version of (\ref{intro_fbumc3}), specifically, $C_{A^\infty \rar B^\infty}^{FB,B.2}(\kappa)$, under conditions which ensure ergodicity,  is characterized by time-invariant and the ergodic distribution of $\{S_i: i=0, \ldots, n\}$ \cite{hernandezlerma-lasserre1996}. 
%This discussion is illustrated below via dynamic programming. 
  \\
{\bf Special Case-$M=2, K=0$.}
This means  no transmission cost is imposed, and hence  the supremum in (\ref{intro_fbumc3}) is over $\sr{\circ}{ {\cal P}}_{[0,n]}^{B.2} \tri     \big\{ {\bf P}_{A_i|B_{i-1}, B_{i-2}}(da_i|b_{i-1}, b_{i-2}),i=0, 1, \ldots, n\big\}$. Let $C_t^{B.2}: {\mb B}_{t-1} \times {\mathbb B}_{t-2} \longmapsto {\mb R}$ denote the cost-to-go corresponding to (\ref{CIS_6f_new}), with $K=0$,  from time ``$t$'' to the terminal time   ``$n$'' given the values of the output  $S_t=(B_{t-1}, B_{t-2})=(b_{t-1}, b_{t-2})$. \\
Then the  cost-to-go satisfies the following dynamic programming  recursions. 
\begin{align}
C_n^{B.2}(s_n) =& \sup_{ {\bf P}_{A_n|S_n}} \Big\{\int_{{\mb A}_n \times {\mb B}_n }    \log\Big(\frac{d{\bf P}_{B_n|S_n, A_n}(\cdot| s_n, a_n)}{d{\bf P}_{B_n|S_n}(\cdot|s_n)}(b_n)\Big)   {\bf P}_{B_n|S_n, A_n}(db_n|s_n, a_n) \otimes {\bf P}_{A_n|S_n}(da_n|s_n)  
  \Big\},  \label{NCM-B.2-DP2_IT} \\
C_t^{B.2}(s_t) =& \sup_{  {\bf P}_{A_t|S_t}} \Big\{\int_{{\mb A}_t \times {\mb B}_t }    \log\Big(\frac{d{\bf P}_{B_t|S_t, A_t}(\cdot|s_t, a_t)}{d{\bf P}_{B_t|S_t}(\cdot|s_t)}(b_t)\Big)   {\bf P}_{B_t|S_t, A_t}(db_t|s_t, a_t)  \otimes {\bf P}_{A_t|S_t}(da_t|s_t)  \nonumber  \\
  & + \int_{{\mb A}_t \times {\mb B}_t }   C_{t+1}^{B.2}(s_t) {\bf P}_{B_t|S_t, A_t}(db_t|s_t, a_t)  \otimes {\bf P}_{A_t|S_t}(da_t|s_t)    \Big\}, \hst t=n-1, n-2, \ldots, 0. \label{NCM-B.2-DP1_IT}
  \end{align}
The characterization of the FTFI capacity and feedback capacity are expressed via the $C_0^{B.2}(s_0)$ and the fixed distribution $\mu_{B_{-1}, B_{-2}}(db_{-1}, db_{-2})$ by 
\begin{align}
C_{A^n \rar B^n}^{FB, B.2}=\int_{{\mb B}_{-1} \times {\mb B}_{-2}} C_0^{B.2}(s_0)\mu_{B_{-1}, B_{-2}}(ds_0), \hst C_{A^\infty \rar B^\infty} ^{FB. B.2}\tri \liminf_{n \longrightarrow \infty}\frac{1}{n+1} C_{A^n \rar B^n}^{FB, B.2}.
\end{align}
Obviously, even for finite ``$n$'',  from the above recursions, we deduce that the information structure, $\{S_t=B_{t-1}, B_{t-2}: t=0,\ldots, n\}$, of the control object, namely, $\big\{{\bf P}_{A_t|S_t}: t=0, \ldots, n\big\}$, induces conditional  probabilities  $\big\{{\bf P}_{B_t|S^t}={\bf P}_{B_t|S_t}: t=0, \ldots, n\big\}$  which are $2$nd order Markov, i.e., $\big\{{\bf P}_{S_{t+1}|S^{t}}={\bf P}_{S_{t+1}|S_{t}}: t=0, \ldots, n-1\big\}$, resulting in a significant reduction in computational complexity of the above dynamic programming recursions. Clearly, for any fixed $S_0=s_0$, then   $C_{A^\infty \rar B^\infty}$ depends on the initial state $S_0=s_0$. However, if the channel is time-invariant and the the distributions $\big\{{\bf P}_{A_t|S_t}: t=0, \ldots, \big\}$ are either restricted or converge to time-invariant distributions, and the corresponding transition probabilities $\big\{{\bf P}_{S_{t+1}|S^{t}}={\bf P}_{S_{t+1}|S_{t}}: t=0, \ldots, n-1\big\}$ are irreducible and aperiod, then there is unique invariant distribution for $\{S_i: i=0, \ldots, \}$ and $C_{A^\infty \rar B^\infty}^{B.2}$ is independent of the initial distribution $\mu_{B_{-1}, B_{-2}}(ds_0)$. Such questions are addressed in \cite{chen-berger2005} for the channel $\big\{ {\bf P}_{B_i|B_{i-1}, A_i}(db_i|b_{i-1}, a_i):i=0, 1, \ldots, n\big\}$.  They can be addressed from the general theory of per unit time-infinite horizon Markov decision theory \cite{kumar-varayia1986}, and more generally by solving explicitly the above dynamic programming recursions and investigating their per unit-time limits (see \cite{stavrou-charalambous-kourtellaris:C2016}).

{\bf Special Case-$M=K=1$.} If the channel is the so-called Unit Memory Channel Output (UMCO) defined by  $\big\{ {\bf P}_{B_i|B_{i-1}, A_i}(db_i|b_{i-1}, a_i):i=0, 1, \ldots, n\big\}$, and  the transmission cost function is $\big\{ \gamma_i^{B.1}(a_i, b_{i-1}), i=1, \ldots, n\big\}$, the  optimal channel input conditional distribution occurs in the subset 
\bea
\sr{\circ}{ {\cal P}}_{[0,n]}^{B.1}(\kappa) \tri     \Big\{ {\bf P}_{A_i|B_{i-1}}(da_i|b_{i-1}),i=0, 1, \ldots, n:   \frac{1}{n+1} {\bf E}\big\{\sum_{i=0}^n \gamma_i^{B.1}(A_i, B_{i-1})\big\} \leq \kappa\Big\} . \label{CID_A.1_a}
\eea
and the characterization of the FTFI capacity degenerates to the following sums of a 3-letter expressions.
\begin{align}
C_{A^n \rar B^n}^{FB,B.1}(\kappa) \tri  \sup_{ \sr{\circ}{ {\cal P}}_{[0,n]}^{B.1}(\kappa)    }\sum_{i=0}^n I(A_i; B_i|B_{i-1}).  \label{intro_fbumc3_A}
\end{align}
The importance  of   variational equalities to identify  information structures of capacity achieving channel input conditional distributions is first applied  in \cite{ckthesis}. For the  BSSC$(\alpha, \beta)$ (which is a special case of the UMCO) with transmission cost, it is shown in \cite{kourtellaris-charalambous2015}, that the characterizations of feedback capacity and capacity without feedback, admit  closed form expressions.  Moreover,  this channel  is matched to the Binary Symmetric Markov Source through the use of nonanticipative Rate Distortion Function (RDF) in \cite{kourtellaris-charalambous-boutros:2015} 
%(see also \cite{char:2015ch1:kourt,char:2015ch2:kourt,char:2015ch3:kourt,char:2015ch4:kourt}). 
That is, there is a perfect duality between the  BSSC$(\alpha, \beta)$ with transmission cost and the Binary Symmetric Markov Source with a single letter distortion function.   \\
 Recently,  the results of this paper are applied in  \cite{stavrou-charalambous-kourtellaris:C2016} (see also \cite{stavrou-charalambous-kourtellaris:J2016}) to derive   sequential necessary and sufficient conditions to optimize the characterizations of FTFI capacity. Moreover, using the necessary and sufficient conditions  closed form expressions for the optimal channel input distributions and 
 feedback capacity, are obtained for various applications examples defined on finite alphabet spaces. This paper includes in Section~\ref{example}, an illustrative example, which reveals many silent properties of capacity achieving distributions, with and without feedback, for a simple first-order Gaussian Linear Channel Model. \\
 A detailed investigation of the  characterization  of FTFI capacity, and feedback capacity,  of Multiple-Input Multiple Output (MIMO) Gaussian Linear Channel Models   with memory   is  included  
in the second part of this two-part investigation \cite{charalambous-kourtellarisIT2015_Part_2}.

%%%% remove Version_14
\begin{comment}

\subsubsection{Directed Information Stability and Coding Theorems and} In Section~\ref{c-thms}, coding theorem are derived for abstract continuous alphabet spaces,  by    identifying sufficient conditions   so that   the information density of the  joint processes $\{A_i, B_i: i=0,1, \ldots \}$ is information stable, giving operational meaning to the per unit time limiting version of the FTFI capacity. Applications of  existing  coding theorems  found in  \cite{gallager1968,ihara1993,kramer1998,kramer2003,tatikonda2000,
chen-berger2005,kim2008,tatikonda-mitter2009,permuter-weissman-goldsmith2009,kim2010}, are also discussed.   

\end{comment}

\section{Directed Information and Definitions of Extremum Problems of Capacity}
In this section, the notation adopted  in the rest of the paper is introduced, and   a variational equality of directed information is recalled from \cite{charalambous-stavrou2013aa}. \\
The following notation is used throughout the paper.
\begin{align}
& {\mathbb Z}: \hso  \mbox{set of  integer};  \nonumber \\
& {\mathbb N}_0: \hso  \mbox{set of nonnegative integers} \hso \{0, 1,2,\dots\}; \nonumber \\
%%& {\mathbb N}^n: \hso  \mbox{set of first $n+1$ %natural numbers} \hso \{0, 1,2,\dots, n\}; \nonumber \\
%& {\mathbb R}: \hso  \mbox{set of  real numbers};  \nonumber \\
%& {\mathbb C}: \hso \mbox{set of complex numbers;} \nonumber \\
%& {\mathbb R}^n: \hso  \mbox{set of  $n$ tuples of real  numbers}; \nonumber \\
%&S_+^{p \times p}: \hso \mbox{set of symmetric positive semidefine $p\times p$ matrices $A \in {\mathbb R}^{p \times p}$}; \nonumber \\
%&\langle \cdot, \cdot \rangle: \hso \mbox{inner product of elements of vectors spaces;} \nonumber \\
%&S_{++}^{p \times p}: \hso \mbox{subset of positive definite matrices of the set $S_+^{p \times p}$;} \nonumber \\
%&{\mathbb D}_o \tri \big\{c \in {\mathbb C}: |c| <1\big\}: \hso \mbox{open unit disc of the space of compex numbers ${\mathbb C}$}; \nonumber \\
%&spec(A) \subset {\mathbb C}: \hso \mbox{spectrum of a matrix $A \in {\mathbb R}^{p \times p}$ (set of all its eigenvalues)}; \nonumber \\
&(\Omega, {\cal F}, {\mathbb P}): \mbox{probability space, where ${\cal F}$ is the $\sigma-$algebra generated by subsets of $\Omega$}; \nonumber \\ 
& {\cal  B}({\mathbb  W}): \hso \mbox{Borel $\sigma-$algebra of a given topological space  ${\mathbb W}$};  \nonumber \\
&{\cal M}({\mathbb W}): \hso \mbox{set of all probability measures on ${\cal  B}({\mathbb W})$ of a Borel space ${\mathbb W}$}; \nonumber\\
&{\cal K}({\mathbb V}|{\mathbb W}): \hso \mbox{set of all stochastic kernels on  $({\mathbb V}, {\cal  B}({\mathbb V}))$ given $({\mathbb W}, {\cal  B}({\mathbb W}))$ of Borel spaces ${\mathbb W}, {\mathbb V}$}.\nonumber
%\nonumber\\
%&X \leftrightarrow Y \leftrightarrow Z: \hso \mbox{Conditional independence of RVs $(X, Z)$ given RV $Y$}. \nonumber
\end{align}
All spaces (unless stated otherwise) are complete separable metric spaces, also called  Polish spaces, i.e., Borel spaces. This generalization is judged necessary to treat simultaneously discrete, finite alphabet,  real-valued ${\mathbb R}^k$ or complex-valued ${\mathbb C}^k$ random processes for any positive integer $k$,  etc. 

\subsection{Basic Notions of Probability}
\label{subsec-prob}
 The product measurable space of the two measurable spaces  $({\mb X}, {\cal  B}({\mb X}))$ and $({\mb Y}, {\cal  B}({\mb Y}))$ is denoted by $({\mb X} \times {\mb Y}, {\cal  B}({\mb X})\odot  {\cal  B}({\mb Y}))$, where  ${\cal  B}({\mb X})\odot  {\cal  B}({\mb Y})$ is the product $\sigma-$algebra generated by $\{A \times B:  A \in {\cal  B}({\mb X}), B\in  {\cal  B}({\mb Y})\}$. \\
%%% Version_14  The product of two measurable spaces  extends to   countable spaces. 
A Random Variable (RV)  defined on a probability space $(\Omega, {\cal F}, {\mathbb P})$ by the mapping $X: (\Omega, {\cal F}) \longmapsto ({\mb X}, {\cal  B}({\mb X}))$  induces a probability measure $ {\bf P}(\cdot) \equiv {\bf P}_X(\cdot)$ on  $({\mb X}, {\cal  B}({\mb X}))$ as follows\footnote{The subscript on $X$ is often omitted.}.
\begin{align}
{\bf P}(A) \equiv  {\bf P}_X(A)  \tri {\mathbb P}\big\{ \omega \in \Omega: X(\omega)  \in A\big\},  \hso  \forall A \in {\cal  B}({\mb X}).
 \end{align}
A RV is called discrete if there exists a countable set ${\cal S}_X\tri \{x_i: i \in {\mathbb N}\}$ such that $\sum_{x_i \in {\cal S}_X} {\mathbb  P} \{ \omega \in \Omega : X(\omega)=x_i\}=1$. The probability measure ${\bf P}_X(\cdot)$  is then concentrated on  points in ${\cal S}_X$, and it is defined by 
\bea
 {\bf P}_X(A)  \tri \sum_{x_i \in {\cal S}_X \bigcap A} {\mathbb P} \{ \omega \in \Omega : X(\omega)=x_i\}, \hso \forall A \in {\cal  B}({\mb X}). 
\eea 
%{\bf put material on integral and conditional distribution, like K-V page 14-18}

If the cardinality of ${\cal S}_X$ is finite then the RV is finite-valued  and it is called a finite alphabet RV. \\
Given another RV $Y: (\Omega, {\cal F}) \longmapsto ({\mb Y}, {\cal  B}({\mb Y}))$, for each Borel subset $B$ of ${\mb Y}$ and any sub-sigma-field ${\cal G} \in {\cal F}$ (collection of events)   the conditional probability of  event $\{Y \in B\}$ given ${\cal G}$ is defined by ${\mathbb P}\{ Y \in B| {\cal G}\}(\omega),$ and this  is an ${\cal G}-$measurable function $\forall \omega \in \Omega$. This conditional probability  induces a conditional probability measure on  $({\mb Y}, {\cal  B}({\mb Y}))$ defined by ${\bf P}(B| {\cal G})(\omega)$, which is a version of ${\mathbb P}\{ Y \in B| {\cal G}\}(\omega).$ For example,  if ${\cal G}$ is the $\sigma-$algebra generated by RV $X$, and  $B=dy$, then  ${\bf P}_{Y|X}(dy| X)(\omega)$ is called the conditional distribution of RV $Y$ given RV $X$. The conditional distribution of RV $Y$ given $X=x$ is denoted by ${\bf P}_{Y|X}(dy| X=x)  \equiv {\bf P}_{Y|X}(dy|x)$. Such conditional distributions are  equivalently  described   by  stochastic kernels or transition functions ${\bf K}(\cdot|\cdot)$ on $ {\cal  B}({\mb Y}) \times {\mathbb X}$, mapping ${\mb X}$ into ${\cal M}({\mathbb Y})$ (the space of probability measures on $({\mathbb Y}, ({\cal B}({\mathbb Y})))$, i.e., $x \in {\mathbb X}\longmapsto {\bf K}(\cdot|x)\in{\cal M}({\mathbb Y})$, and hence the distributions  are parametrized by $x \in {\mb X}$.\\
The family of probability measures on $({\mb Y}, {\cal B}({\mb Y})$ parametrized by $x \in {\mb X}$, is defined by $${\cal K}({\mb Y}| {\mb X})\tri \big\{{\bf K}(\cdot| x) \in {\cal M}({\mathbb Y}): \hso x \in {\mathbb X}\hso \mbox{and  $\forall F \in {\cal  B}({\mb Y})$, the function ${\bf K}(F|\cdot)$ is ${\cal  B}({\mb X})$-measurable.}\big\}.$$

 \subsection{FTFI Capacity and Variational Equality}
\label{def-sub2}
The channel input and  channel output alphabets are  sequences of  measurable spaces $\{({\mb A}_i,{\cal  B}({\mb A }_i)):i\in\mathbb{Z}\}$ and  $\{({\mb  B}_i,{\cal  B}({\mb  B}_i)):i\in\mathbb{Z}\}$, respectively, and 
their history spaces  are the product spaces ${\mb A}^{\mathbb{Z}}\tri {{\times}_{i\in\mathbb{Z}}}{\mb A}_i,$ ${\mb  B}^{\mathbb{Z}}\tri {\times_{i\in\mathbb{Z}}}{\mb  B}_i$.  These spaces are endowed with their respective product topologies, and  ${\cal  B}({\Sigma}^{\mathbb{Z}})\tri \odot_{i\in\mathbb{Z}}{\cal  B}({\Sigma }_i)$  denotes the $\sigma-$algebra on ${\Sigma }^{\mathbb{Z}}$, where ${\Sigma}_i \in  \Big\{{\mb A}_i, {\mb  B}_i\Big\}$,  ${\Sigma}^{\mathbb{Z}} \in  \Big\{{\mb A}^{\mathbb N}, {\mb  B}^{\mathbb Z}\Big\}$,  generated by cylinder sets. Thus, for any $n \in {\mb Z}$,  ${\cal  B}({\Sigma }^n)$ denote the $\sigma-$algebras of cylinder sets in ${\Sigma }^{\mathbb{Z}}$, with bases over  $C_i\in{\cal  B}({\Sigma}_i)$, $i=\ldots, -1, 0,1,\ldots,n$, respectively. Points in ${\mb A}^{n}$, ${\mb  B}^{n}$  are denoted by  $a^n\tri \{\ldots, a_{-1}, a_0,a_1,\ldots,a_n\}\in{\mb A}^n$, $b^n\tri \{\ldots, b_{-1}, b_0,b_1,\ldots,b_n\}\in{\mb  B}^n$. Similarly, points in ${\mb Z}_k^m \tri \times_{j=k}^m {\mb Z}_j$ are denoted by $z_{k}^m \tri \{z_k, z_{k+1}, \ldots, z_m\} \in {\mb Z}_k^m$,   $(k, m)\in   {\mathbb Z} \times {\mathbb Z}$. We often restrict ${\mathbb Z}$ to ${\mathbb N}_0$. \\
%The canonical sample space is defined by $\Omega \tri \times_{i \in {\mathbb N}} \Big(  {\mb A}_i \times {\mb  B}_i \Big)$, and a generic element (realization) $\omega \in \Omega$ is of the form $\omega =(a_0, b_0,  a_1, b_1,  \ldots)$, $ a_i \in {\mb A}_i, b_i \in {\mb  B}_i,  i \in {\mathbb N}$. %Moreover, $B^{-1}$ is either fixed to $B^{-1}=b^{-1} \in {\mathbb B}^{-1}$ or its distribution ${\bf P}_{B^{-1}}(db^{-1})$ is fixed. 

\noi{\bf Channel Distribution with Memory.}  A sequence of stochastic kernels or distributions defined by 
\begin{align}
{\cal C}_{[0,n]} \tri \Big\{Q_i(db_i|b^{i-1},a^{i})= {\bf P}_{B_i|B^{i-1}, A^i}  \in {\cal K}({\mb  B}_i| {\mb  B}^{i-1} \times {\mb A}^i) :  i=0,1, \ldots, n \Big\}. \label{channel1}
\end{align}
 At each time instant $i$ the conditional distribution of channel output $B_i$  is affected causally by previous channel output symbols $b^{i-1} \in {\mb B}^{i-1}$ and current and previous channel input symbols $a^{i} \in {\mb A}^i, i=0,1, \ldots, n$.

\noi{\bf Channel Input Distribution with Feedback.}  A  sequence of stochastic kernels defined by 
\bea
{\cal P}_{[0,n]} \tri  \Big\{  P_i(da_i|a^{i-1},b^{i-1})={\bf P}_{A_i|A^{i-1}, B^{i-1}}  \in  {\cal K}({\mb A}_i| {\mb A}^{i-1} \times {\mb  B}^{i-1}):   i=0,1, \ldots, n \Big\}. \label{rancodedF}
\eea
At each time instant $i$ the conditional  distribution of channel input $A_i$  is affected causally by past  channel inputs and  output symbols  $\{a^{i-1}, b^{i-1}\} \in {\mb A}^{i-1} \times {\mb B}^{i-1}, i=0,1, \ldots, n$. Hence, the information structure of the channel input distribution at time instant $i$ is ${\cal I}_i^{P} \tri \{a^{i-1},b^{i-1}\}\in {\mathbb A}^{i-1} \times {\mathbb B}^{i-1}, i=0,1, \ldots, n$. 

\begin{comment}

\noi{\bf Channel Input Distribution without Feedback.}  A  sequence of stochastic kernels defined by 
\bea
{\cal P}_{[0,n]}^{noFB} \tri  \Big\{ P_i(da_i|a^{i-1}) \in  {\cal K}({\mb A}_i| {\mb A}^{i-1} ):   i=0,1, \ldots, n \Big\}. \label{rancodedFWF}
\eea
 At each time instant $i$ the conditional channel input distribution without feedback is affected by previous channel input  symbols $a^{i-1} \in {\mb A}^{i-1}$, and hence the information structure at time instant $i$ is ${\cal I}_i^P\tri \{a^{i-1}\}$ for $i=0,1, \ldots, n$. \\
A channel input distribution without feedback implies that the encoder is used without channel output feedback.

\end{comment}
{\bf Admissible Histories.} For each $i=-1, 0, \ldots, n$, we introduce the space ${\mb G}^{i}$ of admissible histories of channel input and output symbols,   as follows. Define
\begin{IEEEeqnarray}{rCl}
 {\mb  G}^i\triangleq  {\mb B}^{-1} \times  \mathbb{A}_0\times \mathbb{B}_0\times \hdots \times \mathbb{A}_{i-1}\times\mathbb{B}_{i-1}\times  \mathbb{A}_i\times {\mb B}_i,
 \hso  i=0, \ldots,n,  \hso {\mb G}^{-1}= {\mb B}^{-1}.
\end{IEEEeqnarray}
A typical element of ${\mb G}^i$ is a sequence of the form $( b^{-1},a_0,b_0,\hdots, a_{i},b_i)$.  We equip the space 
%${\mb G}^i$
  ${\mb G}^i$  with the natural $\sigma$-algebra 
%  ${\cal B}({\mb G}^i)$ and 
  ${\cal B}({\mb G}^i)$, for $i=-1,0,\ldots, n$. Hence, for each $i$, the information structure of the channel input distribution  is
\begin{align}
{\cal I}_i^P\tri \Big\{B^{-1}, A_0, B_0, \ldots, A_{i-1}, B_{i-1}\Big\},  \; i=0,1, \ldots, n, \hst {\cal I}_0^P \tri  \big\{ B^{-1}\big\}
\end{align}
This implies at time $i=0$, the initial distribution is $P_0(da_0|a^{-1}, b^{-1})=P_0(da_0| {\cal I}_0^P)=P_0(da_0| b^{-1})$. However, we can modify  ${\cal I}_0^P$ to  consider an alternative convention such as ${\cal I}_0^P =\{\emptyset\}$ or ${\cal I}_0^P=\{a^{-1}, b^{-1}\}$.

{\bf Joint and Marginal Distributions.}  Given any channel input conditional  distribution $\big\{{ P}_i(da_i|a^{i-1}, b^{i-1}): i=0,1, \ldots, n\big\} \in {\cal P}_{[0,n]}$,  any channel distribution $\big\{Q(db_i| b^{i-1}, a^{i}): i=0,1, \ldots, n\big\}\in {\cal C}_{[0,n]}$,   and the initial probability distribution ${\bf P}(db^{-1})\equiv \mu(db^{-1})\in {\cal M}({\mb G}^{-1})$,  the  induced joint distribution  ${\bf P}^{P}(da^n, db^n)$ on  the  canonical space  $\Big({\mb  G}^n, {\cal  B}({\mb G}^n))\Big)$ is defined uniquely, and a probability space $\Big(\Omega, {\cal F}, {\mathbb P}\Big)$ carrying the sequence of RVs $\{(A_i, B_i): i=0, \ldots, n\}$ and $B^{-1}$ can be  constructed, as follows\footnote{The superscript notation, i.e.,  ${\bf P}^P$, ${\bf E}^P$ is used to track  the dependence of the  distribution and expectation  on the channel input distribution $P_i(da_i|a^{i-1}, b^{i-1}), i=0, \ldots,n$.}.
\begin{align}
 {\mathbb P}\big\{A^n \in d{a}^n, B^n \in d{b}^n\big\}  \tri  & {\bf P}^P( db^{-1},da_0,db_0,\hdots,da_{n},db_n),  \hso n \in {\mathbb N}_0 \nonumber \\
=&\mu(db^{-1})\otimes P_0(da_0|b^{-1})  \otimes Q_0(db_0|b^{-1},a_0) \otimes 
 P_1(da_1|b^{-1},b_0,a_0) \nonumber \\
 &\otimes\hdots\otimes Q_{n-1}(db_{n-1}|b^{n-2},a^{n-1}) \otimes  P_{n}(da_{n}|b^{n-1},a^{n-1})\otimes Q_n(db_n|b^{n-1},a^{n}) \label{CIS_2gde2new} \\
\equiv & \mu(db^{-1})   \otimes_{j=0}^n \Big(Q_j(db_j|b^{j-1}, a^j)\otimes P_j(da_j|a^{j-1}, b^{j-1})\Big). \label{CIS_2gg_new} 
\end{align}
The joint distribution of $\big\{B_i: i=0, \ldots, n\big\}$ and its conditional distribution are defined by
\begin{align}
{\mathbb  P}\big\{B^n \in db^n\big\} \tri \; & {\bf P}^{P}(db^n) =  \int_{{\mb A}^n}  {\bf P}^{ P}(da^n, db^n) ,   \hso  n \in {\mathbb N}_0,  \label{CIS_3g}\\
\equiv \; &   \Pi_{0,n}^{P}(db^n) = \mu(db^{-1}) \otimes_{i=0}^n \Pi_i^{ P}(db_i|b^{i-1}) \label{MARGINAL} \\
\Pi_i^{ P}(db_i|b^{i-1})= \; &  \int_{{\mb A}^i} Q_i(db_i|b^{i-1}, a^i)\otimes P_i(da_i|a^{i-1}, b^{i-1}) \otimes {\bf P}^{P}(da^{i-1}|b^{i-1}), \hso  i=0, \ldots, n . \label{CIS_3a}
\end{align}
The above distributions are parametrized by  either a fixed $B^{-1}=b^{-1} \in {\mathbb B}^{-1}$ or a fixed distribution ${\bf P}(db^{-1})=\mu(db^{-1})$. 

{\bf  FTFI Capacity.} Directed  information (pay-off)   $I(A^n \rar B^n)$  is defined   by 
\begin{align}
I(A^n\rar B^n) \tri  &\sum_{i=0}^n {\bf E}^{{ P}} \Big\{  \log \Big( \frac{dQ_i(\cdot|B^{i-1},A^i) }{d\Pi_i^{{ P} }(\cdot|B^{i-1})}(B_i)\Big)\Big\}   \\
=& \sum_{i=0}^n \int_{{\mb A}^{i} \times {\cal  B}^{i}   }^{}   \log \Big( \frac{ dQ_i(\cdot|b^{i-1}, a^i) }{d\Pi_i^{{ P}}(\cdot|b^{i-1})}(b_i)\Big) {\bf P}^{ P}( da^i, db^i)  \label{CIS_6}\\
 \equiv &  {\mathbb I}_{A^n\rightarrow B^n}({P}_i,{ Q}_i,  :~i=0,1,\ldots,n)\label{DI_4}
\end{align}
where the notation (\ref{DI_4}) illustrates that  $I(A^n \rar B^n)$ is  a functional of the two sequences of conditional  distributions, $\big\{{P}_i(da_i|a^{i-1}, b^{i-1}), { Q}_i(db_i|b^{i-1}, a^i): i=0,1, \ldots, n\big\}$, and the initial distribution,  which uniquely define the joint distribution,  the marginal and conditional distributions $\big\{{\bf P}(da^i, db^i), \Pi_{0,i}^P(db^i), \Pi_i^P(db_i|b^{i-1}): i=0,1, \ldots, n\big\}$.\\
Clearly, (\ref{CIS_6}) includes formulations with respect to probability density functions and probability mass functions.

\noi{\bf Transmission Cost.}  The cost of transmitting and receiving  symbols $a^n\in {\mb A}^n, b^n \in {\mb B}^n$ over the  channel    is a  measurable function $c_{0,n}:{\mb A}^n\times{\mb  B}^{n-1} \longmapsto [0,\infty)$. The set of channel  input distributions with transmission cost is defined by 
\begin{align}
{\cal P}_{[0,n]}(\kappa) \tri  & \Big\{  P_i(da_i|a^{i-1}, b^{i-1}) \in {\cal K}({\mb A}_i| {\mb A}^{i-1}\times {\mb  B}^{i-1}),  i=0, \ldots, n: \nonumber \\
&\frac{1}{n+1} {\bf E}^{ P} \Big( c_{0,n}(A^n, B^{n-1}) \Big)\leq  \kappa\Big\}  \subset {\cal P}_{[0,n]}, \hst  c_{0,n}(a^n, b^{n-1}) \tri \sum_{i=0}^n \gamma_i(T^ia^n, T^ib^{n-1}), \;      \kappa \in [0,\infty) \label{rc1}
\end{align} 
where ${\bf E}^{ P }\{\cdot\}$ denotes expectation with respect to the the joint distribution, and superscript ``P'' indicates its dependence  on the choice of  conditional distribution $\{P_i(da_i|a^{i-1}, b^{i-1}) : i=0, \ldots, n\} \in {\cal P}_{[0,n]}$.

The characterization of feedback capacity $C_{A^\infty \rar B^\infty}^{FB}(\kappa)$, is investigated as a consequence of the following definition of FTFI capacity characterization.  \\

\begin{definition} (Extremum problem with feedback)\\
\label{def-gsub2sc}
Given  any  channel  distribution from the class ${\cal C}_{[0,n]}$, 
find the {\it Information Structure } of the optimal channel input distribution $\big\{P_i(da_i|a^{i-1}, b^{i-1}): i=0, \ldots, n\big\}  \in  {\cal P}_{[0,n]}(\kappa) $ (assuming it exists) of the extremum problem defined by
%%% Remove Version_14
\begin{comment}
\begin{align}
C_{A^n \rar B^n}^{FB} \tri \sup_{ \big\{P_i(da_i|a^{i-1}, b^{i-1}): i=0, \ldots,n \big\} \in  {\cal P}_{[0,n]} } I(A^n\rar B^n), \hst   I(A^n \rar B^n) =(\ref{CIS_6}). \label{prob2}
\end{align}
When an  transmission cost constraint is imposed the extremum problem is defined by 
 \begin{align}
 C_{A^n \rar B^n}^{FB}(\kappa) \tri \sup_{\big\{P_i(da_i|a^{i-1}, b^{i-1}): i=0,\ldots, n\big\} \in {\cal P}_{[0,n]}(\kappa) } I(A^n\rar B^n).  \label{prob2tc}
\end{align}
\end{comment}
\begin{align}
 C_{A^n \rar B^n}^{FB}(\kappa) \tri \sup_{\big\{P_i(da_i|a^{i-1}, b^{i-1}): i=0,\ldots, n\big\} \in {\cal P}_{[0,n]}(\kappa) } I(A^n\rar B^n),  \hst   I(A^n \rar B^n) =(\ref{CIS_6}).  \label{prob2tc}
 \end{align}
If no transmission cost is imposed the optimization in (\ref{prob2tc}) is carried out  over  ${\cal P}_{[0,n]}$,  and  $C_{A^n \rar B^n}^{FB}(\kappa)$ is replaced by $C_{A^n \rar B^n}^{FB}$.  
\end{definition}

Clearly, for each time $i$ the largest information structure of the channel input conditional distribution of extremum  problem   $C_{A^n \rar B^n}^{FB}(\kappa)$   is ${\cal I}_i^P \tri \{a^{i-1}, b^{i-1}\}, i=1, \ldots, n, {\cal I}_0^P \tri \{b^{-1}\}$. 
 
\begin{comment}

In communication applications, if the joint process $\{(A_i, B_i): i=0, 1, \ldots, \}$ is stationary ergodic or directed information stability holds, then the per unit time limiting versions of Definition~\ref{def-gsub2sc},  that is, $\liminf_{n \longrightarrow \infty} \frac{1}{n+1}C_{A^n \rar B^n}^{FB}$, has an operational meaning in terms of the supremum of all achievable rates.  However,  in general,   the optimal channel input distribution of the per unit time limiting versions of Definition~\ref{def-gsub2sc} depends on the initial data $b^{-1}$, and the supremum of all achievable rates is given by .....For indecomposable channels .... 

\end{comment}

{\bf Variational Equality of Directed Information.} Often, in extremum problems of information theory, upper or lower bounds are introduced and then shown to be achievable over specific sets of distributions, such as, in entropy maximization with and without constraints, etc. In any extremum problem of capacity with feedback (resp. without feedback), identifying achievable upper bounds on  directed information $I(A^n \rar B^n)$ (resp. mutual information $I(A^n; B^n)$) is not an easy task. However,  by invoking  a variational equality  of directed information \cite{charalambous-stavrou2013aa} (resp. mutual information \cite{blahut1972}), such achievable upper bounds can be identified. 

%Specifically, consider a Unit Memory Channel Output Channel defined by $\{Q_i(db_i|b_{i-1}, a_i): i=0, \ldots, n\}$. If the densities exist,    
%then    directed information can be expressed as follows.
%\bea
%I(A^n \rar B^n)= \sum_{i=0}^n H(B_i|B^{i-1})- \sum_{i=0}^n H(B_i|B_{i-1}, A_i). \label{ENTR_1}
%\eea 
%However,  since the second right hand term in (\ref{ENTR_1}) depends on the channel input distribution $\{P_i(da_i|a^{i-1}, b^{i-1}): i=0, \ldots, n\}$, one cannot use the fact that conditioning reduces entropy, to obtain an upper bound by using the inequality $\sum_{i=0}^n H(B_i|B^{i-1})\leq \sum_{i=0}^n H(B_i|B_{i-1})$. \\
%On the other hand, if the channel distribution is induced by an additive channels defined on finite or abtract alphabet spaces (with transmission cost constraints), such as, 
%\bea
% B_i=B_{i-1} + A_i+ V_i, \hso i=0, \ldots, n, \hso {\bf P}_{V_i|V^{i-1}, A^i}={\bf P}_{V_i},\hst i=0, \ldots, n
%\eea 
%then    directed information satisfies the following inequality. 
%\bea
%I(A^n \rar B^n)= \sum_{i=0}^n H(B_i|B^{i-1})- \sum_{i=0}^n H(V_i)\leq \sum_{i=0}^n H(B_i|B_{i-1})- \sum_{i=0}^n H(V_i) \label{ENTR_3}
%\eea 
%where the upper bound is valid because the term $\sum_{i=0}^n H(V_i)$ is independent of the channel input distribution. \\
%However, by  using the variational equality of directed information one can identify upper bounds for (\ref{ENTR_1}), similar to (\ref{ENTR_3}),   which are achievable. 

Indeed, Step 2 of the proposed  Two Step procedure (discussed in Section~\ref{intro}) is based on utilizing the variation equality of directed information, 
 given in the next theorem.\\

\begin{theorem}(Variational Equality-Theorem~IV.1 in \cite{charalambous-stavrou2013aa}.)\\
\label{thm-var} 
Given a channel input conditional distribution $\big\{P_i(da_i|a^{i-1},b^{i-1}): i=0,1, \ldots, n\big\} \in {\cal P}_{[0,n]}$, a channel distribution  $\big\{Q_i(db_i|b^{i-1},a^i) : i=0,1, \ldots, n\big\} \in {\cal C}_{[0,n]} $, and the initial distribution $\mu(db^{i-1})$,  define the  corresponding  joint and  marginal distributions    by  (\ref{CIS_2gde2new})-(\ref{CIS_3a}).\\
Let    $\big\{V_i(db_i|b^{i-1}) \in {\cal M}({\mathbb B}_i): i=0, \ldots, n\big\}$ be any arbitrary distribution. \\
 Then the following variational equality holds.
\begin{align}
I(A^n\rightarrow B^n) =\inf_{  \big\{ V_i(db_i| b^{i-1} )\in {\cal M}({\mb  B}_{i}): i=0,1, \ldots, n\big\}}\sum_{i=0}^n  \int_{{\mb A}^i\times {\mb  B}^i}   \log \Big( \frac{dQ_i(\cdot|b^{i-1},a^i) }{ dV_i(\cdot|b^{i-1})}(b_i)\Big)   {\bf P}^P(da^i,db^i)     \label{BAM52a}
\end{align}
and the infimum in  (\ref{BAM52a}) is achieved at $V_i(db_i|b^{i-1})= {\Pi}_i^P(db_i|b^{i-1}), i=0, \ldots, n$  given by (\ref{CIS_3g})-(\ref{CIS_3a}).

\end{theorem}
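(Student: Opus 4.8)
The plan is to establish the equality by proving a single sharp identity from which both inequalities follow: for \emph{every} admissible choice $\{V_i\}$, the objective being minimized on the right-hand side of (\ref{BAM52a}), call it $J(V)$, exceeds $I(A^n\rar B^n)$ by a sum of averaged conditional relative entropies, each nonnegative and each vanishing precisely when $V_i=\Pi_i^P$. First I would subtract the expression (\ref{CIS_6}) for $I(A^n\rar B^n)$ from $J(V)$ term by term; since the two share the common factor $\log dQ_i(\cdot|b^{i-1},a^i)(b_i)$, it cancels, leaving
\begin{align}
J(V)-I(A^n\rar B^n)=\sum_{i=0}^n \int_{{\mb A}^i\times{\mb B}^i}\log\Big(\frac{d\Pi_i^P(\cdot|b^{i-1})}{dV_i(\cdot|b^{i-1})}(b_i)\Big){\bf P}^P(da^i,db^i). \nonumber
\end{align}

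The crucial observation is that the integrand of the $i$-th term depends on $(a^i,b^i)$ only through $b^i=(b^{i-1},b_i)$. Hence I would marginalize out $a^i$, replacing ${\bf P}^P(da^i,db^i)$ by the output marginal $\Pi_{0,i}^P(db^i)$ of (\ref{MARGINAL}), and then apply the chain-rule factorization $\Pi_{0,i}^P(db^i)=\Pi_i^P(db_i|b^{i-1})\otimes\Pi_{0,i-1}^P(db^{i-1})$. Each term then reads
\begin{align}
\int_{{\mb B}^{i-1}}\Big(\int_{{\mb B}_i}\log\Big(\frac{d\Pi_i^P(\cdot|b^{i-1})}{dV_i(\cdot|b^{i-1})}(b_i)\Big)\Pi_i^P(db_i|b^{i-1})\Big)\Pi_{0,i-1}^P(db^{i-1}). \nonumber
\end{align}
The inner integral is exactly the relative entropy of $\Pi_i^P(\cdot|b^{i-1})$ with respect to $V_i(\cdot|b^{i-1})$, which by the Gibbs inequality (Jensen applied to the convex function $-\log$) is nonnegative and equals zero if and only if $V_i(\cdot|b^{i-1})=\Pi_i^P(\cdot|b^{i-1})$ for ${\bf P}^P$-almost all $b^{i-1}$. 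Consequently $J(V)-I(A^n\rar B^n)\ge 0$ for all $V$, with equality attained by the choice $V_i=\Pi_i^P$, which establishes simultaneously that the infimum equals $I(A^n\rar B^n)$ and that it is achieved at the stated minimizer.

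The main obstacle is technical rather than conceptual: because the alphabets are general Polish spaces, everything above must be phrased through Radon--Nikodym derivatives, so I would first record the absolute-continuity hypotheses $Q_i(\cdot|b^{i-1},a^i)\ll V_i(\cdot|b^{i-1})$ and $Q_i(\cdot|b^{i-1},a^i)\ll\Pi_i^P(\cdot|b^{i-1})$ under which the densities in the first display exist, and handle the degenerate case $\Pi_i^P\not\ll V_i$ by noting that the corresponding relative entropy, hence $J(V)$, is then $+\infty$ so the inequality holds trivially. The delicate points are the existence and measurability of the regular conditional distributions $\Pi_i^P(\cdot|b^{i-1})$, guaranteed by the Polish-space assumption; the interchange of summation and integration together with the marginalization over $a^i$, which I would justify via Fubini using the factorization (\ref{CIS_2gg_new}) of the joint law and the defining formula (\ref{CIS_3a}) for $\Pi_i^P$; and the consistency of the competing versions of the Radon--Nikodym derivatives, which is where care is most needed to ensure that the cancellation producing the first display is legitimate.
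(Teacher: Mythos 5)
Your proposal is correct and follows essentially the same route the paper takes: the decomposition $J(V)=I(A^n\rar B^n)+\sum_{i=0}^n\int_{{\mb B}^i}\log\big(\tfrac{d\Pi_i^P(\cdot|b^{i-1})}{dV_i(\cdot|b^{i-1})}(b_i)\big)\Pi_{0,i}^P(db^i)$ is exactly the identity (\ref{com-ve}) the paper uses, with the residual recognized as a sum of averaged conditional relative entropies, nonnegative by Gibbs' inequality and vanishing precisely at $V_i=\Pi_i^P$. Your added care about absolute continuity and the marginalization over $a^i$ is a reasonable tightening but does not change the argument.
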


The implications  of  variational equality (\ref{BAM52a}) are illustrated via  the following  identity. \\
For any arbitrary distribution $\big\{V_i(db_i|b^{i-1})\in {\cal M}({\mathbb B}_i): i=0, \ldots, n\big\}$, the following identities hold.
\begin{align}
\sum_{i=0}^n {\bf E}^{P} \Big\{  \log \Big( \frac{d Q_i(\cdot|B^{i-1},A^i) }{ d V_i(\cdot|B^{i-1})}(B_i)\Big)\Big\} =& \sum_{i=0}^n {\bf E}^P \Big\{  \log \Big( \frac{dQ_i(\cdot|B^{i-1},A^i) }{d{\Pi}_i^P(\cdot|B^{i-1})}(B_i)\Big)\Big\} +  \sum_{i=0}^n \int_{{\mb  B}^i}\log\left(\frac{ d{\Pi}_i^P(\cdot|b^{i-1})}{ V_i(\cdot|b^{i-1})}(b_i)\right) {\Pi}_{0,i}^P(db^i)\\
 =& I(A^n \rar B^n)\: + \: \sum_{i=0}^n \int_{{\mb  B}^i}\log\left(\frac{ d{\Pi}_i^P(\cdot|b^{i-1})}{ V_i(\cdot|b^{i-1})}(b_i)\right) {\Pi}_{0,i}^P(db^i).\label{com-ve}
\end{align}
Note that  the second right hand side term in (\ref{com-ve}) is  the sum of relative entropy terms between the marginal distribution  ${\Pi}_i^P(db_i|b^{i-1})$ defined by  the joint distribution ${\bf P}^P(da^i, db^i)$ (i.e., the correct conditional channel output distribution)   and any arbitrary distribution  $V_i(db_i|b^{i-1})$ (i.e.,  incorrect channel output conditional distribution)  for $i=0,1, \ldots, n$.  \\
Identity  (\ref{com-ve})  implies  the minimization of its left hand side over any arbitrary channel output distribution $\big\{V_i(db_i|b^{i-1})\in{\cal M}({\mb  B}_i): i=0, \ldots, n\big\}$ occurs at $V_i(db_i|b^{i-1})= {\Pi}_i^P(db_i|b^{i-1}), i=0, \ldots, n$, i.e., when the relative entropy terms are zero, giving  (\ref{BAM52a}). 

The point to be made regarding the above variational equality is that the characterization of the FTFI capacity can be transformed, for any arbitrary distribution $\big\{V_i(db_i|b^{i-1})\in {\cal M}({\mathbb B}_i): i=0, \ldots, n\big\}$, to the sequential equivalent $\sup \inf\{\cdot\}$ problem 
\begin{align}
C_{A^n \rar B^n}^{FB}(\kappa) = \sup_{    {\cal P}_{[0,n]}(\kappa) } \inf_{  \big\{ V_i(db_i| b^{i-1} )\in {\cal M}({\mb  B}_{i}): i=0,1, \ldots, n\big\}}\sum_{i=0}^n {\bf E}^P \Big\{ \log \Big( \frac{d Q_i(\cdot|B^{i-1},A^i) }{ dV_i(\cdot|B^{i-1})}(B_i)\Big) \Big\}  \label{VE_UP}
\end{align} 
Then  by removing the infimum in (\ref{VE_UP}) an upper bound is identified, which together with stochastic optimal control techniques,  is shown to be achievable over specific subsets of the set of all channel input conditional distributions satisfying conditional independence $\big\{ {\bf P}(da_i|{\cal I}_i^P), {\cal I}_i^P \subseteq \{a^{i-1}, b^{i-1}\}: i=0, \ldots, n\big\} \subseteq \big\{P_i(da_i|a^{i-1}, b^{i-1}): i=0, \ldots, n\}$ and  the average transmission cost constraint. In fact, the characterizations of the FTFI capacity formulas for various channels and transmission cost functions discussed in this paper utilize this observation.

\section{Characterization of FTFI   Capacity }
 \label{randomized}
%%% Remove Version_14
\begin{comment}
 In this section the information structures of optimal channel input distributions, with feedback with and without transmission cost,  maximizing $I(A^n \rar B^n)$  are derived, and  the corresponding characterization of the FTFI capacity formulaes are obtained,  for channel distributions of Classes A, B,  and  transmission cost functions of Classes A, B.  
\end{comment}
{\bf The Two-Step Procedure.} The identification of the information structures of the optimal channel input conditional distributions and the corresponding characterizations of the FTFI capacity  $C_{A^n \rar  B^n}^{FB}$ and feedback capacity $C_{A^n \rar  B^n}^{FB}(\kappa)$, are determined by applying the following steps. 
\begin{description}
\item[\bf Step 1.] Apply stochastic optimal control techniques with relaxed or randomized  strategies (conditional distributions) \cite{hernandezlerma-lasserre1996,ahmed-charalambous2012a,charalambous-elliott1998}, to show a certain joint process which generates the information structure of the channel input conditional distribution  is an extended Markov process. This step  implies the  optimal channel input distribution occurs in specific subsets $\overline{\cal P}_{[0,n]} \subset {\cal P}_{[0,n]}$ or  $\overline{\cal P}_{[0,n]}(\kappa) \subset {\cal P}_{[0,n]}(\kappa)$, which satisfy conditional independence.

\item[\bf Step 2.] Apply  variational equality of directed information given in Theorem \ref{thm-var} (\cite{charalambous-stavrou2013aa},  Theorem~I.V.1),  to  pay-off  $I(A^n \rar B^n)$, together with stochastic optimal control techniques,  to identify  upper bounds which are achievable over  specific subsets $\sr{\circ}{ {\cal P}}_{[0,n]} \subset  \overline{\cal P}_{[0,n]}$ or $\sr{\circ}{ {\cal P}}_{[0,n]}(\kappa) \subset  \overline{\cal P}_{[0,n]}(\kappa)$, which satisfy a further conditional independence.
\end{description}
For certain channel distributions and instantaneous transmission cost functions,  Step 1 is sufficient to identify the information structures of  channel input distributions (i.e., Class A channels and transmission cost functions),  and to characterize the FTFI capacity, while for others, Step 1 may serve as  an intermediate step prior to applying   Step 2. For example, if the channel distribution is of  limited memory with respect to channel outputs, i.e.,  of the Class B, by applying Step 2   an upper bound on the FTFI capacity is obtained, which together with stochastic optimal control techniques, it is shown to be achievable over channel input distributions with limited memory on channel outputs. \\
It is also possible to apply Steps 1 and 2 jointly; this will be illustrated in specific applications.  

Step 1 is  a generalization of equivalent methods often applied  in stochastic optimal Markov decision or control problems to show  that optimizing a   pay-off \cite{vanschuppen2010,kumar-varayia1986} over all possible non-Markov policies or strategies,  occurs in the smaller set of Markov policies. However, certain issues should be treated with caution, when stochastic optimal control techniques are applied in extremum problems of information theory. These are summarized in the next remark.\\

{\bf Comments on stochastic optimal control in relation to extremum problems of information theory.} 
In fully observable stochastic optimal control theory \cite{hernandezlerma-lasserre1996}, one is given a controlled process $\{X_i: i=0, \ldots, n\}$, often called the state process taking values in $\big\{{\mathbb X}_i: i=0, \ldots, n\big\}$,  affected by  a control process $\{U_i: i=0, \ldots,n \}$ taking values in $\big\{{\mathbb U}_i: i=0, \ldots, n\big\}$, and the corresponding  control object ${\cal P}_{[0,n]}^{CO}\tri   \big\{{\bf P}_{U_i|U^{i-1}, X^{i}}: i=0, \ldots, n\big\}$ and the controlled object ${\cal C}_{[0,n]}^{CO}\tri \big\{{\bf P}_{X_i|X^{i-1}, U^{i-1}}: i=0, \ldots, n\big\}$. \\
Often, the controlled object is Markov  conditional on the past control values, that is,  ${\bf P}_{X_i|X^{i-1}, U^{i-1}}={\bf P}_{X_i|X_{i-1}, U_{i-1}}-a.a. (x^{i-1}, u^{i-1}), i=0, \ldots, n$. Such Markov controlled objects are often  induced by  discrete recursions 
\bea
X_{i+1}=f_i(X_{i}, U_i, V_{i}), \hso X_0=x_0, \hso i=0, \ldots, n, \label{SOC_3}
\eea
 where $\{V_i: i=0, \ldots, n\}$ is an independent noise process taking values in $\big\{{\mathbb V}_i: i=0, \ldots, n\big\}$, independent of the initial state $X_0$. Denote the set of such Markov distributions  or controlled objects by  ${\cal C}_{[0,n]}^{CO-M}\tri \big\{{\bf P}_{X_i|X_{i-1}, U_{i-1}}: i=0, \ldots, n\big\}$.\\
In stochastic optimal control theory,  one is given a sample pay-off function, often of additive form, defined by
\bea 
l: {\mathbb X}^n \times {\mathbb U}^n \longmapsto (-\infty, \infty], \hst  l(x^n, u^n)\tri  \sum_{i=0}^n\ell(u_i,x_{i}) \label{SOC_4}
\eea 
  where the functions $\big\{\ell_i(\cdot, \cdot): i=0, \ldots, n\}$ are fixed and independent of the control object $\big\{{\bf P}_{U_i|U^{i-1}, X^{i}}: i=0, \ldots, n\big\}$.\\
Given the Markov distribution ${\bf P}_{X_i|X_{i-1}, U_{i-1}}: i=0, \ldots, n$, the objective is to optimize the average of the sample path pay-off over all non-Markov strategies in ${\cal P}_{[0,n]}^{CO}$, i.e.,
\bea
J_{0,n}^F( {\bf P}_{U_i|U^{i-1}, X^{i}}^*, i=0, \ldots, n) \tri  \inf_{{\cal P}_{0,n]}^{CO}} {\bf E}\Big\{  \sum_{i=0}^n\ell(U_i,X_{i})\Big\}. \label{SOC_1}
\eea
Two features of stochastic optimal control which are distinct from  any extremum problem of directed information are discussed below.\\
{\bf Feature 1.} Stochastic optimal control formulations pre-suppose, that additional state variables are introduced, prior to arriving at the Markov controlled object $ \big\{{\bf P}_{X_i|X_{i-1}, U_{i-1}}: i=0, \ldots, n\big\}$ or the  discrete recursion (\ref{SOC_3}), and   the pay-off (\ref{SOC_4}). Specifically, the final formulation   (\ref{SOC_1}), pre-supposes the Markov controlled object  $ \big\{{\bf P}_{X_i|X_{i-1}, U_{i-1}}: i=0, \ldots, n\big\}$ is obtained as follows. The state variables which constitute the complete state process $\{X_i: i=0, \ldots, n\}$   may be due to a  noise process which was not independent and converted into an independent  noise process via state augmentation, and/or any dependence on past information, and converted to a Markov controlled object  via state augmentation, and  due to a non-single letter dependence, for each $i$, of the  the sample pay-off functions $\ell_i(\cdot, \cdot)$, which was converted into single letter dependence, i.e,  $(x_i,u_i)$, by additional  state augmentation, so that the controlled object is Markov. Such examples are given in \cite{bertsekas1995} for deterministic or non-randomized strategies, defined by 
\bea
{\cal E}_{[0,n]}^{CO} \tri \big\{e_i: {\mathbb U}^{i-1}\times {\mathbb X}^i \longmapsto {\mathbb U}_i, \hso i=0, \ldots, n: \hso   u_i=e_i(u^{i-1}, x^i), i=0, \ldots, n\big\}.
\eea
In view of the Markovian property of the controlled object, i.e., given by ${\bf P}_{X_i|X_{i-1}, U_{i-1}}, i=0, \ldots, n$, then  the optimization in (\ref{SOC_1}) reduces to the following optimization problem over Markov strategies. 
\bea
J_{0,n}^F({\bf P}_{U_i|U^{i-1}, X^{i}}^*, i=0, \ldots, n)=J_{0,n}^M({\bf P}_{U_i|X_{i}}^*, i=0, \ldots, n) \tri  \inf_{ {\bf P}_{U_i|X_{i}}, i=0, \ldots, n} {\bf E}\Big\{  \sum_{i=0}^n\ell(U_i,X_{i})\Big\}. \label{SOC_2}
\eea
This further implies that the control process $\{X_i: i=0, \ldots, n\}$ is Markov, i.e., it satisfies ${\bf P}_{X_i|X^{i-1}}= {\bf P}_{X_i|X_{i-1}}, i=0, \ldots, n$. On the other hand, if ${\bf P}_{X_i|X^{i-1}}= {\bf P}_{X_i|X_{i-1}}, i=0, \ldots, n$ then (\ref{SOC_2}) holds.

{\bf Feature 2.} Given a general controlled object $\big\{{\bf P}_{X_i|X^{i-1}, U^{i-1}}: i=0, \ldots, n\big\}$ non necessarily Markov, one of the fundamental results of classical stochastic optimal control is that optimizing the pay-off ${\bf E}\Big\{  \sum_{i=0}^n\ell(U_i,X_{i})\Big\}$ over randomized strategies ${\cal P}_{[0,n]}^{CO}$ does not incur a better performance than optimizing it over non-randomized strategies ${\cal E}_{[0,n]}^{CO}$, i.e., 
\begin{align}
J_{0,n}^F( {\bf P}_{U_i|U^{i-1}, X^{i}}^*, i=0, \ldots, n)= &  \inf_{{\cal E}_{[0,n]}^{CO} } {\bf E}\Big\{  \sum_{i=0}^n\ell(U_i,X_{i})\Big\} \label{SOC_5}\\
=& \inf_{g_i(X_i):\; i=0, \ldots, n } {\bf E}^g\Big\{  \sum_{i=0}^n\ell(U_i,X_{i})\Big\} \hst \mbox{if} \hso  {\bf P}_{X_i|X^{i-1}, U^{i-1}}={\bf P}_{X_i|X_{i-1}, U_{i-1}}-a.a., i=0, \ldots, n. \label{SOC_6}
\end{align}

%\end{remark}

\noi Step 2, i.e., the application of variational equality,  discussed in Section~\ref{intro},  is specific to  information theoretic pay-off functionals and does not have a counterpart to any of the common pay-off functionals of stochastic optimal control problems \cite{vanschuppen2010,kumar-varayia1986}. This is due to the fact that, unlike stochastic optimal control problems (discussed above, feature (1)), any extremum problem of feedback capacity involves directed information density $\iota_{A^n \rar B^n}(a^n,b^n)\equiv \iota_{A^n \rar B^n}^P(a^n,b^n)$ defined by (\ref{DI_Den_1}), which is not a fixed functional, but depends on the channel output transition probability distribution $\{{\bf P}_{B_i|B^{i-1}}\equiv \Pi_i^P(db_i|b^{i-1}): i=0, \ldots, n\}$ defined by (\ref{CIS_3a}), which depends on the channel distribution, and the channel input distribution chosen to maximize directed information $I(A^n \rar B^n)$. The nonlinear dependence of the directed  information density makes extremum problems of directed information, distinct from extremum problems of classical  stochastic optimal control.\\
This implies step 2 or more specifically,  the  variational equalities of directed information and mutual information, are key features of information theoretic pay-off functionals. Often, these variational equalities   need to be incorporated into any extremum problems  of  deriving  achievable  bounds, such as, in extremum problems of  feedback capacity and capacity without feedback, much as, it is often done when deriving achievable bounds, based on the entropy maximizing properties of distributions (i.e., Gaussian distributions). \\
Feature (2), i.e,  (\ref{SOC_5}) and (\ref{SOC_6}), do not have  counters part in any extremum problem of directed information or mutual information, that is, optimizing directed information over channel input distributions is not equivalent to optimizing directed information over deterministic non-randomized strategies. In fact, by definition the  sequence of codes defined by (\ref{block-code-nf-non}) are randomized strategies, and if these are specialized to non-randomized strategies, i.e., by removing their dependence on the randomly generated messages, $W\in {\cal M}_n$, then directed information is zero, i.e, $I(A^n \rar B^n)=0$ if $a_i=e_i(a^{i-1}, b^{i-1}), i=0, \ldots, n$.

\subsection{Channels Class A and Transmission Costs Class A or B}
\label{class_A}
First, the preliminary steps of the derivation of the characterization of FTFI capacity for any  channel distributions of Class A, (\ref{CD_C1}), without transmission cost are introduced, to gain insight into the derivations of information structures, without the need to introduce excessive notation. The analogies and differences between  stochastic optimal control theory and extremum problems of directed information, as discussed above, are  made explicit, throughout the derivations, when tools from stochastic optimal control are applied to the directed information density pay-off.  
%because this will also be used to derive the information structures of channels with limited memory, i.e.,  channels of Class B. \\
%Note that unlike previous work, no assumption is imposed on having a one-to-one mapping between source symbols and channel input symbols (as in \cite{yang-kavcic-tatikonda2005,yang-kavcic-tatikonda2007}), and furthermore, the channels are not necessarily stationary or additive noise channels (Gaussian in \cite{cover-pombra1989}, \cite{kim2010}). 

Introduce the following definition of channel input distributions satisfying conditional independence.\\

\begin{definition}(Channel input distributions for Class A channels and transmission cost functions)\\
\label{def_CLASS_A}
Define the restricted class of channel input distributions $\overline{\cal P}^{A}_{[0,n]} \subset {\cal P}_{[0,n]}$ satisfying conditional independence  by
\begin{align}
\overline{\cal P}_{[0,n]}^{A}\tri \Big\{ \big\{  &P_i(da_i| a^{i-1}, b^{i-1}): i=0,1, \ldots, n\big\} \in  {\cal P}_{[0,n]}    :  \nonumber \\
& P_i(d{a}_i|a^{i-1}, b^{i-1})={\bf P}_{A_i|B^{i-1}}(da_i|b^{i-1}) \equiv \pi_i(d{a}_i|b^{i-1})-a.a.
  (a^{i-1}, b^{i-1}),i=0,1,\ldots,n\Big\} . \label{rest_set_1}
\end{align}
Similarly, for transmission cost functions ${\gamma}_i(T^ia^n, T^ib^{n-1})=\gamma_i^{A}(a_i, b^{i-1})$ or ${\gamma}_i(T^ia^n, T^ib^{n-1})=\gamma_i^{B.K}(a_i, b_{i-K}^{i-1}),  i=0, \ldots, n$,   define 
\bea
\overline{\cal P}_{[0,n]}^{A}(\kappa)\tri \overline{\cal P}_{[0,n]}^{A} \bigcap  {\cal P}_{[0,n]}(\kappa).
\eea
\end{definition}

\ \

From the  definition of directed information $I(A^n \rar B^n)$ given by (\ref{CIS_6}), and utilizing the channel distribution  (\ref{CD_C1}), the FTFI capacity is defined by 
\begin{align}
{C}_{A^n \rar B^n}^{FB,A} \tri   & \sup_{\big\{P_i(da_i| a^{i-1 }, b^{i-1})  :i=0,\ldots,n\big\} \in {\cal P}_{[0,n]} }
  \sum_{i=0}^n {\bf E}^{{ P}} \Big\{  \log \Big( \frac{dQ_i(\cdot|B^{i-1},A_i) }{d\Pi_i^{{ P} }(\cdot|B^{i-1})}(B_i)\Big)\Big\} \label{CIS_17}
\end{align}
where the channel output transition probability defined by   (\ref{CIS_3a}),  is given by the following expressions.
 \begin{align}
\Pi_i^{ P} (db_i| b^{i-1})=& \int_{{\mb A}^i} Q_i(db_i|b^{i-1}, a^i)\otimes {\bf P}^{ P}(da^{i}|b^{i-1}), \hso  i=0, \ldots, n \\
 =&\int_{{\mb A}^i} Q_i(db_i|b^{i-1}, a_i)\otimes P_i(da_i|a^{i-1}, b^{i-1})\otimes {\bf P}^{ P}(da^{i-1}|b^{i-1}) \label{CIS_9}\\
\sr{(\alpha)}{=}&  \Pi_i^{\pi_i}(db_i| b^{i-1})  \tri  \int_{  {\mb A}_i } Q_i(db_i|b^{i-1}, a_i) \otimes         \pi_i(da_i |b ^{i-1})  \label{CIS_10a} \\
& \hst  \hso \mbox{if}  \hso P_i(da_i|a^{i-1}, b^{i-1}) ={\bf P}(da_i| b^{i-1})\equiv  {\pi}_i(da_i | b^{i-1})-a.a.(a^{i-1}, b^{i-1})\hso  i=0, \ldots, n  \label{CIS_10}  %\\
%=& \Pi_i^{\pi_i}(db_i| b^{i-1}), \hso i=0,\ldots, n \label{CIS_11}
\end{align}
Note that identity {\it $(\alpha)$ holds if it can be shown that  conditional independence (\ref{CIS_10}) holds} for any candidate $\big\{P_i(da_i| a^{i-1}, b^{i-1}): i=0,1, \ldots, n\big\} \in {\cal P}_{[0,n]}$ maximizing $I(A^n \rar B^n)$; the superscript notation, $\Pi_i^{\pi_i}(db_i| b^{i-1})$, indicates the dependence of $\Pi_i^P(db_i| b^{i-1})$  on conditional distribution ${\bf P}(da_i|b^{i-1})\equiv \pi_i(da_i|b^{i-1})$ (instead on $\{P_j(da_j|a^{j-1}, b^{j-1}): j=0,1,\ldots, i\}$), for $i=1,\ldots, n$.\\
It is important to note that one cannot  assume $\Pi_i^{ P} (db_i| b^{i-1})=  \int_{  {\mb A}_i } Q_i(db_i|b^{i-1}, a_i) \otimes         \pi_i(da_i |b ^{i-1})\equiv \Pi_i^{\pi_i}(db_i|b^{i-1})$, that is, (\ref{CIS_9}) is given by (\ref{CIS_10a})  without proving that such restriction of conditional distributions {\it is a property of the  channel input distribution which maximizes directed information} $I(A^n \rar B^n)$, because the marginal distribution $\Pi^P(db^n)$ is uniquely defined from the joint distribution ${\bf P}^P(da^n , db^n)=\otimes_{i=0}^n \Big(Q_i(db_i|b^{i-1},a_i)\otimes P_i(da_i|a^{i-1}, b^{i-1})\Big)$. The derivation of  Theorem~1 in \cite{yang-kavcic-tatikonda2005} and Theorem~1 in \cite{yang-kavcic-tatikonda2007} for the problems considered by the authors, should  be read with caution, to account for  the above feature, in order  to show the supremum over all channel input conditional distributions occurs in the smaller set, satisfying a conditional independence condition, which is  analogous to  (\ref{CIS_10}).   

{\bf Suppose (\ref{CIS_10}) holds} (its validity is shown in Theorem~\ref{thm-ISR}). Then the expectation ${\bf E}^{ P}\{\cdot\}$ in (\ref{CIS_17}) with respect to the joint distribution simplifies as follows.   
\begin{align}
{\bf P}^{ P}(da_i, db^i) =&{\bf P}^{P}(da_i, db^i) =  Q_i(db_i|b^{i-1},a_i) \otimes {\bf P}(da_i| b^{i-1}) \otimes {\bf P}^{P}(db^{i-1}), \hso  i=0, \ldots, n  \label{CIS_13_G}  \\
=&{\bf P}^{ \pi}(da_i, db^i)   \hso \mbox{if (\ref{CIS_10}) holds}, \hso  i=0,1, \ldots, n,    \label{CIS_13_GG} \\
=& Q_i(db_i|b^{i-1},a_i) \otimes \pi_i(da_i| b^{i-1}) \otimes \Pi_{0,i-1}^{\pi_0, \ldots, \pi_{i-1}}(db^{i-1}), \hso i=0, \ldots, n, \label{CIS_13} \\
\Pi_{0,i-1}^{\pi_0, \ldots, \pi_{i-1}}(db^{i-1})=& \Pi_{i-1}^{\pi_{i-1}}(db_{i-1}| b^{i-2}) \otimes  \Pi_{0, i-2}^{\pi_0, \ldots, \pi_{i-2}}(db^{i-2}), \hso  i=0, \ldots, n \label{CIS_14}
\end{align}
where the superscript notation, ${\bf P}^{\pi}(da_i, b^i)$, indicates the dependence of joint distribution ${\bf P}^P(da_i, b^i)$ on $\{\pi_j(da_j|b^{j-1}): j=0, 1, \ldots, i\}$, for $i=0,\ldots, n$. Clearly, if   (\ref{CIS_10a}) holds, for each $i$, the controlled  conditional distribution-controlled object,    $\Pi_i^{ P}(db_i| b^{i-1})=\Pi_i^{\pi_i}(db_i|b^{i-1})$, depends on the  channel distribution   $Q_i(db_i|b^{i-1},a_i)$ and the control conditional  distribution-control object,  
${\pi}_i(da_i | b^{i-1})$, for $i=0,1,\ldots, n$.\\
Thus, the following holds.
\begin{itemize}
\item {\bf Channel  Class A.1, (\ref{CD_C1}):} If the maximizing channel input conditional distribution satisfies conditional independence   $P_i(da_i|a^{i-1}, b^{i-1})={\pi}_i(da_i | b^{i-1})-a.a.(a^{i-1}, b^{i-1}), i=0, \ldots, n$, then  directed information (\ref{CIS_6}) is a functional of \\ $\{Q_i(db_i| b^{i-1}, a_i), \pi_i(da_i|b^{i-1}): i=0, \ldots, n\}$ and it is given by
  \begin{align}
I(A^n\rar B^n) &= \sum_{i=0}^n I(A_i;B_i|B^{i-1}) =\sum_{i=0}^n {\bf E}^{\pi} \Big\{  \log \Big( \frac{dQ_i(\cdot|B^{i-1},A_i) }{d\Pi_i^{\pi_i}(\cdot|B^{i-1})}(B_i)\Big)\Big\} \label{CIS_15}\\ 
&=\sum_{i=0}^n \int \log \Big( \frac{dQ_i(\cdot|b^{i-1},a_i) }{d\Pi_i^{\pi_i}(\cdot|b^{i-1})}(b_i)\Big) Q_i(db_i|b^{i-1},a_i) \otimes \pi_i(da_i| b^{i-1}) \otimes \Pi_{0,i-1}^{\pi_0, \ldots, \pi_{i-1}}(db^{i-1})  \label{CIS_15_new}\\ 
&\equiv  {\mathbb I}_{A^n\rar B^n}(\pi_i, Q_i: i=0, \ldots, n) \label{CIS_16}
\end{align}
where  $ {\bf E}^{\pi}\{ \cdot\}$ indicates that the joint distribution over which expectation is taken depends on  the sequence of conditional distributions  $\{\pi_j(da_j|b^{j-1}): j=0,1, \ldots,i\}$, for  $i=0, \ldots, n$. 
\end{itemize}

\noi By (\ref{CIS_15}), since the expectation is taken with respect to joint distribution (\ref{CIS_13}), the distribution  $\{\pi_i(da_i| b^{i-1}):i=0,\ldots,n\}$ is  indeed the
control object (conditional distribution), chosen to control the conditional distribution  of the channel output process,  $\{\Pi_i^{\pi_i}(db_i| b^{i-1}) :i=0,1,\ldots,n\}$. By analogy with stochastic optimal control with randomized strategies, for each $i$, the conditional distribution  $\Pi_i^{\pi_i}(db_i|b^{i-1})$ is  affected by the control object $\pi_i(da_i | b^{i-1})$, for $i=0,\ldots,n$, and this  is chosen  to influence  the pay-off  (\ref{CIS_16}),  which is a  functional of $\{\pi_i(da_i| b^{i-1}):i=0,\ldots,n\}$ (since the channel is fixed).\\
Next, it is shown that (\ref{CIS_10a}) is indeed valid, i.e.,  the maximization of $I(A^n \rar B^n)$ over $\{  P_i(da_i|a^{i-1}, b^{i-1}):  i=0, \ldots, n\}$ occurs in the smaller set  
$\overline{\cal P}_{[0,n]}^A \subset {\cal P}_{[0,n]}$.
%, and that the   characterization of the FTFI capacity  is $\sup_{ \big\{\pi_i(da_i|b^{i-1})\in {\cal M}({\mathbb A}_i): i=0, \ldots, n\big\}} {\mathbb I}_{A^n\rar B^n}(\pi_i, Q_i: i=0, \ldots, n)$.
\\

 \begin{theorem}(Characterization of FTFI capacity for  channels of class A)\\
  \label{thm-ISR}
Suppose the channel distribution is of Class A defined by   (\ref{CD_C1}). Then the  following hold.\\
{\bf  Part A.} The maximization of $I(A^n \rar B^n)$ over  ${\cal P}_{[0,n]}$ occurs in $\overline{\cal P}_{[0,n]}^{A} \subset {\cal P}_{[0,n]} $ and  the characterization of FTFI capacity is given by the following expression.
\begin{align}
{C}_{A^n \rar B^n}^{FB,A} 
  = \sup_{\big\{\pi_i(da_i | b^{i-1}) \in {\cal M}({\mb A}_i) : i=0,\ldots,n\big\}}
  \sum_{i=0}^{n}{\bf E}^{\pi}\Big\{\log\Big(\frac{dQ_i(\cdot| B^{i-1}, A_i)}
  {d\Pi_i^{\pi_i}(\cdot| B^{i-1})}(B_i)\Big)\Big\} \label{CIS_18}
  \end{align}
  where  $\Pi_i^{\pi_i}(db_i| b^{i-1}) = \int_{ {\mb A}_i} Q_i(db_i |b^{i-1}, a_i) \otimes   {\pi}_i(da_i | b^{i-1})$ and the joint distribution over which ${\bf E}^{\pi}\{\cdot\}$ is taken is $\big\{  {\bf P}^{\pi}(da_i, db^i): i=0, \ldots, n\big\}$ defined by  (\ref{CIS_13}).\\
{\bf Part B.} Suppose the following two conditions hold.
\begin{align}
(a)& \hso {\gamma}_i(T^ia^n, T^ib^{n-1})=\gamma_i^{A}(a_i, b^{i-1}) \hso \mbox{or} \hso {\gamma}_i(T^ia^n, T^ib^{n-1})=\gamma_i^{B.K}(a_i, b_{i-K}^{i-1}), \hso i=0, \ldots, n,    \label{CIS_19} \\
(b)& \hso  C_{A^n \rar B^n}^{FB, A}(\kappa) \tri \sup_{\big\{P_i(da_i| a^{i-1 }, b^{i-1}) :i=0,\ldots,n\big\}\in  {\cal P}_{[0,n]}(\kappa) } I(A^{n}
\rar {B}^{n})   \label{ISDS_6b}   \\
=& \inf_{s \geq  0} \sup_{ \big\{ P_i(da_i|a^{i-1}, b^{i-1}): i=0,\ldots, n\big\}\in {\cal P}_{[0,n]}   } \Big\{ I(A^{n}
\rar {B}^{n})  - s \Big\{ {\bf E}^P\Big(c_{0,n}(A^n,B^{n-1})\Big)-\kappa(n+1) \Big\}\Big\}. \label{ISDS_6cc}
\end{align}   
The maximization of $I(A^{n}\rar {B}^{n})$ over channel input distributions  with transmission cost $\big\{P_i(da_i| a^{i-1 }, b^{i-1}) :i=1,\ldots,n\big\}\in  {\cal P}_{[0,n]}(\kappa)  $ occurs in $\overline{\cal P}_{[0,n]}^{A}(\kappa)$, %that is, $\big\{P_i(da_i| a^{i-1}, b^{i-1})=\pi_i(da_i|b^{i-1})-a.a. (a^{i-1}, b^{i-1}),  i=0, \ldots, n\big\}$, 
and the FTFI capacity is given by the following expression.
\begin{align}
{C}_{A^n \rar B^n}^{FB,A}(\kappa)   
  = \sup_{\pi_i(da_i | b^{i-1}) \in {\cal M}({\mb A}_i), i=0,\ldots,n:  \frac{1}{n+1} {\bf E}^{\pi}\Big\{c_{0,n}(A^n, B^{n-1})\Big\} \leq \kappa  }
  \sum_{i=0}^{n}{\bf E}^{\pi}\Big\{\log\Big(\frac{dQ_i(\cdot| B^{i-1}, A_i)}
  {d\Pi_i^{\pi_i}(\cdot| B^{i-1})}(B_i)\Big)\Big\}. \label{CIS_18_TC}
  \end{align}

\end{theorem}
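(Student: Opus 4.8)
The plan is to prove both parts by Step~1 alone, establishing that for a Class~A channel the directed information $I(A^n\rar B^n)$ and (for Part~B) the expected transmission cost are \emph{functionals of the conditional marginals} $\pi_i(da_i|b^{i-1})\tri{\bf P}^P_{A_i|B^{i-1}}(da_i|b^{i-1})$ alone, so that restricting the feedback to depend only on $b^{i-1}$ is costless. This is the directed-information analogue of Feature~1 of stochastic optimal control, namely the reduction of general strategies to ``Markov'' strategies on the controlled (output) process. First I would fix an arbitrary $\{P_i\}\in{\cal P}_{[0,n]}$; since all alphabets are Polish, the regular conditional distribution $\pi_i={\bf P}^P_{A_i|B^{i-1}}$ exists as an element of ${\cal K}({\mb A}_i|{\mb B}^{i-1})$. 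The Class~A hypothesis (\ref{CD_C1}) lets me integrate $a^{i-1}$ out of (\ref{CIS_9}), because $Q_i(db_i|b^{i-1},a^i)=Q_i(db_i|b^{i-1},a_i)$ does not see $a^{i-1}$; this yields the key identity $\Pi_i^P(db_i|b^{i-1})=\int_{{\mb A}_i}Q_i(db_i|b^{i-1},a_i)\otimes\pi_i(da_i|b^{i-1})=\Pi_i^{\pi_i}(db_i|b^{i-1})$, i.e.\ the controlled output kernel depends on the input distribution \emph{only} through $\pi_i$.

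Next I would establish, by induction on $i$, that the output marginals $\Pi_{0,i}^P$ depend on $\{P_j\}$ only through $\{\pi_0,\dots,\pi_i\}$: the base case is the fixed $\mu(db^{-1})$, and the inductive step composes the previous marginal with the transition $\Pi_i^{\pi_i}$ just derived. Because the integrand of (\ref{CIS_6}) depends only on $(a_i,b^i)$, I can then marginalize the expectation onto ${\bf P}^P(da_i,db^i)=Q_i(db_i|b^{i-1},a_i)\otimes\pi_i(da_i|b^{i-1})\otimes\Pi_{0,i-1}^P(db^{i-1})$, obtaining $I(A^n\rar B^n)={\mathbb I}_{A^n\rar B^n}(\pi_i,Q_i:i=0,\dots,n)$ exactly as in (\ref{CIS_15})--(\ref{CIS_16}). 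The decisive step is to run this argument \emph{in reverse}: given the $\{\pi_i\}$ extracted from $P$, the restricted kernel $\widetilde P_i(da_i|a^{i-1},b^{i-1})\tri\pi_i(da_i|b^{i-1})$ lies in $\overline{\cal P}^A_{[0,n]}$, and since $\pi_i$ ignores $a^{i-1}$ its own conditional marginal ${\bf P}^{\widetilde P}_{A_i|B^{i-1}}$ is again $\pi_i$. By the same induction the output marginals under $\widetilde P$ coincide with those under $P$, hence $I^{\widetilde P}(A^n\rar B^n)=I^P(A^n\rar B^n)$. As $\overline{\cal P}^A_{[0,n]}\subset{\cal P}_{[0,n]}$ and every $P$ has a payoff-equivalent $\widetilde P\in\overline{\cal P}^A_{[0,n]}$, the two suprema agree, which is Part~A, (\ref{CIS_18}).

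For Part~B I would observe that under hypothesis~(a) the cost accumulates as $c_{0,n}=\sum_i\gamma_i^{A}(a_i,b^{i-1})$ (or $\gamma_i^{B.K}(a_i,b_{i-K}^{i-1})$), whose integrand again depends only on $(a_i,b^{i-1})$; the same marginalization gives ${\bf E}^P\{c_{0,n}\}=\sum_i\int\gamma_i^{A}(a_i,b^{i-1})\,\pi_i(da_i|b^{i-1})\,\Pi_{0,i-1}^P(db^{i-1})$, a functional of $\{\pi_i\}$ alone. Thus the map $P\mapsto\widetilde P$ preserves \emph{both} the objective and the constraint, so it sends ${\cal P}_{[0,n]}(\kappa)$ into $\overline{\cal P}^A_{[0,n]}(\kappa)$ without changing the value, yielding (\ref{CIS_18_TC}). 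Hypothesis~(b) is the device that makes this rigorous through the stochastic-control route: the Lagrangian $I(A^n\rar B^n)-s\,({\bf E}^P\{c_{0,n}\}-\kappa(n+1))$ is, for each fixed $s\ge0$, an \emph{unconstrained} functional of $\{\pi_i\}$ to which Part~A applies verbatim, so the inner supremum over ${\cal P}_{[0,n]}$ equals that over $\overline{\cal P}^A_{[0,n]}$; the assumed strong duality (\ref{ISDS_6cc}) then transfers this equality back to the constrained problem.

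I expect the main obstacle to be exactly the point flagged just before the theorem around identity~$(\alpha)$: one may \emph{not} simply assert $\Pi_i^P=\Pi_i^{\pi_i}$, because $\Pi_i^P$ is pinned down by the joint law ${\bf P}^P(da^n,db^n)$, and the restriction could a priori alter the output marginals through which the pay-off is weighted. The crux is therefore the no-circularity induction showing that replacing $P$ by $\widetilde P$ leaves every $\Pi_{0,i}^P$ invariant; this is precisely what the Class~A structure buys, and it is what distinguishes the argument from the cautionary reading of \cite{yang-kavcic-tatikonda2005,yang-kavcic-tatikonda2007} noted in the text.
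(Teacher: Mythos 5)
Your proposal is correct and follows essentially the same route as the paper: both rest on the observation that, for a Class A channel, the output kernel $\Pi_i^{P}(db_i|b^{i-1})$, the directed-information pay-off, and the expected transmission cost depend on the channel input policy only through the conditional marginals $\pi_i(da_i|b^{i-1})={\bf P}^{P}_{A_i|B^{i-1}}$, so the supremum is attained on $\overline{\cal P}_{[0,n]}^{A}$ (respectively $\overline{\cal P}_{[0,n]}^{A}(\kappa)$). Your explicit two-directional reduction --- the induction showing the output marginals are invariant under the replacement $P\mapsto\widetilde{P}_i\tri\pi_i$, and the verification that $\widetilde{P}$ reproduces the same $\pi_i$ --- is a self-contained way of carrying out the step the paper delegates to stochastic optimal control (the Markov property of $S_i=B^{i-1}$ and the reduction to Markov policies), and it correctly resolves the circularity warning surrounding identity $(\alpha)$ in (\ref{CIS_10a}).
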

\begin{proof} The derivation is based on expressing directed information as a functional of the  channel input conditional distribution,  identifying the explicit dependence of the sample path pay-off on appropriate state variable, and then showing the controlled object, which is defined using the state variable is Markov, i.e., as discussed earlier. \\{ Part A.}  By the channel distribution assumption (\ref{CD_C1}), the following equalities are obtained. 
 \begin{align}
I(A^n\rar B^n) =& \sum_{i=0}^n {\bf E}^{{ P}} \Big\{  \log \Big( \frac{dQ_i(\cdot|B^{i-1},A^i) }{d\Pi_i^{{  P} }(\cdot|B^{i-1})}(B_i)\Big)\Big\} \label{CIS_5a}\\ 
 \sr{(\alpha)}{=}& \sum_{i=0}^n \int_{{\mb A}_{i} \times {\mb  B}^{i}}^{}   \log \Big( \frac{ dQ_i(\cdot|B^{i-1}, A_i) }{d\Pi_i^{{  P}}(\cdot|B^{i-1})}(B_i)\Big) {\bf P}^{ P}(dA_i, dB^i)  \label{CIS_6b} \\
 \sr{(\beta)}{=} &\sum_{i=0}^{n}  {\bf E}^{ P} \bigg\{    \log\bigg(\frac{dQ_i(\cdot| B^{i-1}, A_i)}
  {d\Pi_i^{ P}(\cdot | B^{i-1})}(B_i)\bigg)\bigg\}\label{CIS_6c}\\
 \sr{(\gamma)}{=}& \sum_{i=1}^{n}  {\bf E}^{ P} \bigg\{  {\bf E}^{ P}\Big\{\log \Big(\frac{dQ_i(\cdot| B^{i-1},A_i)}
  {d\Pi_i^{ P}(\cdot| B^{i-1})}(B_i)\Big)\bigg{|}A^i, B^{i-1}\Big\} \bigg\}\label{CIS_6d}   \\
  \sr{(\delta)}{=}& \sum_{i=0}^{n}  {\bf E}^{ P}\bigg\{   {\bf E}^{ P}\Big\{\log \Big(\frac{dQ_i(\cdot| B^{i-1}, A_i)}
  {\Pi_i^{ P}(\cdot| B^{i-1})}(B_i)\Big)\bigg{|}A_i, B^{i-1}\Big\}\bigg\}\label{CIS_6e}\\
\sr{(\epsilon)}{=}& \sum_{i=0}^{n} {\bf E}^{ P}\bigg\{  \ell_i^{ P}\Big(A_i, S_i\Big)\bigg\}, \hst S_j\tri B^{j-1}, \hso j=0, \ldots, n, \label{CIS_6f}\\
  \ell_j^{ P}\Big(a_j, s_j \Big)  \tri & \int_{ {\mb B}_j } \log \Big(\frac{dQ_j(\cdot| s_j, a_j)}
  {d\Pi_j^{ P}(\cdot| s_j)}(b_j)\Big) Q_j(db_j| s_j, a_j), \hso j=0, \ldots, n \label{BAM31}
  \end{align}
where \\
$(\alpha)$ is due to the  channel distribution assumption (\ref{CD_C1}); \\
$({\beta})$ is by definition;\\
$(\gamma)$ is due to a property of  expectation;\\
$(\delta)$ is due to the  channel distribution  (\ref{CD_C1});\\
$({\epsilon})$ is by definition of conditional expectation for the measurable function $\ell_i^{ P}(\cdot, \cdot)$  defined by (\ref{BAM31}). \\
 The validity of the claim that the optimal channel input conditional distribution belongs to the class $\overline{\cal P}_{[0,n]}^{A}$, establishing validity of the claimed identity (\ref{CIS_10a}), and consequently validity of (\ref{CIS_15})-(\ref{CIS_16}), is shown as follows.    
Since  for each $i$, the channel $Q_i(db_i|b^{i-1}, a_i)$ and the  pay-off function $\ell_i^{ P}(a_i, s_i)\equiv \ell_i^{ P}(a_i, b^{i-1})$ in (\ref{CIS_6f}) depend on $s_i \tri b^{i-1}$ for $i=0, 1, \ldots,n$, then  $\{S_i\tri B^{i-1}:  i=0, \ldots, n\}$ is the  controlled process,   control by the control process $\{A_i: i=0, \ldots, n\}$ (see discussion on stochastic optimal control). That is,  $\big\{{\bf P}(ds_{i+1}|s^i, a^i): i=0, \ldots, n-1\big\}$ is the controlled object. 
Next, we show  the controlled process $\{S_i: i=0, \ldots, n\}$ is Markov, i.e., the transition probabilities $\big\{{\bf P}(ds_{i+1}|s^{i}):i=0, \ldots, n-1\big\}$ are Markov, and   the maximization of directed information occurs in the set $\overline{\cal P}_{[0,n]}^{A}$, defined by (\ref{rest_set_1}).   By applying   Bayes' theorem and using the definition of the channel distribution,  the  following conditional independence  are easily shown. 
\begin{align}
&{\bf P}(ds_{i+1}|s^i, a^i)= {\bf P}(ds_{i+1}|s_i, a_i), \hso i=0, \ldots, n-1, \label{CO_M}  \\
&{\bf P}(ds_{i+1}|s^{i})=  {\bf P}(ds_{i+1}|s_i)=\int_{{\mb A}_i}  {\bf P}(ds_{i+1}| s_i, a_i) \otimes{\bf P}(da_i|s_i). \label{MT_Case_1_a}
\end{align}
In fact, since the controlled object is Markov, i.e., (\ref{CO_M}) holds, the statement of the theorem follows directly from  stochastic optimal control \cite{vanschuppen2010,hernandezlerma-lasserre1996}, (see discussion on stochastic optimal control). Nevertheless, we give the complete derivation.\\
In view of the above identities the Markov process $\{S_i: i=0, \ldots, n\}$ satisfies the following identity.
\begin{align}
{\bf P}^P(ds_{i+1})= \int_{{\mathbb B}^{i-1}\times {\mathbb A}_i} {\bf P}(ds_{i+1}|s_i, a_i)\otimes {\bf P}(da_i|s_i)\otimes {\bf P}^P(ds_i)\hso  \Longrightarrow \hso  {\bf P}^\pi(ds_{i+1})= \int_{{\mathbb B}^{i-1}\times {\mathbb A}_i}  {\bf P}(ds_{i+1}|s_i, a_i)\otimes \pi_i(da_i|s_i)\otimes {\bf P}^\pi(ds_i) \label{MV_1}
\end{align}
where ${\bf P}^P(ds_{i+1})= {\bf P}^{\pi_0, \ldots, \pi_i}(ds_{i+1})\equiv {\bf P}^{\pi}(ds_{i+1}) $ follows  by iterating the first equation in (\ref{MV_1}). 
Since, the process $\{S_i: i=0,1, \ldots, n\}$ is Markov with transition probability given by the right hand side of (\ref{MT_Case_1_a}), then for $i=0,\ldots, n-1$, the distribution  ${\bf P}(ds_{i+1}|s_i)$ is controlled  by the control object ${\bf P}(da_i|s_i)\equiv {\bf P}(da_i|b^{i-1})$.   Clearly, (\ref{MT_Case_1_a})  implies that any measurable function say, $\overline{\xi}(s_i)$ of $s_i=b^{i-1}$ is affected by the control object ${\bf P}(da_i|s_i)$, and hence by   (\ref{CIS_9}), then  $\{ \xi_{i}(s_i)\tri  \Pi_i^{P}(db_i|b^{i-1})\equiv     \Pi_i^{\pi_i}(db_i|b^{i-1}): i=0, \ldots, n\}$, and this  transition distribution  is controlled by  the control object  $\{{\pi}_i(da_i| b^{i-1}): i=0, \ldots, n\}$. Utilizing this in (\ref{BAM31}) and (\ref{CIS_6f}), the following is obtained.
\begin{align}
I(A^n\rar B^n) =  \sum_{i=0}^n \int_{{\mb A}_{i} \times {\mb  B}^{i}}^{}   \log \Big( \frac{ dQ_i(\cdot|b^{i-1}, a_i) }{d\Pi_i^{{  \pi}}(\cdot|b^{i-1})}(b_i)\Big)Q_i(db_i|b^{i-1},a_i)\otimes \pi(da_i|b^{i-1})\otimes {\bf P}^\pi(db^{i-1}) . \label{CIS_6b_FF} 
 \end{align}
Thus,  the maximization in (\ref{CIS_6b_FF}) over all channel input distributions  is done by choosing the control object $\{\pi_i(da_i | b^{i-1}):i=0,\ldots,n\}$  to control the conditional distribution $\{\Pi^{\pi_i}_i(db_i| b^{i-1}):i=1,\ldots,n\}$,
which for each $i$, depends on $\big\{b^{i-1},\pi_i(da_i| b^{i-1})\big\}$, for $i=0,\ldots, n$. Hence,  the maximizing object in  (\ref{CIS_17}) (if it exists),  is of the 
 the form $P^*_i(da_i | a^{i-1},b^{i-1})={\pi}^*_i(da_i| b^{i-1})-a.a. (a^{i-1}, b^{i-1}), i=1,\ldots,n\}$, and  (\ref{CIS_18}) is obtained. \\
{  Part B.}    Since  for each $i$, the channel conditional distribution $Q_i(db_i|\cdot, \cdot)$ is measurable with respect to ${\cal I}_{i}^{Q}=\{b^{i-1}, a_i\}$ and the  transmission cost   $\gamma_i(\cdot, \cdot)\equiv \gamma_i^A(\cdot, \cdot)$ or $\gamma_i(\cdot, \cdot)\equiv \gamma_i^{B.K}(\cdot, \cdot)$  is  measurable with respect to  ${\cal I}_{i}^{\gamma}=\{a_i, b^{i-1}\}$ or ${\cal I}_{i}^{\gamma}=\{a_i, b_{i-K}^{i-1}\}$ for $i=0, 1, \ldots, n$, the results follow directly from Part A, as follows. Consider the cost function $\{\gamma_i^A(a_i, b^{i-1}), i=0, \ldots, n\}$, and  note that the average cost constraint can be expressed as follows. 
\begin{align}
 \sum_{i=0}^n {\bf E}^{{ P}} \Big\{ \gamma_i^{A}(A_i, B^{i-1}) \Big\} 
 =& \sum_{i=0}^{n}  {\bf E}^{ P} \bigg\{  {\bf E}^{ P}\Big\{ \gamma_i^{A}(A_i, B^{i-1})     \bigg{|}A^i, B^{i-1}\Big\} \bigg\}\label{COST_1}   \\
 =& \sum_{i=0}^{n}  {\bf E}^{\pi}\bigg\{    \overline{\gamma}_i^{A, \pi}(S_i)     \bigg\}, \hso 
 \overline{\gamma}_j^{A, \pi}(s_j)  \tri  \int_{ {\mb A}_j } \gamma_j^A(a_j, s_j) \pi_j(da_j| s_j), \hso j=0, \ldots, n. \label{COST_2}
  \end{align}
where the expectation ${\bf E}^\pi\{\cdot\cdot\}$ is taken with respect to ${\bf P}^\pi(ds_i)$ and this follows from Part A. Since the transmission cost constraint is expressed via (\ref{COST_2}) and depends on the distribution $\{\pi_i(da_i|s_i): i=0, \ldots, n\}$ then the claim holds.\\
 Alternatively, if condition (b) holds then  the Lagrangian of the unconstraint problem (omitting the term $\kappa (n+1)$) is
\bea
I(A^{n}
\rar {B}^{n})  - s  {\bf E}^P\Big(c_{0,n}(A^n,B^{n-1})\Big)=\sum_{i=0}^{n} {\bf E}^{ P}\bigg\{  \ell_i^{ P}\Big(A_i, S_i\Big)  - s\gamma_i^A(A_i,B^{n-1})  \bigg\}
\eea
and the rest of the derivation follows from Part A. 
 This completes the prove. For the cost function $\{\gamma_i^{B.K}(a_i, b_{i-K}^{i-1}), i=0, \ldots, n\}$, since $b_{i-K}^{i-1}$ is the restriction of $s_i=b^{i-1}$ to a subsequence, the above conclusion holds as well.
\end{proof}

Next, we give some comments regarding the previous theorem and discuss possible generalizations.\\

\begin{remark}(Some generalizations)\\
(1) Suppose the channel is $\big\{Q_i(db_i|b_{i-1}, a_i): i=0, \ldots, n\}$ and the transmission cost function is ${\gamma}_i(T^ia^n, T^ib^{n-1})=\gamma_i^{A}(a_i, b^{i-1}), i=0, \ldots, n$. Then  the statements of  Theorem~\ref{thm-ISR}, Part B, remain valid with $Q_i(db_i|b^{i-1}, a_i)$ replaced by $Q_i(db_i|b_{i-1}, a_i)$ for $i=0, \ldots, n$, because the state is $s_i=b^{i-1}$,  and this is determined from the dependence of the  cost function $\gamma_i^{A}(a_i, b^{i-1})$ on $s_i$, for  $i=0, \ldots, n$. \\
(2) Suppose in Theorem~\ref{thm-ISR}, Part B, ${\gamma}_i(T^ia^n, T^ib^{n})=\gamma_i^{A}(a_i, b^{i}), i=0, \ldots, n$, then  from  (\ref{COST_1}), (\ref{COST_2}) we have 
\begin{align}
 \sum_{i=0}^n {\bf E}^{{ P}} \Big\{ \gamma_i^{A}(A_i, B^{i}) \Big\}
 = \sum_{i=0}^{n}  {\bf E}^{\pi}\bigg\{    \overline{\gamma}_i^{A, \pi}(S_i)     \bigg\}, \hso 
 \overline{\gamma}_j^{A, \pi}(s_j)  \tri  \int_{ {\mb A}_j \times {\mathbb B}_j } \gamma_j^A(a_j, s_j)Q_j(db_j|s_j, a_j)\otimes \pi_j(da_j| s_j), \hso j=0, \ldots, n. \label{COST_3}
  \end{align}
Hence, the statements of  Theorem~\ref{thm-ISR}, Part B remain valid. 
\end{remark}

\ \

\begin{remark}(Equivalence of constraint and unconstraint problems)\\
The equivalence of constraint and unconstraint problems in Theorem~\ref{thm-ISR}, follows from  Lagrange's duality theory of optimizing convex functionals over convex sets  \cite{dluenberger1969}. Specifically, from \cite{charalambous-stavrou2013aa}, it follows that the set of distributions ${\bf P}^{C1}(da^n|b^{n-1})\tri \otimes_{i=0}^n P_i(da_i|a^{i-1}, b^{i-1}) \in {\cal M}({\mb A}^n)$ is convex, and this uniquely defines ${\cal P}_{[0,n]}$ and vice-versa,  directed information as a functional of  ${\bf P}^{C1}(da^n|b^{n-1}) \in {\cal M}({\mb A}^n)$ is convex, and  by the linearity the constraint set ${\cal P}_{[0,n]}(\kappa)$ expressed in ${\bf P}^{C1}(da^n|b^{n-1})$, is convex. Hence, if their exists a maximizing distribution and the so-called Slater condition holds (i.e., a sufficient condition is the existence of an interior point to the constraint set), then  the constraint and unconstraint problems are equivalent. For finite alphabet spaces all conditions are  easily checked.
\end{remark}

%One of the important applications of variational equalities of directed information, as well as, those of mutual information, is to apply them in extremum problems of information theory to identify lower or upper bounds which are achievable.

 Next,  the variational equality of Theorem~\ref{thm-var} is applied, together with stochastic optimal control techniques, to provide an alternative  derivation of Theorem~\ref{thm-ISR}, through upper bounds which are achievable, over the smaller  set of channel input conditional distributions $\overline{\cal P}_{[0,n]}^{A}$. The next theorem is simply introduced to illustrate the importance of the variational equality of directed information in extremum problems of feedback capacity, and to illustrate  its importance, when considering  channels with limited memory on past channel output symbols.   \\

\begin{theorem}
\label{Rem_Case_1}
(Derivation of Theorem~\ref{thm-ISR} via variational equality)\\
Consider the extremum problem of Class A channels defined by (\ref{CIS_17}) (which is investigated in Theorem~\ref{thm-ISR}, Part A). \\Let $\big\{ V_i(\cdot|b^{i-1})\in {\cal M}({\mathbb B}_i): i=0, \ldots, n \big\}$ be a sequence of conditions distributions on $\big\{{\mathbb B}_i: i=0, \ldots, n\big\}$, not necessarily the one generated by the channel and channel input conditional distributions.\\
 Then 
\begin{align}
{C}_{A^n \rar B^n}^{FB,A}   = & \sup_{\big\{\pi_i(da_i|  b^{i-1})  :i=0,\ldots,n\big\} }     \inf_{\big\{ V_i(db_i|b^{i-1})\in {\cal M}({\mathbb B}_i): i=0, \ldots, n \big\}}   
  \sum_{i=0}^n {\bf E}^{{\pi }} \Big\{  \log \Big( \frac{dQ_i(\cdot|B^{i-1},A_i) }{dV_i(\cdot|B^{i-1})}(B_i)\Big)\Big\} \label{CIS_17_VI_A_1} \\
  = & \sup_{\big\{\pi_i(da_i | b^{i-1}) \in {\cal M}({\mb A}_i) : i=0,\ldots,n\big\}}
  \sum_{i=0}^{n}{\bf E}^{\pi}\Big\{\log\Big(\frac{dQ_i(\cdot| B^{i-1}, A_i)}
  {dV_i^*(\cdot| B^{i-1})}(B_i)\Big)\Big\}, i=0, \ldots, n \label{CIS_18_VI_A1}
  \end{align}
where  $\big\{ V_i^*(\cdot|b^{i-1})\in {\cal M}({\mathbb B}_i): i=0, \ldots, n \big\}$ is given by
\begin{align}
V_i^*(db_i|b^{i-1}) \tri \Pi_i^\pi(db_i|b^{i-1}) = \int_{{\mb A}_i}Q_{i}
(d{b}_i|b^{i-1},a_i) \otimes {\pi}_i(d{a}_i|b^{i-1}), \ \ 
i=0,1,\ldots,n .\label{Case_1_in_6}
\end{align}
\end{theorem}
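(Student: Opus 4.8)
The plan is to derive both equalities by combining the reduction already established in Theorem~\ref{thm-ISR}, Part~A, with the variational equality of Theorem~\ref{thm-var}, applied separately to each admissible control object $\pi$. The only genuinely new ingredient relative to Theorem~\ref{thm-ISR} is the explicit evaluation of the inner infimum over the auxiliary distributions $\{V_i\}$, so the bulk of the argument is bookkeeping rather than fresh optimization.

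First I would fix an arbitrary control object $\{\pi_i(da_i|b^{i-1}):i=0,\ldots,n\}\in\overline{\cal P}_{[0,n]}^{A}$ and apply Theorem~\ref{thm-var} to the corresponding channel input distribution. The variational equality (\ref{BAM52a}) gives $I(A^n\rar B^n)=\inf_{\{V_i\}}\sum_{i=0}^n\int_{{\mb A}^i\times{\mb B}^i}\log\big(dQ_i(\cdot|b^{i-1},a^i)/dV_i(\cdot|b^{i-1})\big)(b_i)\,{\bf P}^\pi(da^i,db^i)$, with the infimum attained at $V_i=\Pi_i^\pi$. Using the Class~A assumption (\ref{CD_C1}), $Q_i(db_i|b^{i-1},a^i)=Q_i(db_i|b^{i-1},a_i)$, and since $V_i(db_i|b^{i-1})$ depends on the past only through $b^{i-1}$, the integrand depends on $(a^i,b^i)$ only through $(a_i,b^i)$; hence the $i$-th integral collapses to ${\bf E}^\pi\{\log(dQ_i(\cdot|B^{i-1},A_i)/dV_i(\cdot|B^{i-1}))(B_i)\}$. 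This reduces the variational equality to the $\inf_{\{V_i\}}\sum_i{\bf E}^\pi\{\cdots\}$ form displayed in the theorem.

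Next I would take the supremum over $\pi$ on both sides. By Theorem~\ref{thm-ISR}, Part~A, the maximization of $I(A^n\rar B^n)$ over the full feedback class ${\cal P}_{[0,n]}$ occurs within $\overline{\cal P}_{[0,n]}^{A}$ and equals ${C}_{A^n\rar B^n}^{FB,A}$; therefore $\sup_\pi I(A^n\rar B^n)={C}_{A^n\rar B^n}^{FB,A}$, and combining with the previous step yields the sup--inf identity (\ref{CIS_17_VI_A_1}). For the second equality (\ref{CIS_18_VI_A1}), I would invoke the achievability clause of Theorem~\ref{thm-var}: for each fixed $\pi$ the inner infimum over $\{V_i\}$ is attained at $V_i^\ast(db_i|b^{i-1})=\Pi_i^\pi(db_i|b^{i-1})=\int_{{\mb A}_i}Q_i(db_i|b^{i-1},a_i)\otimes\pi_i(da_i|b^{i-1})$, which is precisely (\ref{Case_1_in_6}). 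Replacing $\inf_{\{V_i\}}$ by evaluation at this $\pi$-dependent minimizer converts (\ref{CIS_17_VI_A_1}) into (\ref{CIS_18_VI_A1}); consistency with Theorem~\ref{thm-ISR} is automatic, since $\Pi_i^{\pi_i}=V_i^\ast$ makes (\ref{CIS_18_VI_A1}) identical to (\ref{CIS_18}).

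The step I expect to require the most care is the collapse of the multi-fold integral $\int{\bf P}^\pi(da^i,db^i)$ to the expectation ${\bf E}^\pi\{\cdot\}$ over $(A_i,B^i)$: this is exactly where the Class~A structure (dependence of $Q_i$ on $a^i$ only through $a_i$) and the restriction $P_i=\pi_i(da_i|b^{i-1})$ are used jointly, and it is the analogue, in the variational-equality route, of the Markov/conditional-independence reduction carried out in the proof of Theorem~\ref{thm-ISR}. A secondary point worth flagging is that the passage $\sup_\pi\inf_{V}=\inf_{V}\sup_\pi$ is \emph{neither} claimed nor needed; the identity is obtained by evaluating the inner infimum in closed form for each $\pi$ via Theorem~\ref{thm-var}, so no minimax interchange is invoked.
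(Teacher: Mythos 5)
Your proof is correct, but it takes a genuinely different route from the paper's. You treat the information-structure reduction as already settled --- you cite Theorem~\ref{thm-ISR}, Part~A, to get $C_{A^n \rar B^n}^{FB,A}=\sup_{\pi}I(A^n\rar B^n)$ over $\overline{\cal P}_{[0,n]}^{A}$, and then apply the variational equality of Theorem~\ref{thm-var} pointwise in $\pi$ to evaluate the inner infimum in closed form at $V_i^*=\Pi_i^{\pi}$. The paper instead re-derives the reduction from scratch via the variational route: it starts from the supremum over the \emph{full} class ${\cal P}_{[0,n]}$ of the $\sup\inf$ expression, drops the infimum to obtain an upper bound valid for every arbitrary $\{V_i(db_i|b^{i-1})\}$, uses the Markov property of $S_i=B^{i-1}$ (the stochastic-optimal-control step, since the resulting pay-off $\ell(a_i,b^{i-1})$ is a fixed function once $V_i$ is arbitrary) to collapse that supremum onto $\overline{\cal P}_{[0,n]}^{A}$, then specializes $V_i=\Pi_i^{\pi}$ to make the bound achievable, and closes with the trivial set-inclusion lower bound $(\ref{Case_1_in_1})$--$(\ref{Case_1_in_2})$. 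Your shortcut is perfectly valid for this particular theorem and is more economical, and you are also right that no minimax interchange is needed in either version. What the paper's longer argument buys is that it is the template actually reused for Class~B channels (Lemma~\ref{thm_BAM_2} and Theorem~\ref{cor-ISR_C4}): there the target subset $\sr{\circ}{\cal P}_{[0,n]}^{B.M}$ is strictly smaller than what Theorem~\ref{thm-ISR} alone delivers, so the reduction cannot be obtained by citation and must come from the interplay of the dropped-infimum upper bound (with $V_i$ restricted to limited memory) and the Markov structure of the controlled process. Your version would not transfer to that setting, whereas the paper's does; for the statement as written, however, your argument is complete.
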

\begin{proof}
By the variational equality of Theorem~\ref{thm-var},  for any arbitrary conditional distribution,  $V_{i}(\cdot|b^{i-1})\in{\cal M}({\mb  B}_i), i=0,1, \ldots, n$, it can be shown that the following identity holds.
\begin{align}
&\sup_{     \big\{   P_i(da_i| a^{i-1 }, b^{i-1})  :i=0,\ldots,n\big\} \in {\cal P}_{[0,n]} }    \sum_{i=0}^{n}
\int \log \left( \frac{dQ_i(\cdot| b^{i-1}, a_i)}{d\Pi^{P}_i(\cdot| b^{i-1})}(b_i)\right){\bf P}^{P}(d{b}_i,db^{i-1},da_i)   \label{Case_1_in_3} \\
=& \sup_{ \big\{ P_i(da_i| a^{i-1 }, b^{i-1}): i=0, \ldots, n \big\} \in {\cal P}_{[0,n]} }         \inf_{ \big\{  V_{i}(db_i|b^{i-1}) \in {\cal M}({\mb B}_i)\big\}_{i=0}^n} \sum_{i=0}^n
\int \log\left(\frac{dQ_i(\cdot|b^{i-1}, a^i)}{dV_i(\cdot|b^{i-1})}(b_i)\right){\bf P}^P(b_i, b^{i-1},a_i) \label{Case_1_in_4} \\
\leq & \sup_{ \big\{ P_i(da_i| a^{i-1 }, b^{i-1}): i=0, \ldots, n \big\} \in {\cal P}_{[0,n]} }        \sum_{i=0}^n
\int \log\left(\frac{dQ_i(\cdot|b^{i-1}, a^i)}{dV_i(\cdot|b^{i-1})}(b_i)\right){\bf P}^P(b_i, b^{i-1},a_i), \hso \forall    V_{i}(db_i|b^{i-1}),  i=0, \ldots, n \label{Case_1_in_5}
\end{align}
where the last inequality holds for any arbitrary distribution $V_{i}(\cdot|b^{i-1})\in{\cal M}({\mb B}_i), i=0,1, \ldots, n$, not necessarily the one generated by $\big\{ P_i(da_i| a^{i-1 }, b^{i-1}):i=0,1,\ldots, n \big\} \in {\cal P}_{[0,n]}$ and the channel distribution.  \\
%%%%  Start T new presentation
Next, define the pay-off function
\bea
\ell(a_i, b^{i-1}) \tri \int_{{\mathbb B}_i} \log\left(\frac{d{Q}_i(\cdot|b_{i-1}, a_i)}{dV_i(\cdot|b^{i-1})}(b_i)\right) Q_i(db_i|b^{i-1}, a_i) \equiv \ell(a_i,s_i), \hst s_i=b^{i-1}, \hst i=0, \ldots, n
\eea
Since ${\bf P}^{P}(d{b}_i,db^{i-1},da_i)=Q_i(db_i|db^{i-1},a_i)\otimes {\bf P}(da_i|b^{i-1})\otimes {\bf P}^{P}(db^{i-1})$, for any arbitrary $V_{i}(\cdot|b^{i-1})\in{\cal M}({\mb B}_i), i=0,1, \ldots, n$, by maximizing the right hand side of  (\ref{Case_1_in_5}) the following  upper bound is obtained.
\begin{align}
 (\ref{Case_1_in_3})       \leq &   \sup_{     \big\{   P_i(da_i| a^{i-1 }, b^{i-1})  :i=0,\ldots,n\big\} \in {\cal P}_{[0,n]} }    \sum_{i=0}^{n}
\int \log \left( \frac{dQ_i(\cdot| b^{i-1}, a_i)}{dV_i(\cdot| b^{i-1})}(b_i)\right)Q_i(db_i|db^{i-1},a_i)\otimes {\bf P}(da_i|b^{i-1})\otimes {\bf P}^{P}(db^{i-1})   \label{Case_1_in_7} \\
=& \sup_{     \big\{   P_i(da_i| a^{i-1 }, b^{i-1})  :i=0,\ldots,n\big\} \in {\cal P}_{[0,n]} }    \sum_{i=0}^{n} {\bf E}^P \Big\{\ell(A_i, B^{i-1})\Big\} \label{VE_thm_1} \\
\sr{(\alpha)}{=}&\sup_{\big\{ \pi_i(da_i|b^{i-1})\in {\cal M}({\mb A}_i|): i=0, \ldots, n\big\} } \sum_{i=0}^n \int\log\left(\frac{d{Q}_i(\cdot|b_{i-1}, a_i)}{dV_i(\cdot|b^{i-1})}(b_i)\right){\bf P}^{{\pi}}(d{b}_i,d{b}^{i-1},d{a}_i), \hso \forall V_i(db_i|b^{i-1}), i=0, \ldots, n \label{Case_1_in_8}
\end{align}
where ${\bf P}^{\pi}(d{b}_i,db^{i-1},da_i)=Q_i(db_i|db^{i-1},a_i)\otimes \pi_i(da_i|b^{i-1})\otimes {\bf P}^{\pi}(db^{i-1}), i=0, \ldots, n$, 
 and the equality in $(\alpha)$  is obtained as follows.
Since for each $i$, the pay-off over which the expectation is taken in (\ref{VE_thm_1}) is $\ell(a_i, b^{i-1})\equiv \ell(a_i,s_i)$ and $\{S_i: i=0, \ldots, n\}$ is Markov, as shown  in the proof of Theorem~\ref{thm-ISR}, then the maximization occurs in the subset satisfying conditional independence $P_i(da_i|a^{i-1}, s_i)={\bf P}(da_i|s_i)\equiv \pi_i(da_i|s_i)-a.a.(a^{i-1}, s_i), i=0, \ldots, n$, hence ${\bf P}^{P}(d{b}_i,db^{i-1},da_i)={\bf P}^{\pi}(d{b}_i,db^{i-1},da_i), i=0, \ldots, n$, and  (\ref{Case_1_in_8}) is obtained. \\
%ii) Define the cost-to-go corresponding to the right hand side of  (\ref{Case_1_in_7}), without the supremum,  from  time ``$t$'' to the terminal time ``$n$'', given the values of the output $B^{i-1}=s_i$, and corresponding to $P_i(da_i| a^{i-1 }, b^{i-1})  :i=t, t+1, \ldots, n$, i.e., 
%\begin{align}
%V_t^P(s_t)  \tri      {\bf E}^P \Big\{\sum_{i=t}^{n} \ell(A_i, B^{i-1})\Big|S_t=s_t\Big\} \label{VE_thm_1_a}
%\end{align}
%By definition, then 
%\bea
%V_0^P(s_0) = {\bf E}^P \Big\{\sum_{i=0}^{n} \ell(A_i, B^{i-1})\Big|S_0=s_0\Big\}, \hso I(A^n \rar B^n)= \int_{{\mathbb B}^{-1}} V_0^P(s_0) \mu(ds_0).
%\eea
%Then by applying  the comparison principle as in  \cite{kumar-varayia1986,vanschuppen2010}, it follows that the maximization in  (\ref{VE_thm_1})
%over the channel input conditional distribution satisfies conditional independence, as in i), from which (\ref{Case_1_in_8}) is obtained.\\
Since the distribution 
$\{V_i(db_i|b^{i-1}): i=0, \ldots, n\}$ is arbitrary, by letting this to be the one defined by the channel and $\big\{ \pi_i(da_i|b^{i-1})\in {\cal M}({\mb A}_i): i=0, \ldots, n\big\}$, given by  (\ref{Case_1_in_6}), i.e.,  $V_i(db_i|b^{i-1})\tri \Pi_i^\pi(db_i|b^{i-1}), i=0, \ldots, n$, then the following  upper bound holds.
\begin{align}
 (\ref{Case_1_in_3})       \leq   
\sup_{\big\{ \pi_i(da_i|b^{i-1})\in {\cal M}({\mb A}_i): i=0,1, \ldots, n\big\} } \sum_{i=0}^n \int\log\left(\frac{d{Q}_i(\cdot|b_{i-1}, a_i)}{d\Pi_i^\pi(\cdot|b^{i-1})}(b_i)\right){\bf P}^{{\pi}}(d{b}_i,d{b}^{i-1},d{a}_i) \label{Case_1_in_8_old}
\end{align}
\begin{comment}
%%%% Prior to revision (alternative method)
Next, by evaluating  (\ref{Case_1_in_5}) at the  distribution  induced by
\begin{align}
V_i(db_i|b^{i-1}) \tri \Pi_i^\pi(db_i|b^{i-1}) = \int_{{\mb A}_i}Q_{i}
(d{b}_i|b^{i-1},a_i) \otimes {\pi}_i(d{a}_i|b^{i-1}), \ \ 
i=0,1,\ldots,n \label{Case_1_in_6}
\end{align}
the following  upper bound is obtained.
\begin{align}
 (\ref{Case_1_in_3})       \leq &   \sup_{     \big\{   P_i(\cdot| a^{i-1 }, b^{i-1})  :i=0,\ldots,n\big\} \in {\cal P}_{[0,n]} }    \sum_{i=0}^{n}
\int \log \left( \frac{Q_i(d{b}_i| b^{i-1}, a_i)}{\Pi^{\pi}_i(d{b}_i| b^{i-1})}\right){\bf P}^{P}(d{b}_i,db^{i-1},da_i)   \label{Case_1_in_7} \\
\sr{(\alpha)}{=}&\sup_{\big\{ \pi_i(\cdot|a^{i-1}\in {\cal K}({\mb A}_i|{\cal  B}^{i-1}): i=0,1, \ldots, n\big\} } \sum_{i=0}^n \int\log\left(\frac{{Q}_i(d{b}_i|b_{i-1}, a_i)}{\Pi_i^\pi(db_i|b^{i-1})}\right){\bf P}^{{\pi}}(d{b}_i,d{b}^{i-1},d{a}_i) \label{Case_1_in_8}
\end{align}
where the equality in $(\alpha)$  is due to the following:\\
i)  the transition probability $\Pi^{{\pi}}_{i}(d{b}_i|b^{i-1})$ given by (\ref{Case_1_in_6})  is a functional of  $\{\pi_i(\cdot| b^{i-1})$ for $i=0, 1, \ldots, n$, hence, \\
ii) optimizing the right hand side of (\ref{Case_1_in_7}) by backward dynamic programming recursions, if necessary,  the supremum  occurs in $\overline{\cal P}_{[0,n]}^F$, and \\
iii)  the corresponding joint distributions in (\ref{Case_1_in_8})  satisfies $ {\bf P}^{P}(d{b}_i,db^{i-1},da_i)={\bf P}^{\pi}(d{b}_i,db^{i-1},da_i) = (\ref{CIS_13}), i=0,1, \ldots, n$.   
\end{comment}
Next, it is shown,   that  the reverse inequality in (\ref{Case_1_in_8_old}) holds, thus establishing the claim. Recall definition (\ref{rest_set_1})  of $\overline{\cal P}_{[0,n]}^{A}$. Since $\overline{\cal P}_{[0,n]}^{A}\subset {\cal P}_{[0,n]}$, it can be shown that the following inequality holds. 
\begin{align}
&\sup_{\big\{\pi_i(da_i | b^{i-1}) \in {\cal M}({\mb A}_i) : i=0,\ldots,n\big\}}      \sum_{i=0}^{n}
\int\log\left(\frac{dQ_i(\cdot| b^{i-1}, a_i)}{d\Pi^{{\pi_i}}_i(\cdot| b^{i-1})}(b_i)\right){\bf P}^{\pi}(d{b}_i,db^{i-1},da_i)    \label{Case_1_in_1}\\
\leq & 
\sup_{\big\{P_i(da_i| a^{i-1 }, b^{i-1})  :i=0,\ldots,n\big\} \in {\cal P}_{[0,n]} }    \sum_{i=0}^{n}
\int\log\left(\frac{dQ_i(\cdot| b^{i-1}, a_i)}{d\Pi^{P}_i(\cdot| b^{i-1})}(b_i)\right){\bf P}^{P}(d{b}_i,db^{i-1},da_i)= (\ref{Case_1_in_4})  \label{Case_1_in_2}
\end{align}
where $\{\Pi_i^P(d{b}_i| b^{i-1}), \Pi_i^{\pi_i}(d{b}_i| b^{i-1}):i=0,\ldots,n\}$ are defined by (\ref{CIS_9}), (\ref{CIS_10a}), and $\{ {\bf P}^P(d{b}_i,d{b}^{i-1},d{a}_i): i=0, \ldots, n\}$, $\{{\bf P}^{\pi}(d{b}_i,d{b}^{i-1},d{a}_i):i=0,\ldots,n\}$ are induced by the channel and $\{P_i(d{a}_i|a^{i-1}, {b}^{i-1}): i=0, \ldots, n\}$, $\{{\pi}_i(d{b}_i|{b}_{i-1}):i=0,\ldots,n\}$, respectively.\\
 Combining inequalities   (\ref{Case_1_in_2}) and  (\ref{Case_1_in_8_old})  establishes the equality in  (\ref{CIS_18_VI_A1}), under (\ref{Case_1_in_6}). 
\end{proof}

Note that Theorem~\ref{Rem_Case_1} can be used to derive  Theorem~\ref{thm-ISR}, {Part B},    by repeating the above derivation,  with  the supremum over the set ${\cal P}_{[0,n]}$ replaced by the set ${\cal P}_{[0,n]}(\kappa)$ in all equations.

\subsection{Channels Class B and Transmission Costs Class A or B} 
\label{class_B}
 In this section, the information structure of  channel input distributions, which  maximize $I(A^n \rar B^n)$ is derived for channel  distributions of Class B and transmission cost functions Class A or B.
The derivation is based on applying  the results of Section~\ref{class_A},
% to conclude that the maximizing channel input distribution occurs in the set $\overline{\cal P}_{[0,n]}^A \subset {\cal P}_{[0,n]}$, and then applying   Step 2, of the Two-Step Proceedure, 
% 
 and   the variational equality of directed information, to show the supremum over all channel input conditional distributions  occurs in a smaller set $\sr{\circ}{\cal P}_{[0,n]}\subseteq  \overline{\cal P}_{[0,n]}^A \subset {\cal P}_{[0,n]}$ (for Class B transmission costs the subset is strictly smaller, i.e.,  $\sr{\circ}{\cal P}_{[0,n]}\subset  \overline{\cal P}_{[0,n]}^A$).

The derivation is first  presented for any  channel distribution of Class B and transmission cost of Class B, with $M=2, K=1$, to illustrate the procedure, as the derivation of the general cases are  similar.

\subsubsection{\bf Channel Class B  and Transmission Cost Class B, $M=2, K=1$}
\label{case_3}
First, consider any channel distribution of Class B, with $M=2$, i.e.,  $Q_i(db_i|b_{i-1},b_{i-2}, a_i), i=0,1, \ldots, n$, without transmission cost. \\
Then the FTFI capacity is defined by  
\begin{align}
  {C}_{A^n \rar B^n}^{FB,B.2}  \tri & \sup_{\big\{P_i(da_i | a^{i-1}, b^{i-1}) \ : i=0,\ldots,n\big\} \in {\cal P}_{[0,n]}  }
  \sum_{i=0}^{n}{\bf E}^{P}\Big\{\log\Big(\frac{dQ_i(\cdot| B_{i-1}, B_{i-2}, A_i)}
  {d\Pi_i^{P}(\cdot| B^{i-1})}(B_i)\Big)\Big\} \label{UMC_rd1a}\\
\sr{(\alpha)}{=}& \sup_{\big\{\pi_i(da_i | b^{i-1}) \in {\cal M}({\mb A}_i) : i=0,\ldots,n\big\}}
  \sum_{i=0}^{n}{\bf E}^{\pi}\Big\{\log\Big(\frac{dQ_i(\cdot| B_{i-1},B_{i-2}, A_i)}
  {d\Pi_i^{\pi_i}(\cdot| B^{i-1})}(B_i)\Big)\Big\} \label{UMC_rd1}\\
 \Pi_i^{\pi_i}(db_i| b^{i-1}) =& \int_{ {\mb A}_i} Q_i(db_i |b_{i-1},b_{i-2}, a_i) \otimes   {\pi}_i(da_i | b^{i-1}), \hso i=0,1, \ldots, n \label{UMC_rdd2}
\end{align}
where $(\alpha)$ is due to Theorem~\ref{thm-ISR},  because the set of channel distributions of  Class B  is a subset of the set of channel distributions of Class A, and 
  the joint distribution over which ${\bf E}^{\pi}\{\cdot\}$ is taken is ${\bf P}^{\pi}(da_i, db^i)\equiv  {\bf P}^{\pi_0, \pi_1, \ldots, \pi_i}(da_i, db^i)$, $i=0,1,\ldots, n$.\\
  The main challenge is to show  the optimal channel input distribution induces the following conditional independence on the transition probability of the channel output process: ${\bf P}(db_i|b^{i-1})={\bf P}(db_i|b_{i-1}, b_{i-2})-a.a.b^{i-1}, i=0, \ldots, n$.

This is shown by invoking,   {\it Step 2}, of the two-step procedure (i.e., the variational equality of directed information), to  deduce   that the maximization  in (\ref{UMC_rd1}) occurs  in $\sr{\circ}{\cal P}_{[0,n]}^{B.2}\tri  \{\pi_i^2(da_i|b_{i-1}, b_{i-2}): i=0, \ldots, n\}  \subset \overline{\cal P}_{[0,n]}^{A} \tri  \{\pi_i(da_i|b^{i-1}): i=0, \ldots, n\}$, and that $\Pi_i^{\pi_i}(db_i|b^{i-1})= \nu_i^{\pi^2}(db_i|b_{i-1}, b_{i-2})-a.a. b^{i-1}, i=0, \ldots, n$, that is, the optimal channel input distribution satisfies conditional independence property, $\pi_i(da_i|b^{i-1})=\pi_i^2(da_i|b_{i-1}, b_{i-2})-a.a. b^{i-1}, i=0, \ldots, n$. \\

\begin{lemma}(Characterization of FTFI capacity for channels of class B  and transmission costs of class B, $M=2, K=1$)\\
\label{thm_BAM_2}
Suppose the channel distribution is of Class B with $M=2$. \\
Define the restricted class of policies $\sr{\circ}{\cal P}_{[0,n]}^{B.2} \subset \overline{\cal P}_{[0,n]}^{A}$ by
\begin{align}
\sr{\circ}{\cal P}_{[0,n]}^{B.2}\tri \Big\{ \big\{  \pi_i(d{a}_i| b^{i-1}):i=0,1, \ldots, n\big\}\in  \overline{\cal P}_{[0,n]}^{A}   :   {\pi}_i(d{a}_i| b^{i-1})={\pi}_i^2(d{a}_i|b_{i-1}, b_{i-2})-a.a.
  b^{i-1},i=0,1,\ldots,n\Big\} . \nonumber 
\end{align}
Then the following hold.\\
{\bf Part A.} The maximization in (\ref{UMC_rd1a}) over  $\{P_i(d{a}_i| a^{i-1}, b^{i-1}):i=0, \ldots, n\} \in {\cal P}_{[0,n]}$ occurs in the smaller class 
$ \sr{\circ}{\cal P}_{[0,n]}^{B.2}$, that is, it satisfies the following conditional independence. 
\bea
P_i(d{a}_i| a^{i-1},b^{i-1})={\pi}_i^{2}(d{a}_i|b_{i-1}, b_{i-2})-a.a. \ (a^{i-1},b^{i-1}),\hso  i=0,1,\ldots,n.  \label{gioler41}
\eea
Moreover, any distribution from the class  $ \sr{\circ}{\cal P}_{[0,n]}^{B.2} $ induces a  channel output process $\{B_i: i=0,1, \ldots, n\}$, with  conditional probabilities which  are second-order Markov, that is,  
\bea
\Pi_i^P(db_i|b^{i-1}) =v_i^{\pi^2}(db_i|b_{i-1}, b_{i-2})-a.a. \hso b^{i-1}, \hso i=0, 1, \ldots, n, \label{UMC_RF_5}
\eea 
and the characterization of FTFI capacity is given by  the following expression. 
\begin{align}
 C_{A^n \rar B^n}^{FB, B.2} 
=&\sup_{ \{\pi_i^M(da_i| b_{i-1}, b_{i-2})\in {\cal M}({\mb A}_i), i=0, \ldots, n\} }\sum_{i=0}^{n}{\bf E}^{\pi^2}\left\{
\log\left(\frac{dQ_i(\cdot|B_{i-1},B_{i-2}, A_i)}{dv_i^{{\pi}^2}(\cdot|B_{i-1}, B_{i-2})}(B_i)\right)\right\}\label{gioler43}\\
\equiv &\sup_{ \{\pi_i^2(da_i|b_{i-1}, b_{i-2})\in {\cal M}({\mb A}_i), i=0, \ldots, n\} }\sum_{i=0}^{n}
I(A_i;B_i|B_{i-1}, B_{i-2})\label{gioler44} 
\end{align}
where
\begin{align}
v_i^{{\pi}^2}(d{b}_i|b_{i-1}, b_{i-2})=&\int_{{\mb A}_i}Q_i(d{b}_i|b_{i-1},b_{i-2},a_i) \otimes {\pi}_i^2(d{a}_i|b_{i-1}, b_{i-2}), \hso i=0,1,\ldots,n. \label{gioler46}
\end{align}
{\bf  Part B.}  Suppose the following two conditions hold.
\begin{align}
(a)& \hso {\gamma}_i(T^ia^n, T^ib^{n-1})=\gamma_i^{B.1}(a_i, b_{i-1}), \hso i=0, \ldots, n,    \label{UMC_rf20} \\
(b)& \hso C_{A^n \rar B^n}^{FB, B.1}(\kappa) \tri  \sup_{\big\{P_i(da_i | a^{i-1}, b^{i-1}) \ : i=0,\ldots,n\big\} \in {\cal P}_{[0,n]}(\kappa)   }
  \sum_{i=0}^{n}{\bf E}^{P}\Big\{\log\Big(\frac{dQ_i(\cdot| B_{i-1},B_{i-2}, A_i)}
  {d\Pi_i^{P}(\cdot| B^{i-1})}(B_i)\Big)\Big\}   \label{UMC_rf21}   \\
&= \inf_{s\geq 0} \sup_{ \{P_j(\cdot| a^{j-1}, b^{j-1}): j=0,\ldots, n\}   \in  {\cal P}_{[0,n]} } \Bigg\{ I(A^{n}
\rar {B}^{n}) - s \Big\{ {\bf E}^P\Big(c_{0,n}(A^n,B^{n-1})\Big)-\kappa(n+1) \Big\}\Bigg\}. \label{UMC_rf22}
\end{align}   
The  characterization of FTFI capacity is given by the following expression.
\begin{align}
C_{A^n \rar B^n}^{FB, B.2}(\kappa)=\sup_{\pi_i^2(da_i|b_{i-1}, b_{i-2})\in {\cal M}({\mb A}_i), i=0, \ldots, n: \frac{1}{n+1} {\bf E}^{\pi^2}\big\{ \sum_{i=0}^n \gamma_i^{B.1}(A_i, B_{i-1})\big\} \leq \kappa  }\sum_{i=0}^{n}
I(A_i;B_i|B_{i-1}, B_{i-2})\label{gioler44_TC} 
\end{align}
That is, in { Part A, B} the conditional distribution of the joint process $\{(A_i, B_i): i=0, 1, \ldots, n\}$ satisfies (\ref{SC_1}) and the channel output process $\{B_i: i=0,1, \ldots, n\}$ is a second-order Markov process, i.e., its conditional distribution  satisfies (\ref{SC_2}).

\end{lemma}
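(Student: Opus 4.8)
The plan is to follow the two-step procedure, using Theorem~\ref{thm-ISR} to dispose of the full-history dependence and then Theorem~\ref{thm-var} (the variational equality) to collapse the remaining dependence down to the last two outputs. Since any Class~B channel with $M=2$ is in particular a Class~A channel, Theorem~\ref{thm-ISR} already gives identity $(\alpha)$ in (\ref{UMC_rd1}), so I may start from the representation $C_{A^n\rar B^n}^{FB,B.2}=\sup_{\{\pi_i(da_i|b^{i-1})\}}\sum_{i=0}^n {\bf E}^\pi\{\log(dQ_i(\cdot|B_{i-1},B_{i-2},A_i)/d\Pi_i^{\pi_i}(\cdot|B^{i-1}))\}$, i.e.\ the control object is already reduced to $\pi_i(da_i|b^{i-1})$. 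The goal of Part~A is then to show this supremum is attained inside $\sr{\circ}{\cal P}_{[0,n]}^{B.2}$.

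For the upper bound I would invoke the variational equality of Theorem~\ref{thm-var}, but crucially restrict the auxiliary kernels to depend only on the last two outputs, $V_i(db_i|b_{i-1},b_{i-2})$. Since the infimum in (\ref{BAM52a}) is over all $V_i(db_i|b^{i-1})$, dropping it and evaluating at any such two-letter $V_i$ yields $C_{A^n\rar B^n}^{FB,B.2}\le \sup_{\pi}\sum_{i=0}^n{\bf E}^\pi\{\log(dQ_i(\cdot|b_{i-1},b_{i-2},a_i)/dV_i(\cdot|b_{i-1},b_{i-2}))\}$, exactly as in the chain (\ref{Case_1_in_3})--(\ref{Case_1_in_8}) of Theorem~\ref{Rem_Case_1}. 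The integrand now defines a per-stage pay-off $\ell(a_i,s_i)\tri \int_{\mb B_i}\log(dQ_i(\cdot|s_i,a_i)/dV_i(\cdot|s_i))\,Q_i(db_i|s_i,a_i)$ depending on the input only through $(a_i,s_i)$ with $s_i=(b_{i-1},b_{i-2})$. Because $Q_i$ depends only on $(b_{i-1},b_{i-2},a_i)=(s_i,a_i)$, a short Bayes-rule computation gives ${\bf P}(ds_{i+1}|s^i,a^i)={\bf P}(ds_{i+1}|s_i,a_i)$ and ${\bf P}(ds_{i+1}|s^i)={\bf P}(ds_{i+1}|s_i)$, so $\{S_i=(B_{i-1},B_{i-2})\}$ is a controlled Markov process. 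The stochastic-control reduction used in Theorem~\ref{thm-ISR} (Feature~1) then collapses $\sup_\pi$ to $\sup_{\pi^2}$ over $\pi_i^2(da_i|b_{i-1},b_{i-2})$, giving $C_{A^n\rar B^n}^{FB,B.2}\le \sup_{\pi^2}\sum_i{\bf E}^{\pi^2}\{\log(dQ_i/dV_i)\}$.

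The achievability/tightness step is where the work lies, and I expect it to be the main obstacle: one must show the two-letter restriction of $V_i$ costs nothing, by exhibiting that at the chosen control the true output conditional collapses to a second-order Markov kernel. Concretely, for any $\pi^2\in\sr{\circ}{\cal P}_{[0,n]}^{B.2}$, substituting $P_i(da_i|a^{i-1},b^{i-1})=\pi_i^2(da_i|b_{i-1},b_{i-2})$ into (\ref{CIS_3a}) makes the $a^{i-1}$-dependence integrate out, so $\Pi_i^P(db_i|b^{i-1})=\int_{\mb A_i}Q_i(db_i|b_{i-1},b_{i-2},a_i)\otimes\pi_i^2(da_i|b_{i-1},b_{i-2})=v_i^{\pi^2}(db_i|b_{i-1},b_{i-2})$, establishing (\ref{UMC_RF_5}) and hence (\ref{SC_2}). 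Taking $V_i=v_i^{\pi^2}$ annihilates the relative-entropy gap in the variational equality, so the upper bound is met with equality and reads $\sum_i I(A_i;B_i|B_{i-1},B_{i-2})$. For the reverse inequality I would use $\sr{\circ}{\cal P}_{[0,n]}^{B.2}\subset\overline{\cal P}_{[0,n]}^{A}\subset{\cal P}_{[0,n]}$ together with the conditional independence $(A_i,B_i)\perp B^{i-3}\mid(B_{i-1},B_{i-2})$ (which follows from $\pi^2$ and the channel), giving $I(A_i;B_i|B^{i-1})=I(A_i;B_i|B_{i-1},B_{i-2})$, whence $\sup_{\pi^2}\sum_i I(A_i;B_i|B_{i-1},B_{i-2})\le C_{A^n\rar B^n}^{FB,B.2}$. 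Combining the two inequalities yields (\ref{gioler43})--(\ref{gioler44}), while the joint factorization (\ref{SC_1}) follows from $\pi_i^2$ and the channel form (\ref{CD_C4}).

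For Part~B the transmission cost $\gamma_i^{B.1}(a_i,b_{i-1})$ depends only on $(a_i,b_{i-1})$, a function of $(a_i,s_i)$ with $s_i=(b_{i-1},b_{i-2})$ and $J=\max\{M,K\}=2$; hence no enlargement of the state is needed. I would pass to the Lagrangian (\ref{UMC_rf22}), whose per-stage pay-off $\ell(a_i,s_i)-s\,\gamma_i^{B.1}(a_i,b_{i-1})$ is again a function of $(a_i,s_i)$, and re-run the Part~A argument verbatim to place the optimizer in $\sr{\circ}{\cal P}_{[0,n]}^{B.2}\cap{\cal P}_{[0,n]}(\kappa)$; the convexity/Slater argument in the remark on Lagrange duality following Theorem~\ref{thm-ISR} justifies the equivalence of the constrained and Lagrangian problems, yielding (\ref{gioler44_TC}).
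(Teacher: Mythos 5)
Your proposal is correct and follows essentially the same route as the paper's proof: reduction to $\overline{\cal P}_{[0,n]}^{A}$ via Theorem~\ref{thm-ISR}, a lower bound from the inclusion $\sr{\circ}{\cal P}_{[0,n]}^{B.2}\subset{\cal P}_{[0,n]}$ together with the second-order Markov property of the induced output kernel, and an upper bound obtained by restricting the auxiliary kernels in the variational equality to $\overline{V}_i(db_i|b_{i-1},b_{i-2})$, collapsing $\sup_\pi$ to $\sup_{\pi^2}$ via the controlled Markov state $S_i=(B_{i-1},B_{i-2})$, and then evaluating at $\overline{V}_i=v_i^{\pi^2}$. The only difference is the order in which the two bounds are derived, and your handling of Part~B via the Lagrangian matches the paper's.
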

\begin{proof} { Part A. } Since the channel distribution of Class B, with $M=2$,    is a special case the channel  distribution (\ref{CD_C1}),  the statements of Theorem~\ref{thm-ISR} hold,   hence (\ref{UMC_rd1a})-(\ref{UMC_rdd2}) hold, and  the maximization  over  ${\cal P}_{[0,n]}$ occurs in the preliminary set $\overline{\cal P}_{[0,n]}^{A}$ defined by (\ref{rest_set_1}).  
By (\ref{UMC_rd1}), and since $\sr{\circ}{\cal P}_{[0,n]}^{B.2}\subset \overline{\cal P}_{[0,n]}^{A}$, then
\begin{align}
C_{A^n \rar B^n}^{FB, B.2} =&
\sup_{  \big\{\pi_i(da_i|b^{i-1}): i=0, \ldots, n\big\} }\sum_{i=0}^{n}
\int\log\left(\frac{dQ_i(\cdot| b_{i-1},b_{i-2},a_i)}{d\Pi^{{\pi}}_i(\cdot| b^{i-1})}(b_i)\right){\bf P}^{{\pi}}(d{b}_i,db^{i-1},da_i)   \label{ineq_1} \\
\geq & \sup_{  \big\{\pi_i(da_i|b^{i-1}): i=0, \ldots, n\big\}\in \sr{\circ}{\cal P}_{[0,n]}^{B.2} }   \sum_{i=0}^{n}
\int\log\left(\frac{dQ_i(\cdot| b_{i-1},b_{i-2}, a_i)}{d\Pi^{{\pi}}_i(\cdot| b^{i-1})}(b_i)\right){\bf P}^{{\pi}}(d{b}_i,db^{i-1},da_i) \label{ineq_2}   \\
\geq &    \sum_{i=0}^{n}
\int\log\left(\frac{dQ_i(\cdot| b_{i-1},b_{i-2}, a_i)}{d\nu^{{\pi^2}}_i(\cdot| b_{i-1}, b_{i-2})}(b_i)\right){\bf P}^{{\pi^2}}(d{b}_i,db_{i-1},b_{i-2}, da_i), \hso \forall \pi_i^2(da_i|b_{i-1}, b_{i-2}) \in  {\cal M}({\mb A}_i), i=0, \ldots,n   \label{ineq_3}
\end{align}
where $\{(\Pi_i^{\pi}(d{b}_i| b^{i-1}),v_i^{{\pi}^2}(d{b}_i| b_{i-1}, b_{i-2})):i=0,\ldots,n\}$ are given by (\ref{UMC_rdd2}), (\ref{gioler46}), and \\$\{{\bf P}^{\pi}(d{b}_i,d{b}^{i-1},d{a}_i)$, ${\bf P}^{{\pi}^2}(d{b}_i,d{b}_{i-1},b_{i-2}, d{a}_i):i=0,\ldots,n\}$ are induced by the channel and $\{{\pi}_i(d{a}_i| {b}^{i-1}),{\pi}_i^2(d{b}_i|{b}_{i-1}, b_{i-2}):i=0,\ldots,n\}$, respectively. Taking the supremum over $\{\pi_i^2(da_i|b_{i-1}, b_{i-2}): i=0, \ldots, n\}$, inequality  (\ref{ineq_3}) is retained and hence, the following lower bound is obtained.
\begin{align}
C_{A^n \rar B^n}^{FB, B.2} = (\ref{ineq_1}) \geq  &  
\sup_{ \{\pi_i^2(da_i|b_{i-1}, b_{i-2}): i=0, \ldots, n\}}\sum_{i=0}^{n}
\int\log\left(\frac{dQ_i(\cdot| b_{i-1},b_{i-2},a_i)}{dv^{{\pi}^2}_i(\cdot| b_{i-1},b_{i-2})}(b_i)\right){\bf P}^{{\pi}^2}(d{b}_i,db_{i-1},b_{i-2},da_i)  \label{ineq_5} \\
\equiv & \sup_{ \{\pi_i^2(da_i|b_{i-1}, b_{i-2}): i=0,\ldots, n\} }\sum_{i=0}^{n}
I(A_i;B_i|B_{i-1}, B_{i-2}).        \label{gioler47}
\end{align}
Next, the  variational equality of Theorem~\ref{thm-var} is applied to show the reverse inequality in  (\ref{ineq_5}) holds. Given a policy from  the set 
$ \overline{\cal P}_{[0,n]}^{A}$, and any arbitrary distribution $\{{V}_{i}(db_i|b^{i-1}): i=0, \ldots, n\}\in{\cal M}({\mb  B}_i): i=0, \ldots, n\}$, then 
\begin{align}
C_{A^n \rar B^n}^{FB, B.2} =&
\sup_{  \{\pi_i(da_i|b^{i-1}): i=0, \ldots, n\} }\sum_{i=0}^{n}
\int\log\left(\frac{dQ_i(\cdot| b_{i-1},b_{i-2},a_i)}{d\Pi^{{\pi}}_i(\cdot| b^{i-1})}(b_i)\right){\bf P}^{{\pi}}(d{b}_i,db^{i-1},da_i)   \label{ineq_10} \\
=&\sup_{\{\pi_i(da_i|b^{i-1}): i=0, \ldots, n\} } \hso \inf_{ \{ {V}_i(db_i|b^{i-1})\in{\cal M}( {\mb  B}_i): i=0, \ldots, n \} }    \sum_{i=0}^n \int \log\left(\frac{dQ_i(\cdot| b_{i-1}, b_{i-2},a_i)}{d{V}_i(\cdot|b^{i-1})}(b_i)\right){\bf P}^{\pi}(db_i, db^{i-1}, da_i) \label{gioler48aa}
 \end{align}
where $\{{\bf P}_i^{\pi}(b_i, b^{i-1}, d{a}_i): i=0, \ldots, n\}$ is defined by the channel distribution  and  $\{\pi_i(da_i|b^{i-1}): i=0,1, \ldots, n\} \in \overline{\cal P}_{[0,n]}^{A}$.
 Since  $\{{V}_{i}(db_i|b^{i-1})\in{\cal M}({\mb  B}_i): i=0, \ldots, n\}$ is arbitrary, then an upper bound for (\ref{gioler48aa}) is obtained as follows. Assume the arbitrary channel output conditional probability is the one satisfying the conditional independence 
\begin{align}
{V}_{i}(db_i|b^{i-1}) = \overline{V}_{i}(db_i|b_{i-1}, b_{i-2})-a.a.b^{i-1}, \hso  
i=0,1,\ldots,n. \label{gioler49bb_B}
\end{align}
Define the pay-off 
\begin{align}
\ell_i(a_i,s_i) \tri \int_{{\mb B}_i} \log\left(\frac{dQ_i(\cdot|s_i, a_i)}{d\overline{V}_i(\cdot|s_i)}(b_i)\right){Q}_i(db_i|s_i, a_i), \hso s_i \tri (b_{i-1}, b_{i-2}), \hso i=0, \ldots, n.
\end{align}
Then, by removing the infimum in  (\ref{gioler48aa}) over  $\{ {V}_i(db_i|b^{i-1})\in{\cal M}( {\mb  B}_i): i=0, \ldots, n\}$, and substituting (\ref{gioler49bb_B}), the following  upper bound is obtained.
\begin{align}
C_{A^n \rar B^n}^{FB, B.2} 
\leq& \sup_{ \{\pi_i(da_i|b^{i-1}): i=0, \ldots, n\} }   \sum_{i=0}^n \int \log\left(\frac{dQ_i(\cdot| b_{i-1},b_{i-2}, a_i)}{d\overline{V}_i(\cdot|b_{i-1}, b_{i-2})}(b_i)\right){\bf P}^{\pi}(db_i, db_{i-1},db_{i-2}, da_i), \hso \forall \overline{V}_i(db_i|b_{i-1}, b_{i-2}), i=0, \ldots, n 
 \label{gioler50a}\\
\sr{(\alpha)}{=} & \sup_{ \{\pi_i(da_i|b^{i-1}): i=0, \ldots, n\} }   \sum_{i=0}^n {\bf E}^{\pi} \Big\{\int_{{\mb B}_i} \log\left(\frac{dQ_i(\cdot| b_{i-1},b_{i-2}, a_i)}{d\overline{V}_i(\cdot|b_{i-1}, b_{i-2})}(b_i)\right){Q}_i(db_i| b_{i-1},b_{i-2}, a_i)\Big\}, \hso \forall \overline{V}_i(db_i|b_{i-1}, b_{i-2}), i=0, \ldots, n \label{gioler50aa} \\
\equiv & \sup_{ \{\pi_i(da_i|b^{i-1}): i=0, \ldots, n\} }   \sum_{i=0}^n {\bf E}^{\pi} \Big\{  \ell_i(A_i,S_i) \Big\}, \hso \forall \overline{V}_i(db_i|s_i),\hso  i=0, \ldots, n \label{gioler50aa_a}\\
\sr{(\beta)}{=} & \sup_{    \{\pi_i^2(da_i|b_{i-1}, b_{i-2}): i=0, \ldots, n\} } \sum_{i=0}^n \int\log\left(\frac{d{Q}_i(\cdot|b_{i-1}, b_{i-2}, a_i)}{d \overline{V}_i(\cdot|b_{i-1}, b_{i-2})}(b_i)\right){\bf P}^{{\pi}^2}(d{b}_i,d{b}_{i-1},db_{i-2}, d{a}_i), \hso \forall \overline{V}_i(db_i|b_{i-1}, b_{i-2}), i=0, \ldots, n  \label{gioler50}
\end{align}
where $(\alpha)$ is by definition, and   $(\beta)$ is obtained as follows. Since the pay-off in  (\ref{gioler50aa_a}), i.e, $\ell_i(\cdot, \cdot)$ is a function of  $(a_i,s_i)$, for $i=0, \ldots, n$, then $\{S_i: i=0, \ldots, n\}$ is the state process controlled by $\{A_i: i=0, \ldots, n\}$.  Moreover,    by virtue of  Bayes' theorem, and the channel definition, the  following identity holds. 
\bea
{\bf P}(ds_{i+1}|s^{i}, a^i)= {\bf P}(ds_{i+1}|s_i,a_i), \hso i=0, \ldots, n-1.\label{MT_Case_1_a_A}
\eea
In view of the  Markov structure of the controlled process $\{S_i: i=0, \ldots, n\}$, i.e., (\ref{MT_Case_1_a_A}),  then the expectation of the pay-off in (\ref{gioler50aa_a}) is given by 
\begin{align}
 \sum_{i=0}^n {\bf E}^{\pi} \Big\{  \ell(A_i,S_i) \Big\}=\int_{{\mathbb B}_{i-1}\times {\mathbb B}_{i-2} \times {\mathbb A}_i} \ell(a_i, s_i) {\bf P}(da_i|s_i) {\bf P}^\pi(ds_i), \hst \forall \overline{V}_i(db_i|s_i),\hso  i=0, \ldots, n. \label{gioler50aa_a_a}
\end{align}
Thus, $\big\{{\bf P}(ds_{i+1}|s_i,a_i), \hso i=0, \ldots, n-1\big\}$ is the controlled object and by the discussion on classical stochastic optimal control, i.e., Feature 1,  the 
supremum over $\{\pi_i(da_i|b^{i-1}): i=0, \ldots, n\} $ in (\ref{gioler50a}),  satisfies $\pi_i(da_i|b^{i-1})=\pi_i^{2}(da_i|b_{i-1}, b_{i-2})-a.a.b^{i-1}, i=0, \ldots, n$.\\
Alternatively, this is shown directly as follows. Note that  
\bea
{\bf P}^\pi(ds_i)= \int {\bf P}(ds_i|s_{i-1}, a_{i-1}) {\bf P}(da_{i-1}|s_{i-1}) {\bf P}^\pi (ds_{i-1})  \hso \Longrightarrow \hso {\bf P}^{\pi^2}(ds_i)= \int {\bf P}(ds_i|s_{i-1}, a_{i-1}) {\bf P}(da_{i-1}|s_{i-1}) {\bf P}^{\pi^2} (ds_{i-1})
\eea
that is, ${\bf P}^\pi(ds_i)\equiv {\bf P}^{\pi^2}(ds_i)$, depends on $\big\{{\bf P}(da_{j}|s_j)\equiv \pi_j^{2}(da_j|s_j): j=0, \ldots, i-1\}$,  and hence the right hand side in (\ref{gioler50aa_a_a}) depends on $\big\{{\bf P}(da_{j}|s_j)\equiv \pi_j^{2}(da_j|s_j): j=0, \ldots, i\}$. This implies,  the 
supremum over $\{\pi_i(da_i|b^{i-1}): i=0, \ldots, n\} $ in (\ref{gioler50a}),  satisfies $\pi_i(da_i|b^{i-1})=\pi_i^{2}(da_i|b_{i-1}, b_{i-2})-a.a.b^{i-1}, i=0, \ldots, n$, that is, the controlled object is second-order Markov, and consequently,  ${\bf P}^\pi(db_i, db_{i-1},db_{i-2}, da_i)= Q_i(db_i|b_{i-1},b_{i-2}, a_i) \otimes \pi_i^{2}(da_i|b_{i-1}, b_{i-2}) \otimes {\bf P}^{\pi^2}(db_{i-1}, db_{i-2})\equiv {\bf P}^{\pi^2}(db_i, db_{i-1},db_{i-2}, da_i), i=0, \ldots, n$.
Hence,  (\ref{gioler50}) is obtained.   \\
Since $\{ {V}_i(db_i|b^{i-1})\in{\cal M}( {\mb  B}_i): i=0, \ldots, n\}$ satisfying   (\ref{gioler49bb_B}), is an arbitrary distribution, let 
\begin{align}
{V}_{i}(db_i|b^{i-1})\tri &\int_{{\mb A}_i}Q_{i}
(d{b}_i|b_{i-1},b_{i-2}, a_i)\otimes {\bf P}^\pi(d{a}_i|b^{i-1})  \equiv {V}_{i}^\pi(db_i|b^{i-1})\nonumber \\
=&\overline{V}_i(db_i|b_{i-1}, b_{i-2}) \tri   \int_{{\mb A}_i}Q_{i}
(d{b}_i|b_{i-1},b_{i-2}, a_i)\otimes \pi_i^2(d{a}_i|b_{i-1}, b_{i-2})=\overline{V}_{i}^{\pi^2}(db_i|b_{i-1}, b_{i-2} )\\
\equiv& \nu_i^{\pi^2}(db_i|b_{i-1}, b_{i-2})-a.a. b^{i-1}, \ \ 
i=0,1,\ldots,n. \label{gioler49bb}
\end{align}
Then by substituting (\ref{gioler49bb}) into  (\ref{gioler50}), the following inequality is obtained. 
\begin{align}
C_{A^n \rar B^n}^{FB, B.2} 
\leq&  
\sup_{ \{\pi_i^2(da_i|b_{i-1}, b_{i-2}): i=0, \ldots, n\}}\sum_{i=0}^{n}
\int\log\left(\frac{dQ_i(\cdot| b_{i-1},b_{i-2}, a_i)}{dv^{{\pi}^2}_i(\cdot| b_{i-1}, b_{i-2})}(b_i)\right){\bf P}^{{\pi}^2}(d{b}_i,db_{i-1},db_{i-2},da_i)\\ 
\equiv & \sup_{ \{\pi_i^2(da_i|b_{i-1}, b_{i-2}): i=0,\ldots, n\} }\sum_{i=0}^{n}
I(A_i;B_i|B_{i-1}, B_{i-2})\label{equl_20}
\end{align}
Combining (\ref{gioler47})  and (\ref{equl_20}),   the supremum over     $\{\pi_i(da_i|b^{i-1}): i=0, \ldots, n\}$  in $C_{A^n \rar B^n}^{FB, B.2} $ occurs in the subset 
$ \sr{\circ}{\cal P}_{[0,n]}^{B.2}$, and hence  (\ref{gioler41})-(\ref{gioler46}) are a consequence of this fact.
{ Part B.} Using the definition of the transmission cost function  (\ref{UMC_rf20}) and (\ref{MT_Case_1_a_A}), then 
\begin{align}
 \sum_{i=0}^n {\bf E}^{{ P}} \Big\{ \gamma_i^{B.1}(A_i, B_{i-1}) \Big\} =\sum_{i=0}^n {\bf E}^{{\pi}} \Big\{ \gamma_i^{B.1}(A_i, B_{i-1}) \Big\}
 =\sum_{i=0}^{n} {\bf E}^{\pi^2}\Big\{\int_{  {\mathbb A}_i}   \gamma_i^{B.1}(a_i, B_{i-1}) \pi^2(a_i| B_{i-1})\Big\}
  \end{align}
  where the last equality is due to $S_i=(B_{i-1},B{i-2}), i=0, \ldots, n$ and the sample path pay-off depends only on $B_{i-1}$.
  The above expectation is a function of $\{\pi_i^2(da_i|b_{i-1}, b_{i-2}): i=0, \ldots, n\}$, hence 
by  { Part A}, the results are obtained. This completes the prove.
\end{proof}

The following remark clarifies certain aspects of the application of variational equality.\\

\begin{remark}(On the application of variational equality in Lemma~\ref{thm_BAM_2})\\
(a) The important point to be made regarding Lemma~\ref{thm_BAM_2} is that, for any channel of Class B with $M=2$ and transmission cost of class B with $K=1$,  the information structure  of channel input conditional distribution, which  maximizes directed information $I(A^n \rar B^n)$  is  ${\cal I}_i^P=\{b_{i-1}, b_{i-2}\}, i=0, 1, \ldots, n$, and  it is  determined by $\max\{K, M\}$.\\ 
(b) From Lemma~\ref{thm_BAM_2}, it follows that if the channel is replaced by $\{Q_i(db_i|b_{i-1}, a_i): i=0, \ldots, n\}$, the information structure  of channel input conditional distribution, which  maximizes directed information $I(A^n \rar B^n)$  is  ${\cal I}_i^P=\{b_{i-1}\}, i=0, 1, \ldots, n$, and the corresponding characterization of FTFI capacity is 
\bea
C_{A^n \rar B^n}^{FB, B.1} 
=
\sup_{ \big\{\pi_i^M(da_i|b_{i-1})\in {\cal M}({\mb A}_i), i=0, \ldots, n\big\} }\sum_{i=0}^{n}
I(A_i;B_i|B_{i-1}).
\eea
(c) By Lemma~\ref{thm_BAM_2}, if the channel is memoryless (i.e., $M=0$), and the transmission cost constraint is $\frac{1}{n+1} {\bf E}\big\{ \sum_{i=0}^n \gamma_i^{B.1}(A_i, B_{i-1})\big\} \leq \kappa$, then the  information structure  corresponding to the channel input conditional distribution, which  maximizes directed information $I(A^n \rar B^n)$,  is  ${\cal I}_i^P=\{b_{i-1}\}, i=0, 1, \ldots, n$, and the corresponding characterization of FTFI capacity is 
\bea
C_{A^n \rar B^n}^{FB, B.1}(\kappa) =
\sup_{\pi_i^1(da_i|b_{i-1})\in {\cal M}({\mb A}_i), i=0, \ldots,n: \frac{1}{n+1} {\bf E}^{\pi^1}\big\{ \sum_{i=0}^n \gamma_i^{B.1}(A_i, B_{i-1})\big\} \leq \kappa }\sum_{i=0}^{n} 
\int\log\Big(\frac{dQ_i(\cdot|a_i)}{dv^{{\pi}^1}_i(\cdot| b_{i-1})}(b_i)\Big){\bf P}^{{\pi}^1}(d{b}_i,db_{i-1}, da_i) 
\eea 
 where
\begin{align}
v_i^{{\pi}^1}(d{b}_i|b_{i-1})=&\int_{{\mb A}_i}Q_i(d{b}_i|a_i) \otimes {\pi}_i^1(d{a}_i|b_{i-1}), \hso i=0,1,\ldots,n. 
\end{align}
(d) Memoryless Channels. If the transmission cost constraint in (c) is replaced by $\frac{1}{n+1} {\bf E}\big\{ \sum_{i=0}^n \gamma_i^{B.0}(A_i)\big\} \leq \kappa$, since the channel is memoryless, then the derivation of Lemma~\ref{thm_BAM_2} can be repeated with (\ref{gioler49bb_B}) replaced by  $
{V}_{i}(db_i|b^{i-1}) = \overline{V}_{i}(db_i)-a.a.b^{i-1},  
i=0,1,\ldots,n$, to deduce  that in all equations in (c), the optimal channel input distribution  $\pi_i^1(d{a}_i|b_{i-1})$ is replaced by ${\pi}_i(da_i), i=0, \ldots, n$, and $C_{A^n \rar B^n}^{FB, B.1}(\kappa)=C_{A^n \rar B^n}^{FB, B.0}(\kappa)=\sup_{\pi(da_i): i=0, \ldots, n}\sum_{i=0}^n I(A_i; B_i)$, as expected.  That is, it is possible to derive the capacity achieving conditional independence property of memoryless channels with feedback, directly, without first showing via the converse to the coding theorem that feedback does not increase capacity (see \cite{cover-thomas2006}).  \\
(e) The derivation of Theorem~\ref{thm_BAM_2} is easily extended  to any channel of Class B and transmission cost function of Class B (i.e., with $M, K$ arbitrary); this is done next.
\end{remark}

\subsubsection{\bf Channels Class B and Transmission Costs  Class A or B}
\label{case_4}
Consider any channel distribution of Class B defined by (\ref{CD_C4}), i.e., given by  $\{Q_i(db_i|b_{i-M}^{i-1}, a_i): i=0, 1, \ldots, n\}$. \\
Since  the induced  joint distribution is  ${\bf P}^P(da^i, db^i)=\otimes_{j=0}^i P_j(da_j|a^{j-1}, b^{j-1})\otimes Q_j(db_j|b_{j-M}^{j-1}, a_j), i=0, \ldots, n$, the FTFI capacity is defined by 
\begin{align}
{C}_{A^n \rar B^n}^{FB,B.M} \tri & \sup_{\big\{P_i(da_i| a^{i-1},  b^{i-1})  :i=0,\ldots,n\big\} \in {\cal P}_{[0,n]} } \sum_{i=0}^n \int 
\log\Big(\frac{dQ_i(\cdot|b_{i-M}^{i-1}, a_i)}{d\Pi_i^{ P}(\cdot|b^{i-1})}(b_i)\Big) {\bf P}^P(db_{i-M}^i, da_i) \label{cor-ISR_25a_c_C4}
\end{align}
where 
 \begin{align}
\Pi_i^{ P} (db_i| b^{i-1})=&  \int_{{\mb A}^i} Q_i(db_i|b_{i-M}^{i-1}, a_i)\otimes P_i(da_i|a^{i-1}, b^{i-1})\otimes {\bf P}^{ P}(da^{i-1}|b^{i-1}), \hso  i=0, \ldots, n. \label{CIS_9a_new_C4}
\end{align}

The next theorems presents various  generalizations of Theorem~\ref{thm_BAM_2}. \\

\begin{theorem}
\label{cor-ISR_C4}
(Characterization of FTFI capacity of channel class B and transmission costs of class A or B)\\
\noi {\bf Part A.} Suppose the channel distribution is of Class B, that is, ${\bf P}_{B_i|B^{i-1}, A^i}(db_i|b^{i-1}, a^i)=Q_i(db_i| {\cal I}_{i}^{Q} )$, where  ${\cal I}_{i}^{Q}$  is  given by
\begin{align}
  {\cal I}_{i}^{Q}=\{b_{i-M}^{i-1}, a_i\}, \hso i=0, \ldots, n \label{cor-ISR_29_C4}
\end{align}
Then   the maximization in (\ref{cor-ISR_25a_c_C4})  over  ${\cal P}_{[0,n]}$ occurs in the subset 
\begin{align}
\sr{\circ}{\cal P}_{[0,n]}^{B.M} \tri     \{   P_i(da_i | a^{i-1}, b^{i-1})= \pi_i^M(da_i|b_{i-M}^{i-1})-a.a. (a^{i-1}, b^{i-1}): i=0, 1, \ldots, n\} \subset {\cal P}_{[0,n]} .
\end{align}
 and the characterization of the FTFI feedback  capacity is given by the following expression. 
\begin{align}
{C}_{A^n \rar B^n}^{FB,B.M} 
=& \sup_{\big\{\pi_i^M(da_i | b_{i-M}^{i-1}) \in {\cal M}({\mb A}_i) : i=0,\ldots,n\big\}} \sum_{i=0}^n {\bf E}^{ \pi^M}\left\{
\log\Big(\frac{dQ_i(\cdot|B_{i-M}^{i-1},A_i)}{v_i^{ \pi^M}(\cdot|B_{i-M}^{i-1})}(B_i)\Big)
\right\}  \label{cor-ISR_25a_a_c_C4}\\
 \equiv &\sup_{\big\{\pi_i^M(da_i |b_{i-M}^{i-1}) \in {\cal M}({\mb A}_i) : i=0,\ldots,n\big\}} \sum_{i=0}^n I(A_i; B_i|B_{i-M}^{i-1})
\end{align}
where 
\begin{align}
%  \Pi_i^{ P} (db_i| b^{i-1})=&  \int_{{\mb A}^i} Q_i(db_i|%b_{i-M}^{i-1}, a_i)\otimes P_i(da_i|a^{i-1}, b^{i-1})\otimes %{\bf P}^{ P}(da^{i-1}|b^{i-1}), \hso  i=0, \ldots, n %\label{CIS_9a_new_a_c_C4} \\
  v_i^{\pi^M}(db_i | b_{i-M}^{i-1}) =& \int_{  {\mb A}_i }   Q_i(db_i |b_{i-M}^{i-1}, a_i) \otimes   {\pi}_i^M(da_i | b_{i-M}^{i-1}), \hso i=0, \ldots, n, \label{cor-ISR_31_c_C4}\\
  {\bf P}^{\pi^M}(da_i,  d b_{i-M}^i)=& Q_i(db_i|b_{i-M}^{i-1}, a_i) \otimes \pi_i^M(da_i| b_{i-M}^{i-1}) \otimes {\bf P}^{\pi^M}( b_{i-M}^{i-1}),  \hso  i=0, \ldots, n. \label{cor-ISR_32_c_C4} 
 \end{align}
{\bf Part B.} Suppose the channel distribution is of Class B as in Part A, and the maximization in (\ref{cor-ISR_25a_c_C4}) is over  ${\cal P}_{0,n}(\kappa)$, defined with respect  to   transmission cost $\gamma_i(\cdot, \cdot)$, which is measurable with respect  to  ${\cal I}_{i}^{\gamma}$ given by
\begin{align}
 {\cal I}_{i}^{\gamma}=\{a_i,  b_{i-K}^{i-1}\}, \hso i=0, \ldots, n \label{cor-ISR_29_c_C4}
\end{align}
and the analogue of  Lemma~\ref{thm_BAM_2}, { Part B}, (b) holds.\\ 
The maximization in (\ref{cor-ISR_25a_c_C4}) over $\big\{P_i(da_i|a^{i-1},b^{i-1}), i=0, \ldots, n\big\} \in{\cal P}_{0,n}(\kappa)$  occurs in the subset 
\begin{align}
\sr{\circ}{\cal P}_{[0,n]}^{B.J}(\kappa) \tri     \Big\{   P_i(da_i | & a^{i-1}, b^{i-1})= \pi_i^J(da_i|b_{i-J}^{i-1})-a.a. (a^{i-1}, b^{i-1}),  i=0, 1, \ldots, n:  \nonumber \\
& \frac{1}{n+1} {\bf E}^{\pi^J}\Big( c_{0, n}(A^n, B^{n-1})\Big) \leq \kappa  \Big\} \subset {\cal P}_{[0,n]}(\kappa), \hst J \tri \max\{M, K\} \label{cor-ISR_29_cc_C4_b2}
\end{align}
and the characterization of FTFI capacity is given by the following expression.
\begin{align}
{C}_{A^n \rar B^n}^{FB,B.J}(\kappa) 
= \sup_{\big\{\pi_i^J(da_i | b_{i-J}^{i-1}) \in {\cal M}({\mb A}_i): i=0,\ldots,n   \big\}\in \sr{\circ}{\cal P}_{[0,n]}^{B.J}(\kappa)    } \sum_{i=0}^n {\bf E}^{ \pi^J}\left\{
\log\Big(\frac{dQ_i(\cdot|B_{i-M}^{i-1},A_i)}{d\nu_i^{{ \pi}^J}(\cdot|B_{i-J}^{i-1})}(B_i)\Big)
\right\}  \label{cor-ISR_B.2}
\end{align}
where 
\begin{align}
{\bf P}^{\pi^J}(db_{i-J}^i, da_i) =&Q_i(db_i|b_{i-M}^{i-1}, a_i)\otimes \pi_i^J(da_i|b_{i-J}^{i-1}) 
\otimes {\bf P}^{\pi^J}(db_{i-J}^{i-1}),  \hso i=0,1, \ldots, n,\label{cor-ISR_B.2_1} \\
\nu_i^{\pi^J}(db_i|b_{i-J}^{i-1}) =&\int_{{\mb A}_i} Q_i(db_i|b_{i-M}^{i-1}, a_i)\otimes \pi_i^J(da_i|b_{i-J}^{i-1}),  \hso i=0,1, \ldots, n.\label{cor-ISR_B.2_2}
\end{align}
{\bf Part C.}  Suppose the channel distribution is of Class B as in Part A, and the maximization in (\ref{cor-ISR_25a_c_C4}) is over ${\cal P}_{0,n}(\kappa)$, defined with respect to a  transmission cost of Class A, $\{\gamma_i^A(a_i, b^{i-1}): i=0, \ldots, n\}$, and the analogue of  Lemma~\ref{thm_BAM_2}, { Part B}, (b) holds.\\ 
The maximization in (\ref{cor-ISR_25a_c_C4}) over $\big\{P_i(da_i|a^{i-1},b^{i-1}), i=0, \ldots, n\big\} \in{\cal P}_{[0,n]}(\kappa)$  occurs in $\overline{\cal P}_{[0,n]}^A \bigcap{\cal P}_{[0,n]} $.  
\end{theorem}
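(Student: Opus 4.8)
The plan is to mirror the two-step argument of Lemma~\ref{thm_BAM_2}, which settled the prototype $M=2, K=1$, and to carry the bookkeeping through for arbitrary memory orders by enlarging the controlled state appropriately. Since a Class~B channel $\{Q_i(db_i|b_{i-M}^{i-1},a_i)\}$ is a special case of Class~A (its kernel merely ignores the outputs older than $M$ steps), Theorem~\ref{thm-ISR} applies verbatim and reduces the supremum over ${\cal P}_{[0,n]}$ to the subset $\overline{\cal P}_{[0,n]}^A=\{\pi_i(da_i|b^{i-1})\}$, yielding (\ref{cor-ISR_25a_c_C4}) with $\Pi_i^{\pi_i}$ as in (\ref{CIS_9a_new_C4}). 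Because $\sr{\circ}{\cal P}_{[0,n]}^{B.M}\subset\overline{\cal P}_{[0,n]}^A$, restricting the supremum to distributions of the form $\pi_i^M(da_i|b_{i-M}^{i-1})$ immediately gives the right-hand side of (\ref{cor-ISR_25a_a_c_C4}) as a lower bound for ${C}_{A^n\rar B^n}^{FB,B.M}$, exactly as in inequalities (\ref{ineq_1})--(\ref{gioler47}).

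For the reverse inequality (Part~A) I would invoke the variational equality of Theorem~\ref{thm-var} to rewrite the characterization as $\sup_\pi\inf_V\sum_i\int\log(dQ_i/dV_i)\,{\bf P}^\pi$, then drop the infimum after specializing the arbitrary output kernel to the conditionally-independent form $V_i(db_i|b^{i-1})=\overline V_i(db_i|b_{i-M}^{i-1})$. Setting $s_i\tri b_{i-M}^{i-1}$, the resulting per-stage pay-off $\ell_i(a_i,s_i)=\int\log(dQ_i(\cdot|s_i,a_i)/d\overline V_i(\cdot|s_i))\,Q_i(db_i|s_i,a_i)$ is a function of $(a_i,s_i)$ alone. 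The decisive structural fact is that $\{S_i\tri B_{i-M}^{i-1}\}$ is a Markov controlled process: $s_{i+1}$ is obtained from $s_i$ by discarding $b_{i-M}$ and appending $b_i\sim Q_i(\cdot|s_i,a_i)$, so Bayes' rule together with the channel definition gives ${\bf P}(ds_{i+1}|s^i,a^i)={\bf P}(ds_{i+1}|s_i,a_i)$, generalizing (\ref{MT_Case_1_a_A}). Feature~1 of the stochastic-control discussion then forces the maximizing input to satisfy $\pi_i(da_i|b^{i-1})=\pi_i^M(da_i|b_{i-M}^{i-1})$; finally choosing $\overline V_i=\nu_i^{\pi^M}$ from (\ref{cor-ISR_31_c_C4}), the output kernel actually induced by the channel and $\pi^M$, closes the gap and yields equality. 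This simultaneously establishes $\Pi_i^P(db_i|b^{i-1})=\nu_i^{\pi^M}(db_i|b_{i-M}^{i-1})$, i.e. the output process is $M$-th order Markov.

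Part~B follows the same template with the state enlarged to $s_i\tri b_{i-J}^{i-1}$, $J=\max\{M,K\}$, so that it simultaneously supports the channel kernel (which reads $b_{i-M}^{i-1}$) and the cost $\gamma_i^{B.K}(a_i,b_{i-K}^{i-1})$ (which reads $b_{i-K}^{i-1}$); both are $\sigma(S_i)$-measurable since $M,K\le J$. One checks as above that $\{S_i=B_{i-J}^{i-1}\}$ remains Markov under any $\pi\in\overline{\cal P}_{[0,n]}^A$, expresses the average cost as $\sum_i{\bf E}^{\pi^J}\{\int\gamma_i^{B.K}(a_i,b_{i-K}^{i-1})\pi_i^J(da_i|s_i)\}$ (a functional of $\pi^J$ alone), and invokes the Lagrangian equivalence (\ref{UMC_rf21})--(\ref{UMC_rf22}) to reduce the constrained problem to an unconstrained one whose running pay-off is $\ell_i-s\,\gamma_i^{B.K}$; the Step~2 argument then restricts the optimum to $\pi_i^J(da_i|b_{i-J}^{i-1})$, giving (\ref{cor-ISR_29_cc_C4_b2})--(\ref{cor-ISR_B.2_2}). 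Part~C is immediate: a Class~A cost $\gamma_i^A(a_i,b^{i-1})$ depends on the entire output history, so the cost term alone forces the state to remain $s_i=b^{i-1}$ and no reduction below $\overline{\cal P}_{[0,n]}^A$ is possible; the claim is then exactly Theorem~\ref{thm-ISR}, Part~B, applied to the Class~B channel viewed as a Class~A channel.

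The main obstacle — as flagged after (\ref{CIS_10}) and in the proof of Lemma~\ref{thm_BAM_2} — is that one may not presuppose the reduced-memory form of the true output kernel $\Pi_i^P(db_i|b^{i-1})$, since it is pinned down by the joint law ${\bf P}^P$; the entire purpose of Step~2 is that the variational equality legitimately replaces $\Pi_i^P$ by an arbitrary $V_i$ of the desired memory, thereby converting the non-fixed directed-information density into a genuine additive stochastic-control pay-off to which Feature~1 applies. The delicate bookkeeping is therefore twofold: verifying the Markov property of the enlarged state $B_{i-J}^{i-1}$ for general $J$, and confirming that both the channel kernel and the cost function are $\sigma(S_i)$-measurable, so that the reduction to $\pi_i^J(da_i|b_{i-J}^{i-1})$ is valid. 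Both are routine once $J=\max\{M,K\}$ is correctly identified, but they are the points where the argument must be stated with care.
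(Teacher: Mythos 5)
Your proposal is correct and follows essentially the same route as the paper, whose own proof simply defers to "repeating the derivation of Lemma~\ref{thm_BAM_2}" with the enlarged state $S_i=B_{i-J}^{i-1}$, $J=\max\{M,K\}$, and the achievable choice $V_i=\nu_i^{\pi^J}$ in the variational equality; your explicit verification of the Markov property of the enlarged state and the $\sigma(S_i)$-measurability of kernel and cost is exactly the bookkeeping the paper leaves implicit. Your Part~C reasoning (the Class~A cost pins the state to the full history $b^{i-1}$, so no reduction below $\overline{\cal P}_{[0,n]}^{A}$ is possible) also matches the paper's argument.
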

\begin{proof} Part A. The derivation is based on the results obtained thus far, using Step 1 and Step 2 of the Two-Step procedure.    By Step 2 of the Two-Step Procedure, repeating the derivation of  Lemma~\ref{thm_BAM_2}, if necessary,  it can be shown that the optimal channel input distribution occurs in  $\sr{\circ}{\cal P}_{[0,n]}^{B.M}$. \\{ Part B.} The case with transmission cost is shown by applying Lagrange duality to define the unconstraint problem, and then noticing that  the upper bound resulting from the variational equality of directed information is achievable, provided the arbitrary distribution (analogue of (\ref{gioler49bb})) is chosen so that ${V}_i(db_i|b^{i-1})= {v}^{\pi^J}(db_i|b_{i-J}^{i-1})-a.a.b^{i-1}, i=0,1, \ldots, n, J\tri \max\{M, K\}$, establishing (\ref{cor-ISR_29_cc_C4_b2}).  
\\{ Part C.} Since the transmission cost is of Class  A, $\{\gamma_i^A(a_i, b^{i-1}): i=0, \ldots, n\}$, and the Channel distribution is of Class B, the statement of Theorem~\ref{thm-ISR}, Part C holds, hence the set of all channel input distributions, which maximize directed information $I(A^n\rar B^n)= \sum_{i=0}^n \int 
\log\Big(\frac{dQ_i(\cdot|b_{i-M}^{i-1}, a_i)}{d\Pi_i^{ P}(\cdot|b^{i-1})}(b_i)\Big) {\bf P}^P(db_{i-M}^i, da_i)$,  occur in the set $\overline{\cal P}_{[0,n]}^A$. Consequently, the channel output conditional  probabilities are given by 
\begin{align}
\Pi_i^{ P} (db_i| b^{i-1})=&  \int_{{\mb A}^i} Q_i(db_i|b_{i-M}^{i-1}, a_i)\otimes {\pi}(da_i|b^{i-1})\equiv \Pi_i^{\pi} (db_i| b^{i-1}), \hso  i=0, \ldots, n \label{CIS_9a_new_C4_new}
\end{align}
However, any attempt to apply the variational equality of directed information, as done in Lemma~\ref{thm_BAM_2}, to derive  upper bounds on the corresponding directed information,  which are achievable over arbitrary distributions, $\{{V}_{i}(db_i|b^{i-1})\in{\cal M}({\mb  B}_i): i=0, \ldots, n\}$, which satisfy  conditional independence  condition
\begin{align}
{V}_{i}(db_i|b^{i-1}) = \overline{V}_{i}(db_i|b_{i-L}^{i-1})-a.a.b^{i-1}, \hso \mbox{for any finite nonnegative $L$},   \hso  
i=0,1,\ldots,n \label{gioler49bb_BBB}
\end{align} 
will fail. This is  because the transmission cost of Class A, depends, for each $i$, on the entire past output symbols $\{b^{i-1}\}$, and hence the maximization step, using stochastic optimal control,  over channel input distributions from the set $\overline{\cal P}_{[0,n]}^A$, satisfying the average transmission cost constraint cannot occur is a smaller subset, i.e., recall Feature 1 of the discussion on classical stochastic optimal control. This completes the prove.
\end{proof}

\subsection{Implications on Dynamic Programming Recursion}
In this section, the implications of the information structures of the optimal channel input distributions, are discussed in the context of dynamic programming. 

{\bf Channels Class B and Transmission Costs Class B.}  Consider  a  channel distribution and transmission cost function, both  of Class B, given by ${\bf P}_{B_i|B^{i-1}, A^i}(db_i|b^{i-1}, a^i)=Q_i(db_i|b_{i-M}^{i-1},a_i)-a.a.(b^{i-1}, a^i),\gamma^{B.K}_{i}(a_i,b_{i-K}^{i-1}), i=0, \ldots, n$. \\
Since the output  process $\{B_i: i=0, \ldots, n\}$ is $J=\max\{M,K\}-$order Markov, i.e., (\ref{cor-ISR_B.2_2}), holds, 
and  the characterization of FTFI capacity is given by (\ref{cor-ISR_B.2}), the optimization over  $\sr{\circ}{\cal P}_{[0,n]}^{B.J }(\kappa )$ can be solved via  dynamic programming, as follows.  \\
Let $C_t^{B.J}: {\mb B}_{t-J}^{t-1} \longmapsto {\mb R}$ denote the cost-to-go corresponding to (\ref{cor-ISR_B.2}) from time ``$t$'' to the terminal time   ``$n$'' given the values of the output and input $B_{t-J}^{t-1}=b_{t-J}^{t-1}$, defined as follows.  \\
\begin{align}
C_t^{B.J}(b_{t-J}^{t-1}) =& \sup_{  \pi_i^J(da_i|b_{i-J}^{i-1}):\; i=t, t+1, \ldots, n}\hso {\bf E}^{\pi^{J}}\Big\{ \sum_{i=t}^n \Big[\int_{ {\mb B}_i }    \log\Big(\frac{dQ_i(\cdot| b_{i-M}^{i-1}, A_i)}{d\nu_i^{\pi^J}(\cdot| b_{i-J}^{i-1})}(b_i)\Big)   Q_i(db_i| b_{i-M}^{i-1}, A_i)- s \gamma_i^{B.K}(A_i, b_{i-K}^{i-1}) \Big]\Big| B_{t-J}^{t-1}=b_{t-J}^{t-1}   \Big\} \label{NCM-B.2-DP1_CC}
  \end{align}
  where $s \in [0, \infty)$ is the Lagrange and the term $(n+1)\kappa$ is not included.\\
Then the  cost-to-go satisfies the following dynamic programming  recursions. 
\begin{align}
C_n^{B.J}(b_{n-J}^{n-1}) =& \sup_{  \pi_n^J(da_n|b_{n-J}^{n-1})} \Big\{\int_{{\mb A}_n \times {\mb B}_n }    \log\Big(\frac{dQ_n(\cdot| b_{n-M}^{n-1}, a_n)}{d\nu_n^{\pi^J}(\cdot| b_{n-J}^{n-1})}(b_n)\Big)   Q_n(db_n| b_{n-M}^{n-1}, a_n)  \otimes \pi_n^J(da_n|b_{n-J}^{n-1})   \nonumber \\
  &- s \int_{{\mb A}_n} \gamma_n^{B.K}(a_n, b_{n-K}^{n-1}) \pi_n^J(da_n|b_{n-J}^{n-1})
  \Big\},  \label{NCM-B.2-DP2} \\
C_t^{B.J}(b_{t-J}^{t-1}) =& \sup_{  \pi_t^J(da_t|b_{t-J}^{t-1})} \Big\{\int_{{\mb A}_t \times {\mb B}_t }    \log\Big(\frac{dQ_t(\cdot| b_{t-M}^{t-1}, a_t)}{d\nu_t^{\pi^J}(\cdot| b_{t-J}^{t-1})}(b_t)\Big)   Q_t(db_t| b_{t-M}^{t-1}, a_t)  \otimes \pi_t^J(da_t|b_{t-J}^{t-1})  \nonumber \\
&  - s \int_{{\mb A}_t} \gamma_t^{B.K}(a_t, b_{t-K}^{t-1}) \pi_t^J(da_t|b_{t-J}^{t-1})+ \int_{{\mb A}_t \times {\mathbb B}_t }   C_{t+1}^{B.J}(b_{t+1-J}^t) Q_t(db_t| b_{t-M}^{t-1}, a_t)  \otimes \pi_t^J(da_t|b_{t-J}^{t-1})    \Big\}, \hso t=n-1,  \ldots, 0. \label{NCM-B.2-DP1}
  \end{align}
The characterization of the FTFI capacity is expressed via the $C_0^{B.J}(b_{-J}^{-1})$ and the fixed distribution $\mu_{B_{-J}^{-1}}(db_{-J}^{-1})$ by 
\begin{align}
C_{A^n \rar B^n}^{FB, B.J}(\kappa) =\inf_{s\geq 0}\Big\{\int_{{\mb B}_{-J}^{-1}} C_0^{B.J}(b_{-J}^{-1})\mu_{B_{-J}^{-1}}(db_{-J}^{-1})-(n+1)s \kappa\Big\}.
\end{align}
It is obvious from the above recursions that, that the information structure, $\{B_{t-J}^{t-1}: t=0,\ldots, n\}$, of the control object, namely, $\big\{\pi_t^J(da_t|b_{t-J}^{t-1}): i=0, \ldots, n\big\}$, induces transition probabilities of the controlled object, $\big\{\nu_i^{\pi^J}(db_t| b_{t-J}^{t-1}): t=0, \ldots, n\big\}$ which are $J-$order Markov, resulting in a significant reduction in computational complexity of the above dynamic programming recursions. This is one of the fundamental differences,  compared to other dynamic programming algorithms proposed in the literature, which do not investigate the impact of  information structures, on the characterization of FTFI capacity, and by extension of feedback capacity. 

%The above dynamic programming recursions  hold for channels defined on either continuous or finite alphabet spaces. However, since the channel depends on all past channel outputs, its computation is intensive.\\
{\bf Special Case-Unit Memory Channel Output (UMCO) $M=K=1$.} Since in this case, $J=1$, the corresponding dynamic programming recursions are degenerate versions of (\ref{NCM-B.2-DP2}), (\ref{NCM-B.2-DP1}), obtained by setting   $K=M=1, J=1$, i.e., 
\begin{align}
Q_t(db_t|b_{t-M}^{i-1}, a_i) \longmapsto & Q_t(db_t|b_{i-1}, a_i), \hso \gamma_t^{B.K}(a_t, b_{t-K}^{t-1} \longmapsto \gamma_t^{B.1}(a_t, b_{t-1}), \nonumber \\
& \pi_t^{J}(da_t|b_{t-J}^{t-1})\longmapsto \pi_t^{1}(da_t|b_{t-1}), \hso C_t^{B.J}(b_{t-J}^{t-1}) \longmapsto C_t^{B.1}(b_{t-1}), \hso t=n, \ldots, 0.
\end{align}

 This degenerate dynamic programming recursion is the simplest, because the joint process $\big\{(A_i, B_i): i=0, \ldots, n\big\}$ is jointly Markov (first-order), and the channel input conditional distribution is $\big\{\pi_i^1(da|b_{i-1}): i=0, \ldots, n\big\}$. At each time $t$, the dynamic programming recursion involves 3-letters, $\{b_t, a_t, b_{t-1}\}$, where $b_{t-1}$ is fixed, for $t=n, n-1, \ldots, 0$. 

It is noted that,  for the case of finite alphabet spaces $\{({\mb A}_i, {\mb B}_i): i=0, \ldots, n\}$, the UMCO without transmission cost constraints is analyzed extensively by Chen and Berger in \cite{chen-berger2005} (and it is discussed by Berger in \cite{berger_shannon_lecture}), under the assumption the optimal channel input conditional distribution satisfies conditional independence $P(da_i|a^{i-1}, b^{i-1})=\pi_i^1(da|b_{i-1}), i=0, \ldots, n$, which then implies $\big\{(A_i, B_i): i=0, \ldots, n\big\}$ is jointly Markov, and hence the corresponding characrerization of FTFI capacity is given by $C_{A^n \rar B^n}^{FB,B.1}=\sup_{  \pi_i(da_i|b_{i-1}): i=0, \ldots, n}\sum_{i=0}^n I(A_i;B_i|B_{i-1})$.  To the best of the authors knowledge, the current paper, provides, for the first time, a derivation of the fundamental assumptions, upon which the  results derived in \cite{chen-berger2005}, are based on.

The main point to be made regarding this section, is  that the information structure of the optimal channel input distribution maximizing directed information, can be obtained for many different classes of channels with memory, and many different classes of transmission cost functions, and that the corresponding conditional independence properties of optimal channel input distributions and characterizations of the FTFI capacity  are generalizations of (\ref{CI_DMC}) and the two-letter capacity formulae of Shannon, corresponding to   memoryless  channels.\\
These structural properties of optimal channel input conditional distributions simplify the computation of the corresponding  FTFI capacity characterization, and its per unit time limiting versions, the characterization of feedback capacity. \\
 
\begin{remark}(Generalizations to channels with memory on past channel inputs)\\
The reader may verify that the methodology developed this paper, to identify the information structures of optimal channel input distributions satisfying conditional independence, is also  applicable  to   general channel distributions and transmission cost functions of the form,  
\begin{align}
\big\{ {\bf P}_{B_i|B_{i-M}^{i-1}, A_{i-L}^i}(db_i|b_{i-M}^{i-1}, A_{i-L}^i): i=0, \ldots, n\big\}, \hso \big\{\gamma_{i}(A_{i-N}^i, b_{i-K}^{i-1}): i=0, \ldots, n\big\}
\end{align}  
($\{N, L\}$ are nonnegative integers)  which   depend on past source symbols. However, such generalizations of the structural properties of optimal channel input conditional distributions,   are beyond the scope of this paper. 
%which maximize directed information $I(A^n \rar B^n)$ are currently under investigation. 
%%  in \cite{charalambous-kourtellarisIT2015_Part_3}.
\end{remark}

\section{Application Example: Gaussian Linear Channel Model}\label{example}
The objective of the application example discussed below  is to illustrate the role of the information structures of optimal channel distributions in deriving  closed form expressions for feedback capacity, capacity without feedback, and to illustrate  hidden aspects of  feedback. 

% relations.   and not the elaboration on additional tools (i.e. dynamic programming recursions) that are also essential for the comprehensive computation of the capacity. The latter is out of the scope of this paper since it involves supplementary tools and material which are elaborated in detail in \cite{}. It is shown that by employing these structural properties, the input process is decomposed into two part, a deterministic part that is a function of the information structure and an innovation process. This decomposition is then applied to provide the dynamic programming recursion of the capacity. The analysis regards a Gaussian-Linear channel model with unit memory while closed form expressions of the capacity are provided for several special cases.\\

Consider the time-invariant version of a Gaussian-Linear Channel Model (G-LCM) of class B with transmission cost  of Class B,  defined by 
\begin{align}
&B_i   = C\; B_{i-1} +D \; A_i + V_{i},\hso B_{-1}=b_{-1},  \hso i= 0, \ldots, n, \label{LCM-A.1_a_TI} \\   
&\frac{1}{n+1} \sum_{i=0}^n {\bf E} \Big\{ \langle A_i, R A_i \rangle + \langle B_{i-1}, Q B_{i-1} \rangle \Big\}\leq \kappa, \hso   R \in {\mb S}_{++}^{q\times q}, \; Q \in {\mb S}_{+}^{p \times p} \label{LCM-A.1_aa_TI}\\
&{\bf P}_{V_i|V^{i-1}, A^i}(dv_i|v^{i-1}, a^{i})={\bf P}_{V_i}(dv_i)-a.a.(v^{i-1}, a^i), \hso V_i \sim N(0, K_{V_i}),\hso  K_{V_i}=K_V \in S_{++}^{p\times p}, \hso i= 0, \ldots, n
\end{align} 
where $\langle\cdot, \cdot \rangle$ denotes inner product of vectors, ${\mb S}_{++}^{q\times q}$ denotes the set of symmetric positive definite $q$ by $q$ matrices and  ${\mb S}_{+}^{q \times q}$ the set of positive semi-definite matrices. The initial state $b_{-1}$ is known to the encoder and decoder. \\
From Theorem~\ref{cor-ISR_C4}, the optimal channel input distribution maximizing directed information satisfies conditional independence  $P_i(da_i|a^{i-1}, b^{i-1})=\pi_i(da_i|b_{i-1})-a.a.(a^{i-1}, b^{i-1}), i=0, \ldots, n$.  Moreover,  it can be easily shown, i.e., using  the maximum entropy properties of Gaussian distributions, or by solving the corresponding dynamic programming recursion of the FTFI capacity, that the optimal distribution satisfying the average cost constraint is Gaussian, i.e., $\pi_i(da_i|b_{i-1})\equiv \pi_i^g(da_i|b_{i-1}), i=0, \ldots, n$, which then implies the joint process is also Gaussian, i.e, $(A_i, B_i, V_i)\equiv (A_i^g, B_i^g, V_i), i=0, \ldots, n$, provided of course that the RV $B_{-1}$ is also Gaussian.\\
Any such optimal channel input conditional distribution can be realized via an orthogonal decomposition as follows.   
%Due to the structural properties of the input distribution the channel input process can be decomposed into two orthogonal parts, the deterministic part that is a function of the previous channel output symbol, $\{U_i^g: i=0, \ldots, n\}$, and the innovation process, $\{Z_i: i=0, \ldots, n\}$, with covariance $K_{Z}, \ K_{Z} \in {\mb S}_{+}^{q \times q}$. That is,
\begin{align}
&A_i^g \tri   U_{i}^g + Z_i^g, \hso U_{i}^g = g_i^{B.1}(B_{i-1}^g)\equiv \Gamma_{i,i-1} B_{i-1}^g, \hso i=0, \ldots, n,  \label{DP_UMCO_C1_TI}\\
& Z_i^g  \:\:  \mbox{is  independent of}\:\:  \Big(A^{g, i-1}, B^{g,i-1}\Big), \; Z^{g,i} \hso \mbox{is independent of} \hso V^i, i=0, \ldots, n, \label{new_11}\\
& \Big\{Z_i^g \sim N(0, K_{Z_i}) : i=0,1, \ldots, n\Big\} \: \: \mbox{is an independent Gaussian process} \label{new_12}
\end{align}
Moreover, substituting (\ref{DP_UMCO_C1_TI}) into   (\ref{LCM-A.1_a_TI}) the channel output process is given by 
\bea
B_i^g= C B_{i-1}^g+ D U_{i}^g + D Z_i^g + V_i, \hso i=0, \ldots, n. \label{DP_UMCO_C2_TI}
\eea
The corresponding characterization of the FTFI capacity is the following. 
\begin{align}
{C}_{A^n \rar B^n}^{FB,B.1} (\kappa) &
 = \sup_{\big\{(g_i^{B.1}(\cdot), K_{Z_i}),   i=0,\ldots,n   \big\} \in  {\cal E}_{[0,n]}^{B.1}(\kappa)   }   \sum_{i=0}^n H(B_i^g|B_{i-1}^g) - H(V^n)   \\ 
{\cal E}_{[0,n]}^{B.1}(\kappa)   &\tri \Big\{(g_i^{B.1}(b_{i-1}), K_{Z_i}), i=0, \ldots,n:  \frac{1}{n+1}{\bf E}^{g^{B.1}} \Big( \sum_{i=0}^n     \Big[\langle A_i^g, R A_i^g \rangle + \langle B_{i-1}^g, Q B_{i-1}^g \rangle \Big] \Big) \leq \kappa \Big\}.  \label{fea_IH}
\end{align}
Suppose the pair $(g_i^{B.1}(\cdot), K_{Z_i})\equiv (g^{B.1}(\cdot), K_{Z}), i=0, \ldots,n$, i.e.,  is restricted  to time-invariant,   and consider 
the per unit time limiting version of the characterization of FTFI capacity, defined by $C_{A^\infty \rar B^\infty}^{FB} (\kappa)\tri \lim_{n \longrightarrow \infty} \frac{1}{n+1}\tri C_{A^n \rar B^n}^{FB,B.1}(\kappa)$.  Then one method to obtain $C_{A^\infty \rar B^\infty}^{FB, B.1} (\kappa)$ is  via dynamic programming as follows \cite{hernandezlerma-lasserre1996,kumar-varayia1986}. \\
%The supremum in ${C}_{A^\infty \rar B^\infty}^{B.1} (\kappa)$ over all non-stationary policies $\big\{(g_i^{B.1}(\cdot), K_{Z_i}),   i=0,\ldots,n   \big\} \in  {\cal E}_{[0,n]}^{B.1}(\kappa)$ occurs in the subclass of stationary policies $\big\{(g_i^{B.1}(\cdot), K_{Z})=(g^{B.1}(\cdot), K_{Z}),   i=0,\ldots,n   \big\}$, and converge, as $n \longrightarrow \infty$ to  stationary policies $(g^{B.1}(\cdot), K_{Z})\in {\cal E}_{[0,\infty]}^{B.1}(\kappa)$, where 
%\begin{align}
% {\cal E}_{[0,\infty]}^{B.1}(\kappa)   \tri \Big\{g^{B.1},  u=g^{B.1}(b),  K_{Z} :
%&\lim_{n \longrightarrow \infty} \frac{1}{n+1}{\bf E}^{g^{B.1}} \Big\{ \sum_{i=0}^n     \Big[\langle A_i^g, R A_i^g \rangle + \langle B_{i-1}^g, Q B_{i-1}^g \rangle \Big] \Big\} \leq \kappa \Big\}.
%\end{align}
Under appropriate conditions expressed in terms of the matrices $\{C, D, R, Q\}$ \cite{kumar-varayia1986},   there  exists a pair $\Big(J^{B.1,*},C^{B.1}(b)\Big), J^{B.1,*} \in {\mathbb R}$, $C^{B.1}: {\mathbb R}^p \longmapsto {\mathbb R}$, which  satisfies the following dynamic programming equation corresponding to  ${C}_{A^\infty \rar B^\infty}^{FB,B.1} (\kappa)$.   
\begin{align}
J^{B.1,*} + C^{B.1}(b) = &\sup_{ (u,K_{Z})\in {\mathbb R}^q\times {\mb S}_{+}^{q\times q}  }  \Bigg\{ \frac{1}{2} \log \frac{ | D K_{Z} D^T+K_{V}|}{|K_{V}|} - tr\Big(s R K_{Z}\Big)  + s \kappa   - s \Big[ \langle u, R u\rangle + \langle b, Q b \rangle  \Big] \nonumber \\
&  + {\bf E}^{g^{B.1}}\Big\{  C^{B.1}(B_{0}^g)       \Big| B_{-1}^g=b\Big\}  \Bigg\} \label{DP_IH}
\end{align}
where    $s\equiv s(\kappa) \geq 0$ is the Lagrange multiplier associated with the average transmission cost constraint. \\
It can be verified that the solution to the dynamic programming is  given by
\begin{align}
C^{B.1}(b)=&-\langle b, P b\rangle, \label{DP-UMCO_C13_IH_aa} \\
J^{B.1,*}= &  \sup_{ K_{Z}\in  {\mb S}_+^{q\times q}}\Big\{  \frac{1}{2} \log \frac{ | D K_{Z} D^T+K_{V}|}{|K_{V}|}+s\kappa- tr\Big(s\; R K_{Z}\Big)- tr \Big(P\Big[D K_{Z}D^T+ K_{V}\Big]\Big)\Big\}  \label{DP-UMCO_C13_IH}  
\end{align}
and that the  optimal stationary policy $g^{B.1,*}(\cdot)$  is given by
\begin{align}
&g^{B.1,*}(b)= \Gamma^*b, \\
&\Gamma^*= - \Big(D^T PD +s R\Big)^{-1}D^T P C, \label{opt_gam}\\
&P=C^T P C+s Q-C^T P D\Big(D^T P D+s R\Big)^{-1} \Big(C^T P D\Big)^T. \label{DP-UMCO_C12_IH_new}\\
&spec\Big(C+D \Gamma^*\Big) \subset {\mathbb D}_o. \label{st_1}
\end{align}
where $spec(A) \subset {\mathbb C}$ denotes the spectrum of a matrix $A \in {\mathbb R}^{q \times q}$, i.e.,  the set of all its eigenvalues, and ${\mathbb D}_o \tri \big\{c \in {\mathbb C}: |c| <1\big\}$ denotes  the open unit disc of the space of complex number ${\mathbb C}$. Note that (\ref{DP-UMCO_C12_IH_new}) is the well-known Riccati equation of Gaussian Linear Quadratic stochastic optimal control problems, and $\Gamma^*\equiv \Gamma^*(P)$ corresponds to the positive semi-definite solution $P\succeq 0$ of the Riccati equation \cite{kumar-varayia1986}, satisfying (\ref{st_1}), to   ensure the eigenvalues of the closed loop channel output recursion  (\ref{DP_UMCO_C2_TI}), i.e., corresponding to  $U_i^{g,*}=\Gamma^* B_{i-1}^g, i=0, \ldots, $, are within the open unit disc ${\mathbb D}_o$.  \\
The optimal covariance $K_{Z}^*$ is determined  from the optimization problem \eqref{DP-UMCO_C13_IH} and the Lagrange multiplier, for a given $\kappa$, i.e.,  $s\equiv s(\kappa)$ is found from the average constraint. The feedback capacity is given by  
\bea
{C}_{A^\infty \rar B^\infty}^{FB, B.1} (\kappa)= J^{B.1,*}\equiv J^{B.1,*}(\kappa), \hso \kappa \in [\kappa_{min}, \infty) \subset [0, \infty).   \label{opt_s}
  \eea    
The analysis of the Multiple Input Multiple Output       G-LCM is done in \cite{chakouloy2016}, and requires extensive investigation of properties of solutions to Riccati equations. The complete solution for the scalar G-LCM is presented below, to illustrate additional features, which are not given in \cite{chakouloy2016}. \\ 
{\bf Scalar Case, $p=q=1, D=1$.}  By solving \eqref{DP-UMCO_C12_IH_new}, the  positive semi-definite solution of the Riccati equations is given by  
\begin{align}
P=\frac{s\, \left(Q - R + C^2\, R + F\right)}{2}, \hso  \hso
F =\sqrt{\left(R[C-1]^2+Q\right)\, \left(R[C+1]^2 +Q \right)}\label{p2} 
\end{align}
 The value of $\Gamma^*\equiv \Gamma^*(P)$ is obtained by substituting the positive semi-definite   solution of the Riccati equation in \eqref{opt_gam}, to obtain
\bea
\Gamma^*=-\frac{C\, \left(Q - R + C^2\, R + F\right)}{Q + R + C^2\, R + F}, \hso |C+\Gamma^*| <1. \label{gam_star}
\eea
This is valid irrespectively of whether $C$ is stable, i.e., $|C|<1$ or unstable, i.e., $|C|\geq 1$, and includes, as we show shortly, the special case $Q=C=0$, i.e., corresponding to the memoryless channel.  \\ 
The optimal covariance $K_{Z}^*$, is obtained by solving the optimization problem \eqref{DP-UMCO_C13_IH}, which  gives
\bea
K_{Z}^*=\frac{1}{s\, \left(Q + R + C^2\, R + F\right)} - K_{V}\geq 0 \label{opt_kz}
\eea
while   the Lagrange multiplier, $s$, is found from the average constraint or by performing the infimum over $s\geq 0$ of $J^{B.1, *}$ evaluated at $(P, K_Z^*)$ given  by 
\eqref{DP-UMCO_C13_IH}, to obtain 
\bea
s\equiv s(\kappa)=\frac{1}{2(\kappa +  K_{V}R)}.\label{opt_s2}
\eea
The minimum power $\kappa$ required so that  the optimal covariance is non-negative, i.e.,  $K_{Z}^*\geq 0$ is found by substituting \eqref{opt_s2} in \eqref{opt_kz} and it is given by 
\bea
\kappa_{min}=\frac{K_{V}\, \left(Q - R + C^2\, R + F\right)}{2} \geq 0.
\eea  
Finally by substituting \eqref{opt_kz} and \eqref{opt_s2} in \eqref{DP-UMCO_C13_IH}, the following expression of the feedback capacity is obtained.

\bea
{C}_{A^\infty \rar B^\infty}^{FB, B.1}  (\kappa)= \left\{ \begin{array}{llll}  0 & \mbox{if}  & \kappa \in [0,\kappa_{min})   \\
 \frac{1}{2}\mathrm{log}\!\left(\frac{2(\kappa +  K_{V}\, R)}{K_{V}\, \left(Q + R + C^2\, R + F\right)}\right)
 & \mbox{if}   &  \kappa \in [\kappa_{min}, \infty) . 
\end{array} \right.  \label{dual_CAP}
\eea
It can be verified that the value of $\kappa_{min}$ depends on the values of $Q=0$, $Q>0$ and $|C|<1$, $|C|\geq 1$. For $Q=C=0$ the feedback capacity ${C}_{A^\infty \rar B^\infty}^{FB, B.1}  (\kappa)$ degenerates to that of memoryless channels, as expected. \\
Next, specific special cases are analyzed to gain additional insight on the dependence of capacity on $|C|<1$ and $|C|\geq 1$.\\
{\it (a) Suppose $Q=R=1$.}  Then 
\bea
F=\sqrt{C^4+4}, \hso s=\frac{1}{2(\kappa +  K_{V})}, \hso K_{Z}^*=\frac{2\, \kappa-K_{V}\,  \sqrt{C^4 + 4}  - C^2\, K_{V}}{\sqrt{C^4 + 4} + C^2 + 2}.
\eea
The feedback capacity is given by
\bea
{C}_{A^\infty \rar B^\infty}^{FB, B.1}  (\kappa)= \left\{ \begin{array}{llll}  0 & \mbox{if}  & \kappa \in [0,\kappa_{min})   \\
 \frac{1}{2}\mathrm{log}\!\left(\frac{2(\kappa +  K_{V})}{K_{V}\, \left(\sqrt{C^4 + 4} + C^2 + 2\right)}\right)
 & \mbox{if}   &  \kappa \in [\kappa_{min}, \infty) 
\end{array} \right.  \label{dual_CAP1}
\eea
where $
\kappa_{min}=\frac{K_{V}\, \left(\sqrt{C^4 + 4} + C^2\right)}{2}.$ Clearly, if $C=0$ then $\kappa_{min}=K_V$ and this is attributed to the fact that, the power transfer of the channel output process is reflected in the average power constraint, i.e., $Q=1$.  
\begin{figure}
  \centering
    \includegraphics[scale=0.7 ]{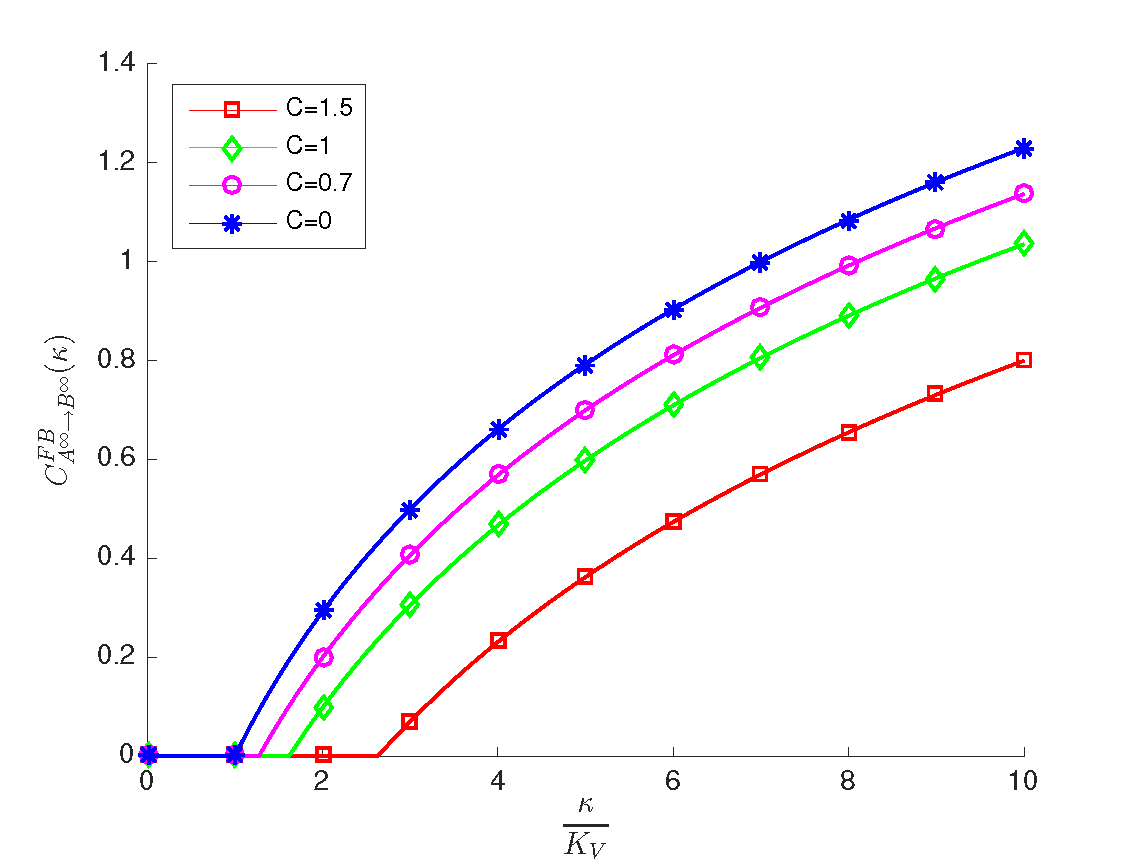}
      \caption{Feedback capacity for the scalar Linear-Gaussian Channel Model ($Q=1, R=1, D=1$).}      \label{graph_cap}
\end{figure}

The feedback capacity for  $D=Q=R=1$ is illustrated in Fig.\ref{graph_cap}, for stable and unstable  values of the parameter $C$.  It illustrates that there is a minimum value $\kappa_{min}>0$, because the transmission cost function includes the power of the channel output process, and because of this, part of the power $\kappa$ is transfer to the channel output.   \\
\begin{figure}
  \centering
    \includegraphics[scale=0.7 ]{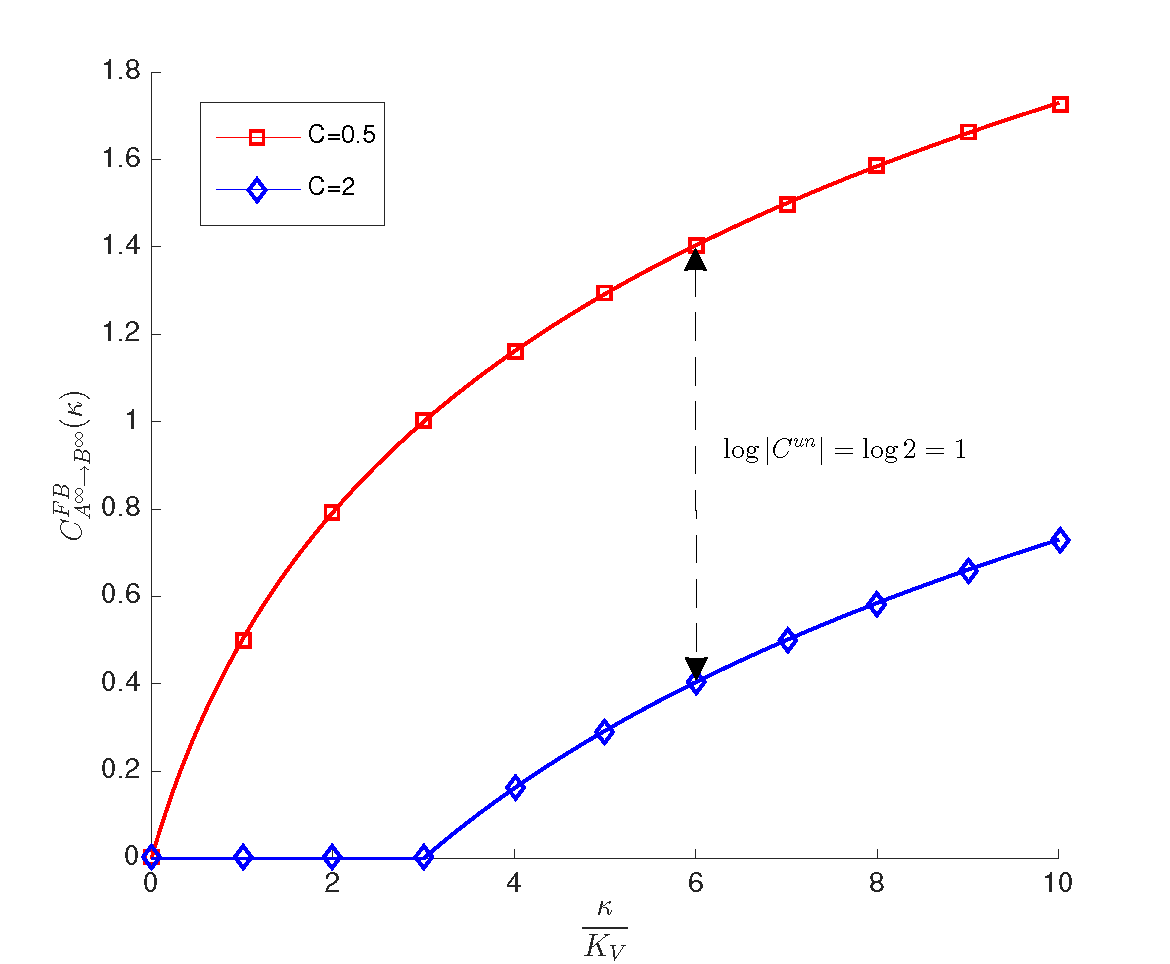}
      \caption{Feedback capacity for the scalar Linear-Gaussian Channel Model ($Q=0, R=1, D=1$), where $C^{un}$ denotes the value of $C$ of the unstable channel (C=2).}      \label{comp}
\end{figure}
{\it (b) Suppose $D=R=1, Q=0$.} Then the cost constraint is independent of past  channel output symbols, and $F=C^2-1$ which yields 
\bea
P= \left\{ \begin{array}{llll}  0 & \mbox{if}  & C|< 1   \\
 C^2-1
 & \mbox{if}   &  |C|\geq 1. 
\end{array} \right.  \label{P_sc}
\eea
 The optimal strategy which achieves feedback capacity ${C}_{A^\infty \rar B^\infty}^{FB, B.1}(\kappa)$ is given by
\bea
\Big(\Gamma^*, K_Z^*\Big)= \left\{ \begin{array}{llll} (0, \kappa), & \kappa \in [0, \infty) & \mbox{if}\hso  |C|< 1 \\ \\
\Big(-\frac{C^2-1}{C},\frac{\kappa + K_V (1-C^2)}{C^2 }\Big), & \kappa \in [\kappa_{min}, \infty), \hso \kappa_{min} \tri   {(C^2-1)K_V} & \mbox{if} \hso |C|\geq 1 \\ \\
\Big(-\frac{C^2-1}{C},0\Big), & \kappa \in [0,\kappa_{min}], \hso  & \mbox{if} \hso |C|\geq 1.
 \end{array} \right. \label{dual_CAP_CI}
\eea
Let ${C}_{A^\infty \rar B^\infty}^{FB,Stable}  (\kappa)$ denote the feedback capacity if the channel is stable, i.e.,  $|C|< 1 $ and ${C}_{A^\infty \rar B^\infty}^{FB,Unstable}(\kappa)$ denote the feedback capacity if the channel is unstable, i.e., $|C|\geq 1$. Then,  the corresponding feedback capacity is given by the following expressions.\\

For $|C|< 1 $:
\begin{align}
{C}_{A^\infty \rar B^\infty}^{FB, B.1}  (\kappa)\tri {C}_{A^\infty \rar B^\infty}^{FB,Stable}  (\kappa)=\frac{1}{2} \log\left(1+ \frac{\kappa }{K_{V}}\right), \hso \kappa \in [0, \infty). \label{dual_CAP2_st}
\end{align}
For $|C|\geq 1 $:
\begin{align}
{C}_{A^\infty \rar B^\infty}^{FB, B.1}  (\kappa)\tri {C}_{A^\infty \rar B^\infty}^{FB,Unstable}(\kappa)= \left\{ \begin{array}{lll}  
\frac{1}{2} \log\left(1+ \frac{\kappa }{K_{V}}\right) -\log{\mid}C{\mid}& \mbox{if}  &  \kappa \in [\kappa_{min}, \infty) \\ \\
 0 & \mbox{if}   &  \kappa \in [0, \kappa_{min}]. 
\end{array} \right.  \label{dual_CAP2_un}
\end{align}
 Then it is clear from \eqref{dual_CAP2_st} and \eqref{dual_CAP2_un}, that 
\bea
{C}_{A^\infty \rar B^\infty}^{FB,Unstable}  (\kappa)={C}_{A^\infty \rar B^\infty}^{FB,Stable}  (\kappa)-\log{\mid}C{\mid}, \hso  \hso \kappa \in [\kappa_{min}, \infty).
\eea
Therefore, the rate loss due to the instability of the channel is given by
\bea
\mbox{Rate Loss of Unstable Channels} \tri {C}_{A^\infty \rar B^\infty}^{FB,Stable}  (\kappa)-{C}_{A^\infty \rar B^\infty}^{FB,Unstable}(\kappa)= \left\{ \begin{array}{llll} \frac{1}{2} \log\left(1+ \frac{\kappa }{K_{V}}\right), & \kappa \in [0,\kappa_{min}]   \\ \\
\log{\mid}C{\mid}, & \kappa \in [\kappa_{min}, \infty).
 \end{array} \right. \label{inst}
\eea
The feedback capacity of a stable channel ($C=0.5$) and an unstable channel ($C=2$), is depicted in Fig. \ref{comp}. The dotted arrow denotes the rate loss due to the instability of the channel with parameter $C=2$, which is equal to $\log{2}=1$ bit. It is worth noting that for unstable channels, the feedback capacity is zero, unless the power $\kappa$  exceeds the critical level $\kappa_{min}$. This the minimum power required to stabilize the channel. Beyond this threshold all the remaining power ($\kappa - \kappa_{min}$) is allocated to information transfer.\\ On the other hand, if the channel is stable, i.e., $|C|<1$, since $Q=0$, there is no  emphasis on power transfer of the channel output process, and feedback capacity degenerates to the capacity of memoryless additive Gaussian noise channel, i.e., feedback does not increase capacity compared to that of memoryless channels, as verified from  (\ref{dual_CAP2_st}). This is, however, fundamentally different from the case $|C|<1$ and $Q>0$ discussed in (a).

\section{Conclusion}
Stochastic optimal control theory and a variational equality of directed information are applied, to develop a methodology to identify the  information structures  of optimal channel input conditional distributions, which maximize directed information, for certain classes of channel conditional distributions and transmission cost constraints. The subsets of the maximizing distributions are characterized by conditional independence.

One of the main theorems of this paper states that, for any channel conditional distribution with  finite memory on past channel outputs, subject to  any average transmission cost constraint corresponding to a specific transmission cost function, the information structure of the optimal channel input conditional distribution, which maximizes directed information, is determined by the maximum of the memory of the channel distribution and the functional dependence of the transmission cost function on past channel outputs. 
This theorem provides, for the first time, a direct analogy, in terms of the conditional independence properties of maximizing distributions,  between the characterization of feedback capacity of channels with memory, and Shannon's two-letter characterization of capacity of memoryless channels.  

Whether a similar method, based on stochastic optimal control theory and variational equalities of mutual and directed information,   can be developed for extremum problems of capacity of channels with memory and without feedback, and for general extremum problems of network information theory,  remains, however to be seen.

\bibliographystyle{IEEEtran}
\bibliography{Bibliography_capacity}

\end{document}